\documentclass[11pt]{article}
\usepackage{times}
\usepackage{hyperref}
\hypersetup{colorlinks,citecolor=blue}
\usepackage[dvips, paper=letterpaper, top=1in, bottom=.75in, left=1in, right=1in, nohead, includefoot, footskip=.25in]{geometry}

\usepackage{color}
\usepackage{amsmath,amssymb,amsthm,amsfonts,latexsym,bbm,xspace,graphicx,float,mathtools,dsfont,bm}
\usepackage{mathrsfs}
\usepackage{mathtools}

\usepackage{comment}
\usepackage{tikz}
\usepackage{algorithm}
\usepackage{algorithmic}
\usepackage{verbatim}
\usepackage{cleveref}
\usepackage{enumitem}

\usepackage[export]{adjustbox}
\usepackage{subcaption}

\newtheorem{theorem}{Theorem}
\newtheorem{corollary}{Corollary}
\newtheorem{lemma}{Lemma}

\newtheorem{definition}{Definition}

\newtheorem{example}{Example}

\newcommand{\fb}{\textsc{FB-GFT}}
\newcommand{\opt}{\textsc{SB-GFT}}
\newcommand{\fbsd}{\textsc{FB-GFT}^\textsc{SD}}
\newcommand{\optsd}{\textsc{SB-GFT}^\textsc{SD}}

\newcommand{\opts}{\textsc{OPT-S}}
\newcommand{\optb}{\textsc{OPT-B}}

\newcommand{\cfpp}{constrained FPP\xspace}

\newcommand{\sapp}{\textsc{GFT}_\textsc{SAPP}}
\newcommand{\pp}{\textsc{GFT}_\textsc{FPP}}
\newcommand{\gcfpp}{\textsc{GFT}_\textsc{CFPP}}
\newcommand{\gft}{\textsc{GFT}}

\newcommand{\mingfeinote}[1]{{\color{magenta}{#1}}}

\newcommand{\kgnote}[1]{{\color{orange}{#1}}}
\newcommand{\notshow}[1]{{}}

\newcommand{\ksppabbrev}{\text{KS}}

\newcommand{\propname}{\text{exchangeable}}

\newcommand{\propnoun}{\text{exchangeability}}
\newcommand{\Msapp}{\textsc{SAPP}}

\newcommand{\rppname}{\text{Fixed Posted Price}}
\newcommand{\rppabbrev}{\text{FPP}}
\newcommand{\rpp}{\textsc{GFT}_\textsc{FPP}}
\newcommand{\bo}{\textsc{GFT}_\textsc{BO}}
\newcommand{\so}{\textsc{GFT}_\textsc{SO}}

\DeclareMathOperator{\argmax}{argmax}

\DeclareMathOperator*{\E}{\mathbb{E}}
\DeclareMathOperator{\var}{\mathrm{Var}}

\newenvironment{prevproof}[2]{\noindent {\em {Proof of {#1}~\ref{#2}:}}}{$\Box$\vskip \belowdisplayskip}

\def \MM  {\mathcal{M}}
\def \cF  {\mathcal{F}}
\def \cJ  {\mathcal{J}}
\def \cA  {\mathcal{A}}
\def \DD  {\mathcal{D}}

\def\Bs{\ensuremath{\textbf{s}}}
\def\Bb{\ensuremath{\textbf{b}}}
\def\Bv{\ensuremath{\textbf{v}}}

\def\Bq{\ensuremath{\textbf{q}}}

\newcommand{\ind}{\mathds{1}}

\newcommand*\circled[1]{\tikz[baseline=(char.base)]{
            \node[shape=circle,draw,inner sep=1pt] (char) {#1};}}

\title{On Multi-Dimensional Gains from Trade Maximization}
\author{ Yang Cai\footnote{Supported by a Sloan Foundation Research Fellowship and the NSF Award CCF-1942583 (CAREER) .} \\Yale University, USA\\yang.cai@yale.edu
 \and Kira Goldner\footnote{Supported by NSF Award DMS-1903037 and a Columbia Data Science Institute postdoctoral fellowship.}\\Columbia University, USA\\ kgoldner@cs.columbia.edu
  \and Steven Ma\\Yale University, USA\\ steven.ma@yale.edu
  \and Mingfei Zhao\\ Yale University, USA\\ mingfei.zhao@yale.edu
}
\begin{document}

\maketitle

\begin{abstract}
    We study gains from trade in multi-dimensional two-sided markets.  Specifically, we focus on a setting with $n$ heterogeneous items, where each item is owned by a different seller $i$, and there is a constrained-additive buyer with feasibility constraint $\mathcal{F}$. Multi-dimensional settings in one-sided markets, e.g. where a seller owns multiple heterogeneous items but also is the mechanism designer, are well-understood.  In addition, single-dimensional settings in two-sided markets, e.g. where a buyer and seller each seek or own a single item, are also well-understood.  Multi-dimensional two-sided markets, however, encapsulate the major challenges of both lines of work: optimizing the sale of heterogeneous items, ensuring incentive-compatibility among both sides of the market, and enforcing budget balance.  We present, to the best of our knowledge, the first worst-case approximation guarantee for gains from trade in a multi-dimensional two-sided market.

Our first result provides an $O(\log (1/r))$-approximation to the first-best gains from trade for a broad class of downward-closed feasibility constraints (such as matroid, matching, knapsack, or the intersection of these). Here $r$ is the minimum probability over all items that a buyer's value for the item exceeds the seller's cost. Our second result removes the dependence on $r$ and provides an unconditional $O(\log n)$-approximation to the second-best gains from trade. We extend both results for a general constrained-additive buyer, losing another $O(\log n)$-factor en-route. The first result is achieved using a \emph{fixed posted price mechanism}, and the analysis involves a novel application of the prophet inequality or a new concentration inequality.  Our second result follows from a stitching lemma that allows us to upper bound the second-best gains from trade by the first-best gains from trade from the ``likely to trade" items (items with trade probability at least $1/n$) and the optimal profit from selling the ``unlikely to trade" items. We can obtain an $O(\log n)$-approximation to the first term by invoking our $O(\log (1/r))$-approximation on the ``likely to trade" items. We introduce a generalization of the fixed posted price mechanism---\emph{seller adjusted posted price}---to obtain an $O(\log n)$-approximation to the optimal profit for the ``unlikely to trade" items. Unlike fixed posted price mechanisms, not all seller adjusted posted price mechanisms are incentive compatible and budget balanced. We develop a new argument based on ``allocation coupling'' to show the seller adjusted posted price mechanism used in our approximation is indeed budget balanced and incentive-compatible.

\end{abstract}
\thispagestyle{empty}
\addtocounter{page}{-1}
\newpage
\section{Introduction}\label{sec:intro}

Two-sided markets are ubiquitous in today's economy: take for example the New York Stock Exchange, online ad exchange platforms (e.g., Google’s Doubleclick, Micorosoft’s AdECN, etc.), crowdsourcing platforms, FCC's spectrum auctions, or sharing economy platforms such as Uber, Lyft, and Airbnb.  
Yet mechanism design for such two-sided markets, where both the buyer(s) and seller(s) are strategic, is known to be substantially harder than for one-sided markets, i.e. auctions where the seller designs the mechanism. The additional challenges stem from the following requirements: (1) now the allocation rule must satisfy incentive-compatibility for \emph{both} sides of the market; and (2) the buyer and seller payments must satisfy \emph{budget balance}, that is, the mechanism must not run a deficit.  The limitations of these constraints are best illustrated by the seminal impossibility result of Myerson and Satterthwaite~\cite{MyersonS83}. They show that even in the simplest possible two-sided market---bilateral trade, when one seller is selling a single item to a buyer---no Bayesian incentive compatible (BIC), individually rational (IR), and budget balanced (BB) mechanism can achieve the \emph{first-best efficiency}: the maximum efficiency achievable without any of the previous constraints. The \emph{second-best efficiency} is the maximum efficiency achievable by any BIC, IR, and BB mechanism.

  
  
Despite the additional challenges, significant progress has very recently been made in understanding single-dimensional two-sided markets~\cite{BabaioffCGZ18,BabaioffGG20,BlumrosenD16,BlumrosenM16,BrustleCWZ17,colini2017fixed}. Yet, in reality, many two-sided markets involve agents with multi-dimensional preferences. For example, a customer searching for a place to stay on Airbnb typically values a listing based on its location, number of rooms, amenities, reviews, and more. For one-sided markets, multi-dimensional mechanism design has been the core of algorithmic mechanism design in the past decade, producing a long list of impressive results. See~\cite{chawla2014bayesian,daskalakis2015multi} and the references therein for more details.  Our goal in this paper is to study efficiency maximization in multi-dimensional two-sided markets.  

There are two ways to measure efficiency in two-sided markets. One is the standard notion of welfare. The other is the gains from trade (GFT), which is the welfare of the final allocation minus the total cost of the sellers. Intuitively, the GFT captures how much more welfare the mechanism brings to the market. Clearly, the two measures are equivalent if efficiency is maximized. However, approximating the GFT is much more challenging than welfare. For example, if the buyer’s value is $10$ and the seller’s cost is $9$, not trading the item is a $9/10$-approximation to the welfare but a $0$-approximation to the GFT. Obviously, any good approximation to the GFT immediately gives a good approximation to the welfare, but the opposite direction is rarely true. 

Several results show that generalizations of posted price mechanisms can achieve a constant fraction of the first-best welfare in fairly general multi-dimensional two-sided markets~\cite{BlumrosenD16,Colini-Baldeschi16,colini2020approximately,DuttingRT14}. However, GFT maxmization in multi-dimensional settings has remained elusive. We present, to the best of our knowledge, \textbf{the first worst-case approximation guarantee for GFT in a multi-dimensional two-sided market}. We focus on a setting with $n$ heterogeneous items, where each item is owned by a different seller $i$, and there is a constrained-additive buyer with feasibility constraint $\mathcal{F}$. The Airbnb example is a special case of our setting, where the customer is a unit-demand buyer, and there are $n$ hosts, each listing a property. We further assume that the prior distributions of the buyer's valuations and sellers' costs are public knowledge and independent; the realized valuations and costs are private. 

 Recall that in one-sided markets, maximizing revenue for even a single (non-constrained) additive buyer is far more challenging than for single-dimensional buyers, both optimally and approximately~\cite{BabaioffILW14,DaskalakisDT15,HartN12,ManelliV07,Myerson81}. Maximizing GFT in two-sided markets suffers from this curse of dimensionality as well. As with revenue, single-dimensional settings can leverage an analog to Myerson's virtual value theory by using the optimal dual variables, as shown in~\cite{BrustleCWZ17}, but this does not extend to multiple dimensions. Note also that while Colini-Baldeschi et al. \cite{colini2020approximately} are able to extend an $O(1)$-approximation to welfare to a two-sided market with XOS buyers and additive sellers, their mechanism gives no guarantee for GFT.

\vspace{-.17in}
\paragraph{Our Results:} The first main result is a distribution-parameterized approximation to the \emph{first-best GFT}. 

\medskip \noindent \hspace{0.7cm}\begin{minipage}{0.92\textwidth}
\begin{enumerate}
\item[{\bf Result I:}] There is a \emph{fixed posted price mechanism} whose GFT is an $O(\frac{\log(1/r)}{\delta\eta})$-approximation to the \emph{first-best GFT} when the buyer's feasibility constraint $\cF$ is $(\delta,\eta)$-selectable (Definition~\ref{def-selectablity}), and an $O(\log(n) \cdot \log(1/r))$-approximation for a general constrained-additive buyer.
$r$ is a distributional parameter: the minimum \emph{trade probability} over all items. We define the trade probability of item $i$ as the probability that the buyer's value for $i$ exceeds the seller's cost.
\end{enumerate}
\end{minipage}

\vspace{.1in}
The notion of $(\delta,\eta)$-selectability is introduced by Feldman et al.~\cite{FeldmanSZ16} as a sufficient condition for prophet-inequality-type online algorithms to exist. Many familiar feasibility constraints such as matroid, matching, knapsack, and the compositions of each, are known to be $(\delta,\eta)$-selectable with constant $\delta$ and $\eta$~\cite{FeldmanSZ16}, so our result provides an $O(\log (1/r))$-approximation for all of these environments. See \Cref{def-selectablity} in Section~\ref{subsec:seller-surplus} for the formal definition of $(\delta,\eta)$-selectability.

Next we introduce the class of \emph{fixed posted price} mechanisms.
\vspace{-.15in}
\paragraph{Fixed Posted Price (FPP):} In a fixed posted price mechanism, there is a collection of fixed prices $\left\{(\theta_i^B,\theta_i^S)\right\}_{i\in [n]}$, where $\theta_i^B\geq \theta_i^S$ for each item $i$. Let $R$ be the set of sellers that are willing to sell their item at price $\theta_i^S$. The buyer can purchase any item $i$ in $R$ at price $\theta_i^B$. Trade only occurs when the buyer wants to buy the item and the seller is willing to sell it.

Our result is a generalization of the result by Colini-Baldeschi et al.~\cite{colini2017fixed}, where they provide the same approximation using a fixed posted price mechanism for bilateral trade. Importantly, our approximation ratio has the optimal dependence on $r$ up to a constant factor. Example~\ref{example-log(1/r)-tight} (adapted from an example by Blumrosen and Dobzinski~\cite{BlumrosenD16}) in Appendix~\ref{sec:lower bounds} shows that, for any $r >0$, there is an instance of our problem with minimum trade probability $r$ such that no fixed posted price mechanism can achieve more than a $\frac{c}{\log(1/r)}$-fraction of even the second-best GFT for some absolute constant $c$. In our fixed posted price mechanism, we allow $\theta_i^B$ to be strictly greater than $\theta_i^S$. This is crucial for our analysis, but makes the mechanism only ex-post weakly budget balanced. We leave it as an interesting open question as to whether our approximation ratio can be achieved by an ex-post strongly budget balanced fixed posted price mechanism. 

When the trade probability of each item is not too low, our first result provides a good approximation to the first-best GFT using a simple fixed posted price mechanism. However, $r$ can be arbitrarily small in the worst-case, making our approximation too large to be useful. Is it possible to produce an \emph{unconditional worst-case approximation guarantee}? We provide an affirmative answer to this question with an unconditional $O(\log n)$-approximation to the second-best GFT.

\medskip \noindent \hspace{0.7cm}\begin{minipage}{0.92\textwidth}
\begin{enumerate}
\item[{\bf Result II:}] There is a \emph{dominant strategy incentive compatible} (DSIC), ex-post IR, and BB mechanism whose GFT is at least {$\Omega(\frac{\delta\eta}{\log n})$}-fraction of the \emph{second-best GFT} when the buyer's feasibility constraint $\cF$ is $(\delta,\eta)$-selectable, and at least $\Omega(\frac{1}{\log^2 (n)})$-fraction of the second-best GFT when the buyer is general constrained-additive.
\end{enumerate}
\end{minipage}

\vspace{.1in}
As we show in Example~\ref{example-log(1/r)-tight}, no fixed posted price mechanism can provide such a guarantee. We develop two new mechanisms. The first one is a multi-dimensional extension of the ``Generalized Buyer Offering Mechanism'' by Brustle et al.~\cite{BrustleCWZ17}. We provide a full description of the mechanism in Section~\ref{sec:bounding super buyer}. The second mechanism is a generalization of the fixed posted price mechanism that we call the \emph{Seller Adjusted Posted Price Mechanism}.
\vspace{-.2in}
\paragraph{Seller Adjusted Posted Price (\Msapp):} The sellers report their costs $\Bs$. The mechanism maps the cost profile to a collection of posted prices  $\left\{\theta_i(\Bs)\right\}_{i\in [n]}$ for the buyer. The buyer can purchase at most one item, and pays price $\theta_i(\Bs)$ if she buys item $i$. An item trades  if the buyer decides to purchase that~item.

The main advantage of using a \Msapp~mechanism is that it provides the flexibility to set prices based on the sellers' costs, which allows a \Msapp~mechanism to achieve GFT that could be unboundedly higher than the GFT attainable by even the best fixed posted price mechanism (see Example~\ref{example-sapp-beat-fp}). {Example~\ref{example:sapp-necessary} in Appendix~\ref{sec:lower bounds} shows that the class of SAPP mechanisms is necessary to obtain any finite approximation ratio to the second-best: both the best FPP mechanisms and the ``Generalized Buyer Offering Mechanism''~\cite{BrustleCWZ17} have an unbounded gap compared to the second-best GFT, even in the bilateral trade setting.}

An astute reader may have already realized that the payments to the sellers are not yet defined in the \Msapp~mechanism. This is because the allocation rule of a \Msapp~mechanism is not necessarily monotone in the sellers' costs if the mappings $\{\theta_i(\cdot)\}_{i\in[n]}$ are not chosen carefully. Interestingly, we show that if the mappings $\{\theta_i(\cdot)\}_{i\in[n]}$ satisfy a strong type of monotonicity that we call \emph{bi-monotonicity} (Definition~\ref{def:bi-monotonicity}), then the allocation rule is indeed monotone in each seller's reported cost. Since the sellers are single-dimensional, we can apply Myerson's payment identity to design an incentive compatible payment rule. The final property we need to establish is budget balance, which turns out to be the major technical challenge for us. We provide more details and intuition about our solution to this challenge in the discussion of the techniques.

In Section~\ref{sec:reduction}, we draw a connection between a lower bound to our analysis and one of the major open problems in single dimensional two-sided markets.
 We prove a reduction from approximating the first-best GFT in the \emph{unit-demand} setting to bounding the gap between the first-best and second-best GFT in a related \emph{single-dimensional} setting (Theorem~\ref{thm:reduction}). If in the latter market, the gap between first-best and second-best GFT is at most $c$, then our mechanism is a $2c$-approximation to the first-best GFT in the former market.

\subsection{Our Approach and Techniques}\label{sec:techniques}

\paragraph{$\log (1/r)$-Approximation (Section~\ref{sec:log1/r}):} Our starting point is similar to Colini-Baldeschi et al.~\cite{colini2017fixed}. We first argue that the probability space of each item $i$ can be partitioned into $O(\log (1/r))$ events $\{E_{ij}\}_{j\in[\log (2/r)]}$, such that in each event $E_{ij}$, the median of the buyer's value $b_i$ for item $i$ dominates the median of the $i$-th seller's cost $s_i$. The first-best GFT is upper bounded by the sum of the contribution to GFT from each of these events. In bilateral trade, simply setting the posted price to be the median of the buyer's value is sufficient to obtain $1/2$ of the optimal GFT from $E_{ij}$ as shown by McAfee~\cite{McAfee08}. The $\log(1/r)$-approximation by Colini-Baldeschi et al.~\cite{colini2017fixed} essentially follows from this argument. 

To illustrate the added difficulty from multiple items, it suffices to consider a unit-demand buyer. Setting the posted price on each item to be the median of the buyer's value does not provide a good approximation, because the buyer will purchase the item that gives her the highest surplus, which could be very different from the item that generates the most GFT. Similar scenarios are not uncommon in \emph{multi-dimensional auction design}, and prophet inequalities~\cite{KleinbergW12,krengel1978semiamarts} have been proven to be effective in addressing similar challenges. The main barrier for applying the prophet inequality to two-sided markets is choosing the appropriate random variable as the reward for the prophet/gambler. It is not obvious how to choose a random variable that will translate to a two-sided market mechanism, and in fact, for some choices, no translation between the thresholding policy for the gambler and a two-sided market mechanism is possible.\footnote{For example, one can  choose the GFT from the $i^{\mathrm{th}}$ item  $(b_i-s_i)^+$ as the reward of the $i^{\mathrm{th}}$ round, but no fixed posted price mechanism corresponds to the policy that only accepts items whose GFT is above a certain threshold. Indeed, no BIC, IR, and BB mechanism can implement a thresholding policy with threshold $0$ due to the impossibility result by Myerson and Satterthwaite~\cite{MyersonS83}.} 
Our key insight is to replace event $E_{ij}$ with a related but different event $\overline{E}_{ij}$ where there is a fixed number $\theta_{ij}$ such that $s_i$ and $b_i$ are always separated by $\theta_{ij}$ ($s_i\leq \theta_{ij}\leq b_i$). We further  show that the GFT contribution from event $\overline{E}_{ij}$ is at least half of the GFT contribution from $E_{ij}$. Importantly, the GFT contributed by item $i$ in event $\overline{E}_{ij}$: $(b_i-s_i)^+~\footnote{$x^+=\max\{x,0\}$.} =(b_i-\theta_{ij})^+\cdot \ind[s_i\leq \theta_{ij}]+(\theta_{ij}-s_i)^+\cdot \ind[b_i\geq \theta_{ij}]$. Note that if we replace $\overline{E}_{ij}$ with $E_{ij}$, the LHS can exceed the RHS when $\theta_{ij}>b_i>s_i$. The decomposition of $(b_i - s_i)^+$ using $\theta_{ij}$ is critical for us to apply the prophet inequality. We can now choose the reward for the gambler to be $v_i=(\theta_{ij}-s_i)^+\cdot \ind[b_i\geq \theta_{ij}]$, and the thresholding policy with a threshold $T$ can be implemented with a posted price mechanism where the price for the buyer is $\theta_{ij}$ and the price for the seller is $\theta_{ij}-T$.\footnote{A similar fixed posted price mechanism can take care of $(b_i-\theta_{ij})^+\cdot \ind[s_i\leq \theta_{ij}]$.}

When the buyer's feasibility constraint is general downward-closed, the only known prophet inequalities are due to Rubinstein~\cite{Rubinstein16} and are $O(\log n)$-competitive. Unfortunately, the prophet inequalities in~\cite{Rubinstein16} are highly adaptive, and thus cannot translate into prices for a single buyer. Further, an almost matching lower bound of $O(\log n/\log\log n)$ is shown by Babaioff et al.~\cite{babaioff2007matroids}, precluding much possible improvement for this approach. Instead, we use a constrained fixed posted price mechanism that forces the buyer to buy at least $h$ items (at their posted prices) if she wants to buy any; otherwise, she must leave with nothing.  We divide the same variables $v_i$ into $O(\log n)$ buckets based on their contribution to seller surplus. Within each bucket $k$, all variables $v_i$ lie in $[L_k,2L_k]$ for some $L_k$. We prove a concentration inequality for the \emph{maximum size of a feasible and affordable set}.
It guarantees that with constant probability, the buyer will be willing to purchase at least $h$ items (for an appropriate choice of $h$), generating sufficient GFT.

\vspace{-.15 in}
\paragraph{Benchmark of the Second-Best GFT (Section~\ref{sec:UB of SB}):} As our goal is to obtain a benchmark of the second-best GFT that is unconditional, the benchmark from the previous (distribution-parameterized) result cannot be used here. We derive a novel benchmark in two steps. Step (i): we create two imaginary one-sided markets: the \emph{super seller auction} and the \emph{super buyer procurement auction}. 
We show that the second-best GFT of the two-sided market is upper bounded by the \emph{optimal profit} from the super seller auction and the \emph{optimal buyer utility} from the super buyer procurement auction. Step (ii): we provide an extension of the marginal mechanism lemma~\cite{HartN12,CaiH13} to the optimal profit. We show that the optimal profit for selling all items in $[n]$ is upper bounded by the \emph{first-best} GFT from items in $T$ and the optimal profit for selling items in $[n]\backslash T$, where $T$ is an arbitrary subset of $[n]$. Our key insight is to choose $T$ to be the ``likely to trade'' items, which are the ones with trade probability at least $1/n$, and apply the marginal mechanism lemma. This partition allows us to use our first result to provide an $O(\log n)$-approximation to the first-best GFT of the ``likely to trade'' items using a fixed posted price mechanism. Moreover, we prove that the optimal buyer utility from the super buyer procurement auction is upper bounded by the GFT of an extension of the ``generalized buyer offering mechanism'' \cite{BrustleCWZ17}. Finally, we provide an $O(\log n)$-approximation to the optimal profit for selling the ``unlikely to trade'' items using a \Msapp~mechanism. Note that the approximation crucially relies on the fact that in expectation at most one item can trade among the ``unlikely to trade'' items. 

\vspace{-.15in}
\paragraph{Budget Balance of Seller Adjusted Posted Price Mechanisms (Section~\ref{sec:SAPP}):} As mentioned earlier, we restrict our attention to  bi-monotonic mappings from cost profiles to buyer prices $\{\theta_i(\cdot)\}_{i\in[n]}$ to guarantee incentive-compatibility. However, budget balance does not follow from bi-monotonic mappings. We extend the definition of bi-monotonicity to allocation rules and show that all bi-monotonic allocation rules can be transformed into a DSIC, IR, and BB \Msapp~mechanism. In our proof of the budget balance property, we identify an auxiliary allocation rule $q$, which may not be implementable by a BB mechanism. We then show that the allocation rule of our \Msapp~mechanism is ``coupled'' with $q$. In particular, our allocation probability is always between $q/4$ and $q/2$. The upper bound $q/2$ allows us to upper bound the payment to the seller, and the lower bound $q/4$ allows us to lower bound the payment we collect from the buyer. Surprisingly, we can prove that the upper bound of the payment to the seller is no more than the lower bound of the buyer's payment. We suspect this type of allocation coupling argument may also be useful in other problems.
\subsection{Related Work}\label{subsec:relwork}

\paragraph{Gains from Trade.} The main related works are on worst-case GFT approximation.  Blumrosen and Mizrahi \cite{BlumrosenM16} guarantee an $e$-approximation to the first-best GFT in the setting of bilateral trade---one buyer, one seller, one item---when the buyer's distribution satisfies the monotone hazard rate condition.  Brustle et al. \cite{BrustleCWZ17} study the more general double auction setting: there are many buyers and sellers, but the goods are identical, and each buyer and seller is unit-demand or unit-supply respectively. In addition, they allow any downward-closed feasibility constraint over the buyer-seller pairs that can trade simultaneously. They use the better of a ``seller-offering'' or ``buyer-offering'' mechanism to achieve a $2$-approximation to the second-best GFT, for general buyers' and sellers' distributions.  Colini-Baldeschi et al.~\cite{colini2017fixed} show that a simple fixed price mechanism obtains an $O(\frac{1}{r})$-approximation to GFT in the bilateral trade and double auction settings, but a more careful setting of the fixed price gives an $O(\log \frac{1}{r})$-approximation for bilateral trade.  Our setting is the first multi-dimensional setting with a worst-case approximation guarantee, and we match the $O(\log \frac{1}{r})$-approximation of \cite{colini2017fixed} while providing an unconditional $O(\log n)$-approximation.   

Other lines of work provide (1) \emph{asymptotic} approximation guarantees in the number of items optimally traded for settings as general as multi-unit buyers and sellers and $k$ types of items \cite{McAfee92,SegalHaleviHA18a,SegalHaleviHA18b}, (2) dual asymptotic and worst-case guarantees for double auctions and matching markets \cite{BabaioffCGZ18}, and (3) Bulow-Klemperer-style guarantees of the number of additional buyers (or sellers) needed in double auctions in order for the GFT of the new setting running a simple mechanism to beat the first-best GFT of the original setting \cite{BabaioffGG20}.  

\vspace{-.15in}
\paragraph{Multi-Dimensional Revenue.} In the setting where one seller owns all of the items, has no cost for the items, and is the mechanism designer, much more is known. However, even when selling to a single additive bidder (e.g. with no feasibility constraints), posted prices can achieve at best an $O(\log n)$-approximation \cite{HartN12,LiY13}.  In order to obtain a constant-factor approximation for an additive buyer, Babaioff et al. \cite{BabaioffILW14} use the better of posted prices and posting a price on the grand bundle, and a variation works for a single subadditive (which includes constrained-additive) buyer as well \cite{RubinsteinW15}.  However, in a two-sided market where items are owned by separate sellers, it is not clear how to implement bundling in an incentive-compatible way.  The mechanisms used to obtain constant-approximations for multiple constrained-additive, XOS, or subadditive buyers \cite{ChawlaM16,CaiZ17} are only more complex.

\vspace{-.15in}
\paragraph{Welfare in Two-Sided Markets.} 
Colini-Baldeschi et al.~\cite{Colini-Baldeschi16} consider welfare maximization in the double auction setting with matroid feasibility constraints.  They generalize sequential posted price mechanisms (SPMs) to the two-sided market setting, guaranteeing a constant-factor approximation to welfare.  The mechanism posts prices for each buyer-seller combination (not just for each item), visits the buyers and sellers simultaneously in the given order, and advances on either side when the price is rejected.  Trade occurs when both sides accept the trade.  Follow up work of Colini-Baldeschi et al.~\cite{colini2020approximately} generalizes the idea to the setting where buyers are XOS and sellers are additive.  Here, there is a posted price for each item, but only ``high welfare'' items are considered.  The buyers visit and pick out the bundles they want among the high welfare items.  Then, sellers are given the opportunity to sell their entire bundle of items demanded by the buyers (but not any subset), and they are skipped with some probability.  Like the previous work, this mechanism is ex-post IR, DSIC, and strongly BB (buyer payments equal seller payments). As only ``high welfare'' items are considered, it is possible for their mechanism to not trade any item when the minimum trade probability $r$ is a constant.

Blumrosen and Dobzinski~\cite{BlumrosenD16} give an IR, BIC and strongly BB mechanism for bilateral trade that obtains in expectation a constant-fraction of the optimal welfare.  D\"{u}tting et al.~\cite{DuttingRT14} study welfare maximization in the prior-free setting and present DSIC, IR, and weakly BB (buyer payments exceed seller payments) mechanisms for double auctions with feasibility constraints on either side.

\vspace{-.15in}

\section{Preliminaries}\label{sec:prelim}

\vspace{-.05in}
\paragraph{Two-sided Markets.} We focus on two-sided markets between a single buyer and $n$ unit-supply sellers. Every seller $i$ sells a heterogeneous item. For simplicity we denote the item sold by seller $i$ as item $i$. Each seller $i$ has cost $s_i$ for producing item $i$, where $s_i$ is drawn independently from distribution $\DD_i^S$. The buyer has value $b_i$ for every item $i$ where $b_i$ is drawn independently from distribution $\DD_i^B$.  $\DD_i^S$ and $\DD_i^B$ are public knowledge. Let $\DD^B=\times_{i=1}^n \DD_i^B$ be the distribution of the buyer's value profile and $\DD^S=\times_{i=1}^n \DD_i^S$ be the distribution of the cost profile for all sellers. Let $\Bb=(b_1,...,b_n)$ and $\Bs=(s_1,...,s_n)$ denote the value (or cost) profile for the buyer and all sellers. For notational convenience, for every $i$ we denote $b_{-i}$ (or $s_{-i}$) the value (or cost) profile without item $i$. For every $i$, $F_i,f_i$ (or $G_i,g_i$) denote the cumulative distribution function and density function of $\DD_i^B$ (or $\DD_i^S$). Throughout the paper we assume that all distributions are continuous over their support, and thus the inverse cumulative function $F_i^{-1}$ and $G_i^{-1}$ exist.\footnote{Any discrete distribution can be made continuous by replacing each point mass $a$ with a uniform distribution on $[a-\epsilon,a+\epsilon]$, for arbitrarily small $\epsilon$. Thus our result applies to discrete distributions as well by losing arbitrarily small GFT.}

Throughout this paper, we assume that the buyer has a constrained-additive valuation over the items, which means that the buyer is additive over the items, but is only allowed to take a feasible set of items with respect to a downward-closed\footnote{$\cF\subseteq 2^{[n]}$ is downward-closed if for every $S\in \cF$, we have $S'\in \cF, \forall S'\subseteq S$.} constraint $\cF\subseteq 2^{[n]}$.
Formally, for every $\Bb$ and $S\subseteq [n]$, the buyer's value for a set of items $S$ is: $v(\Bb,S)=\max_{T\in \cF,T\subseteq S}\sum_{i\in T}b_i$. 

\vspace{-.15in}
\paragraph{Mechanism and Constraints.} Any mechanism in the two-sided market defined above is specified by the tuple $(x,p^B,p^S)$ where $x$ is the allocation rule of the mechanism and $p^B,p^S$ are the payment rules. For every profile $(\Bb,\Bs)$ and every $i$, $x_i(\Bb,\Bs)$ is the probability that the buyer trades with seller $i$ under profile $(\Bb,\Bs)$. $p^B(\Bb,\Bs)$ is the payment from the buyer and $p_i^S(\Bb,\Bs)$ is the gains for (or payment to) seller $i$. All agents in the market have linear utility functions.\footnote{Without loss of generality we can assume that the mechanism will only allow the buyer to trade with a (possibly randomized) set $S$ of sellers where $S\in \cF$. For any trading set $T$, let $S^*$ denote the utility-maximizing feasible subset, $S^* = \argmax_{S\in \cF,S\subseteq T}\sum_{i\in S}b_i$. If we only allow the buyer to trade with the sellers in $S^*$ instead of all sellers in $T$, the gains from trade of the mechanism will not decrease.\label{footnote:tie-breaking}}
We call the mechanism ex-ante Strongly Budget Balanced (SBB) or Weakly Budget Balanced (WBB) if the buyer's expected payment equals, or is greater than, the sum of all sellers' expected gains, respectively, over the randomness of the mechanism and the profiles of all agents. We call the mechanism ex-post SBB (or ex-post WBB) if this property holds for every agent's profile. The definition of incentive compatibility and individual rationality are as follows. 

\begin{itemize}
    \item BIC: For every agent, reporting her true value (or cost) maximizes her expected utility over the profiles of other agents.
    \item DSIC: For every agent, reporting her true value (or cost) maximizes her expected utility, no matter what other agents report.
    \item (Bayesian) IR: For every agent, reporting her true value (or cost) derives non-negative utility over the profiles of other agents.
    \item Ex-post IR: For every agent, reporting her true value (or cost) derives non-negative utility, no matter what other agents report.
\end{itemize}


\notshow{

\begin{itemize}
    \item \textbf{Ex-post Strong Budget Balance (ex-post SBB)}: For every profile $(\Bb,\Bs)$, the buyer's expected payment is equal to the sum of all sellers' expected gains, over the randomness of mechanism.
    \item \textbf{Ex-post Weak Budget Balance (ex-post WBB)}: For every profile $(\Bb,\Bs)$, the buyer's expected payment is at least the sum of all sellers' expected gains, over the randomness of mechanism.
    \item \textbf{Ex-ante Strong Budget Balance (ex-ante SBB)}: The buyer's expected payment equals to the sum of all sellers' expected gains, over the randomness of mechanism and the profile of all agents.  
    \item \textbf{Ex-ante Weak Budget Balance (ex-ante WBB)}: The buyer's expected payment is at least the sum of all sellers' expected gains, over the randomness of mechanism and the profile of all agents. 
\end{itemize}
}

\vspace{-.15in}

\paragraph{Gains from Trade.}We aim to maximize the Gains from Trade (GFT), i.e. the gains of social welfare induced by the mechanism. Formally, given a mechanism $M=(x, p^B, p^S)$, the expected GFT of $M$ is 
$$\textstyle\gft(M)=\E_{\Bb\sim \DD^B,\Bs\sim \DD^S} \left[\sum_{i=1}^nx_i(\Bb,\Bs)\cdot(b_i-s_i)\right].$$

We use $\opt$ to denote the optimal GFT attainable by any BIC, IR, ex-ante WBB mechanism (also known as the ``second-best'' mechanism). On the other hand, let $\fb$ denote the maximum possible gains of social welfare among all feasible allocations (known as the ``first-best''). 
Formally
$$\textstyle\fb=\E_{\Bb\sim \DD^B,\Bs\sim \DD^S} \left[\max_{S\in \cF}\sum_{i\in S}(b_i-s_i)\right].$$



In Section~\ref{sec:log1/r}, the distribution-parameterized approximation uses the parameter $r$, the minimum probability over all items $i$ that the buyer's value for item $i$ is at least seller $i$'s cost. Formally, for every item $i\in [n]$, let $r_i=\Pr_{b_i\sim \DD_i^B, s_i\sim \DD_i^S}[b_i\geq s_i]$ denote the probability that the buyer's value for item $i$ exceeds seller $i$'s cost. Without loss of generality, assume that $r_i>0$ for all $i\in [n]$.\footnote{If $r_i=0$ the mechanism should never trade between the buyer and seller $i$, and so it can remove seller $i$ from the market. This will not decrease the GFT of the mechanism as $b_i<s_i$ with probability 1.} Let $r=\min_{i\in [n]}r_i>0$. 


\section{A Distribution-Parameterized Approximation}\label{sec:log1/r}


In this section, we present an $O(\frac{\log(1/r)}{\delta\eta})$-approximation to $\fb$ when the buyer's feasibility constraint $\cF$ is $(\delta,\eta)$-selectable, and an $O(\log(n)\cdot\log(\frac{1}{r}))$-approximation  for a general constrained-additive buyer. 
In Section~\ref{subsec:breaking terms}, we show that $\fb$ can be bounded by the sum of four separate terms. 
In Section~\ref{sec:buyer surplus} we show that two of the terms (``buyer surplus'') are relatively easy to bound using fixed posted price (FPP) mechanisms with the same prices posted on both sides. 
In Section~\ref{subsec:seller-surplus-UD}, we consider the special case of a unit-demand buyer and bound the other two terms (``seller surplus'') using FPP mechanisms combined with the prophet inequality. In Section~\ref{subsec:seller-surplus}, we introduce the concept of selectability~\cite{FeldmanSZ16} and bound the seller surplus for any selectable feasibility constraint by using a 
\emph{constrained}
FPP mechanism. In Section~\ref{subsec:proof-downward-close}, we present our result for a general constrained-additive buyer. 

\subsection{Upper Bound of $\fb$}\label{subsec:breaking terms}

For every $i$, let $\overline{F_i} = 1 - F_i$ denote the complementary CDF of $b_i$. 
Let $x_i$ and $y_i$ be the $\frac{r_i}{2}$-quantile of the buyer's and seller's distribution for item $i$, respectively. Formally, $x_i=\overline{F_i}^{-1}(\frac{r_i}{2}), y_i=G_i^{-1}(\frac{r_i}{2})$. We first prove that $x_i\geq y_i$.

\begin{lemma}\label{lem:x_i-vs-y_i}
For every $i\in [n]$, $x_i\geq y_i$.
\end{lemma}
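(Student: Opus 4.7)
The plan is to prove $x_i \geq y_i$ via a one-line union bound on the trade event, using the threshold $\theta = y_i$ as the pivot. The key observation is the pointwise set inclusion
\[
\{b_i \geq s_i\} \;\subseteq\; \{b_i \geq \theta\} \,\cup\, \{s_i \leq \theta\}
\]
valid for any real $\theta$: if $b_i \geq s_i$ and $b_i < \theta$, then $s_i \leq b_i < \theta$, so we land in the second set. Taking probabilities and using the union bound gives $r_i = \Pr[b_i \geq s_i] \leq \Pr[b_i \geq \theta] + \Pr[s_i \leq \theta]$ for every $\theta$.

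Now I would instantiate this at $\theta = y_i$. By definition of $y_i = G_i^{-1}(r_i/2)$ together with the continuity of $\DD_i^S$, we have $\Pr[s_i \leq y_i] = r_i/2$ exactly. Plugging in yields $\Pr[b_i \geq y_i] \geq r_i - r_i/2 = r_i/2 = \overline{F_i}(x_i)$, where the last equality uses continuity of $\DD_i^B$ and the definition $x_i = \overline{F_i}^{-1}(r_i/2)$. Since $\overline{F_i}$ is nonincreasing, $\overline{F_i}(y_i) \geq \overline{F_i}(x_i)$ implies $y_i \leq x_i$, which is the claim.

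There is essentially no obstacle here beyond spotting the correct event decomposition; the continuity assumption on the distributions (invoked once to turn the two quantile definitions into exact probability identities) removes any boundary issues. I could alternatively run the symmetric argument with $\theta = x_i$, which would give $\Pr[s_i \leq x_i] \geq r_i/2$ and hence $x_i \geq y_i$ directly; either pivot works equally well.
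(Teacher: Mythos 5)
Your proof is correct and rests on the same core observation as the paper's: the event $\{b_i<\theta\wedge s_i>\theta\}$ precludes trade, so comparing the two $\tfrac{r_i}{2}$-quantiles against a single threshold forces $x_i\geq y_i$. The only (cosmetic) difference is that you take the union bound on the trade event and conclude directly, whereas the paper bounds the complementary no-trade event using independence of $b_i$ and $s_i$ and argues by contradiction; both steps are one line, and your final inference from $\overline{F_i}(y_i)\geq \overline{F_i}(x_i)$ to $y_i\leq x_i$ is justified by the paper's standing assumption that $\overline{F_i}^{-1}$ exists (i.e., $\overline{F_i}$ is strictly decreasing on its support), not by mere monotonicity.
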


\begin{proof}
Note that for every $i\in [n]$, $b_i<x_i\wedge s_i>x_i$ implies that $b_i<s_i$.  We have
\begin{align*}
    1-r_i&=\Pr_{b_i\sim \DD_i^B,s_i\sim \DD_i^S}[b_i<s_i]
\geq \Pr_{b_i,s_i}[b_i<x_i\wedge s_i>x_i]=(1-\frac{r_i}{2})\cdot (1-\Pr_{s_i}[s_i\leq x_i]). 
\end{align*}

Suppose $x_i<y_i$. Then 
$(1-\frac{r_i}{2})\cdot (1-\Pr_{s_i}[s_i\leq x_i])\geq (1-\frac{r_i}{2})^2>1-r_i.$
This is a contradiction. Thus $x_i\geq y_i$.
\end{proof}

In the following upper bound, we will separate the probability space for each item $i$ into $2\lceil\log(2/r)\rceil$ events, 
and then divide the GFT into buyer surplus and seller surplus terms according to the cutoff for each event.
For every $\Bb,\Bs$, define {the feasible set that maximizes the GFT} as $S^*(\Bb,\Bs)=\argmax_{S\in \cF}\sum_{k\in S}(b_k-s_k)$, and break ties arbitrarily. Observe the following upper bound for the first-best GFT:
\begin{align*}
        \fb=\E_{\Bb,\Bs}[\max_{S\in \cF}\sum_{i\in S}(b_i-s_i)^+]
        &\leq \textstyle\E_{\Bb,\Bs}\left[\sum_i(b_i-s_i)\cdot\ind[i\in S^*(\Bb,\Bs)]\cdot \ind[b_i\geq s_i\wedge s_i< x_i]\right]~~~~~\circled{1}\\
        &+\textstyle\E_{\Bb,\Bs}\left[\sum_i(b_i-s_i)\cdot\ind[i\in S^*(\Bb,\Bs)]\cdot \ind[b_i\geq s_i\geq y_i]\right]~~~~~\circled{2},
\end{align*}
where the inequality holds because $x_i\geq y_i$ for all $i$. We first consider term \circled{1}. For every $i\in [n], j\in 1,2,\ldots,\lceil \log(\frac{2}{r})\rceil$, let $\theta_{ij}=\overline{F_i}^{-1}(\frac{1}{2^{j}})$. 
Let $E_{ij}$ be the event that 
$\overline{F_i}^{-1}(\frac{1}{2^{j-1}})\leq s_i\leq \overline{F_i}^{-1}(\frac{1}{2^{j}})\wedge b_i\geq \overline{F_i}^{-1}(\frac{1}{2^{j-1}})$. Then we have $~~~\textstyle\circled{1}\leq \textstyle\sum\limits_{j=1}^{\lceil \log(\frac{2}{r})\rceil}\E_{\Bb,\Bs}\left[\sum_i(b_i-s_i)^+\cdot\ind[i\in S^*(\Bb,\Bs)\wedge E_{ij}] \right]$.

As discussed in Section~\ref{sec:techniques}, in order to bound the benchmark with fixed posted price mechanisms, we will consider a more restrictive event $\overline{E}_{ij}$ and show that the GFT contribution from event $\overline{E}_{ij}$ is at least half of the GFT contribution from $E_{ij}$. Both events are depicted in Figure~\ref{fig:events}. 

\begin{figure}[h!] 
	\centering
	\includegraphics[scale=.5]{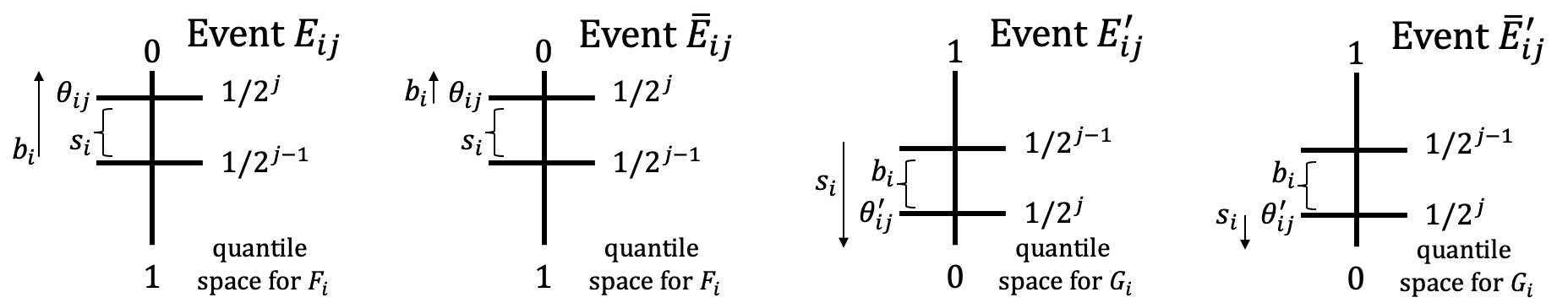} 
	\caption{Events $E_{ij}$ and $E'_{ij}$ used in the upper bound of GFT, and the restricted events $\overline{E}_{ij}$ and $\overline{E}'_{ij}$.}
	\label{fig:events} 
\end{figure}

\begin{lemma}\label{lem:breaking-terms-1}
For every $i,j$, let $\overline{E}_{ij}$ be the event that $\overline{F_i}^{-1}(\frac{1}{2^{j-1}})\leq s_i\leq \overline{F_i}^{-1}(\frac{1}{2^{j}})\wedge b_i\geq \overline{F_i}^{-1}(\frac{1}{2^{j}})$. Then the following inequality holds for every $j=1,\ldots,\lceil\log(2/r)\rceil$:
$$\textstyle \E_{\Bb,\Bs}\left[\sum_i(b_i-s_i)^+\cdot\ind[i\in S^*(\Bb,\Bs)\wedge E_{ij}] \right]\leq 2\cdot \E_{\Bb,\Bs}\left[\sum_i(b_i-s_i)^+\cdot\ind[i\in S^*(\Bb,\Bs)\wedge \overline{E}_{ij}] \right].
$$
\begin{align*}
\text{Moreover,} \quad \quad \circled{1} &\leq 2\cdot\sum\limits_{j=1}^{\lceil \log(2/r)\rceil}\E_{\Bb,\Bs}\left[\max_{S\in \cF}\sum_{i\in S}\left\{(b_i-\theta_{ij})^+\cdot\ind[s_i\leq \theta_{ij}] \right\}\right] \quad \circled{3} \\
&+ 2\cdot\sum\limits_{j=1}^{\lceil \log(2/r)\rceil}\E_{\Bb,\Bs}\left[\max_{S\in \cF}\sum_{i\in S}\left\{(\theta_{ij}-s_i)^+\cdot\ind[b_i\geq \theta_{ij}]\right\} \right] \quad \circled{4}.
\end{align*}
\notshow{
$$
\circled{1} \leq
\underbrace{2\cdot\sum\limits_{j=1}^{\lceil \log(2/r)\rceil}\E_{\Bb,\Bs}\left[\max_{S\in \cF}\sum_{i\in S}\left\{(b_i-\theta_{ij})^+\cdot\ind[s_i\leq \theta_{ij}] \right\}\right]}_{\circled{3}} 
+ \underbrace{2\cdot\sum\limits_{j=1}^{\lceil \log(2/r)\rceil}\E_{\Bb,\Bs}\left[\max_{S\in \cF}\sum_{i\in S}\left\{(\theta_{ij}-s_i)^+\cdot\ind[b_i\geq \theta_{ij}]\right\} \right]}_{\circled{4}}.
$$}
\end{lemma}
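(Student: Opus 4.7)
The plan is to split the proof into two parts, corresponding to the two displayed inequalities. The first (main) inequality compares contributions from the coarser event $E_{ij}$ and the restricted event $\overline{E}_{ij}$, which differ only in the lower boundary of $b_i$: in $E_{ij}$ we allow $b_i\in[\theta_{i,j-1},\theta_{ij})$ in addition to $b_i\ge \theta_{ij}$. Since $\overline{F}_i(\theta_{i,j-1})-\overline{F}_i(\theta_{ij})=\tfrac{1}{2^{j-1}}-\tfrac{1}{2^j}=\tfrac{1}{2^j}=\overline{F}_i(\theta_{ij})$, these two $b_i$-regions are \emph{equi-probable} under $\DD_i^B$. The key structural fact I would establish is a monotonicity lemma: for any fixed $b_{-i},\Bs$, the indicator $\ind[i\in S^*(\Bb,\Bs)]$ is non-decreasing in $b_i$. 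This follows from a standard exchange argument: if $i\in S^*$ at value $b_i^L$ but $i\notin S^*$ at value $b_i^H>b_i^L$, then comparing the objective values of the two candidate sets at $b_i^H$ yields a contradiction to the optimality of $S^*(b_i^L)$ at $b_i^L$.

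Given this, for fixed $s_i\in[\theta_{i,j-1},\theta_{ij}]$, $b_{-i}$, $s_{-i}$, define $h(b_i)=(b_i-s_i)^+\cdot\ind[i\in S^*(\Bb,\Bs)]$. Both factors are non-negative and non-decreasing in $b_i$, so $h$ is non-decreasing. Because the low region $[\theta_{i,j-1},\theta_{ij})$ and the high region $[\theta_{ij},\infty)$ carry equal $b_i$-mass $1/2^j$, a straightforward coupling (or direct comparison of integrals) gives
\[
\E_{b_i}\!\left[h(b_i)\cdot\ind\!\left[b_i\in[\theta_{i,j-1},\theta_{ij})\right]\right]\;\le\;\E_{b_i}\!\left[h(b_i)\cdot\ind[b_i\ge\theta_{ij}]\right].
\]
Writing $\ind[E_{ij}]=\ind[s_i\in[\theta_{i,j-1},\theta_{ij}]]\cdot\bigl(\ind[b_i\in[\theta_{i,j-1},\theta_{ij})]+\ind[b_i\ge\theta_{ij}]\bigr)$ and noting that the second summand combined with the $s_i$-indicator is exactly $\ind[\overline{E}_{ij}]$, taking expectation over $s_i,b_{-i},s_{-i}$ and summing over $i$ yields the first inequality with constant $2$.

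For the second inequality, I would start from the just-proved bound to get $\circled{1}\le 2\sum_j \E_{\Bb,\Bs}\bigl[\sum_i (b_i-s_i)^+\cdot\ind[i\in S^*\wedge \overline{E}_{ij}]\bigr]$. Inside event $\overline{E}_{ij}$ the inequalities $b_i\ge\theta_{ij}\ge s_i$ both hold, so $(b_i-s_i)^+=(b_i-\theta_{ij})+(\theta_{ij}-s_i)$ with both summands non-negative. This decomposition (which is exactly where the restriction to $\overline{E}_{ij}$ rather than $E_{ij}$ pays off, as the excerpt emphasizes) lets me upper bound
\[
(b_i-s_i)^+\cdot\ind[\overline{E}_{ij}]\;\le\;(b_i-\theta_{ij})^+\cdot\ind[s_i\le\theta_{ij}]+(\theta_{ij}-s_i)^+\cdot\ind[b_i\ge\theta_{ij}].
\]
Since $S^*\in\cF$, I can replace $\sum_{i\in S^*}$ by $\max_{S\in\cF}\sum_{i\in S}$ separately for each of the two non-negative summands (at the cost of decoupling the two maximizers), and then take expectation and sum over $j$ to obtain $\circled{3}+\circled{4}$.

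I expect the main obstacle to be the monotonicity step: one must justify that the dependence of $S^*(\Bb,\Bs)$ on $b_i$ preserves the coupling, even though $S^*$ is defined via a global combinatorial optimization over $\cF$. Once the exchange argument pins this down, the rest of the proof is bookkeeping: equi-probability of the two $b_i$-intervals for the first inequality, and the clean non-negative decomposition of $(b_i-s_i)$ inside $\overline{E}_{ij}$ for the second.
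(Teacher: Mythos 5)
Your proposal is correct and follows essentially the same route as the paper's proof: the paper likewise splits $E_{ij}$ into the equal-mass regions $[\theta_{i,j-1},\theta_{ij})$ and $[\theta_{ij},\infty)$ and uses monotonicity in $b_i$ of $(b_i-s_i)^+\cdot\Pr_{b_{-i}}[i\in S^*(\Bb,\Bs)]$ (your pointwise exchange argument for $\ind[i\in S^*(\Bb,\Bs)]$ is the same fact, just stated before averaging over $b_{-i}$), and then applies the decomposition $(b_i-s_i)^+\le(b_i-\theta_{ij})^+ +(\theta_{ij}-s_i)^+$ under $\overline{E}_{ij}$ together with $S^*(\Bb,\Bs)\in\cF$ to pass to the two maxima. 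No gaps.
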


Readers may notice that $\Pr[\overline{E}_{ij}]=\frac{1}{2}\cdot\Pr[E_{ij}]$. However, this alone does not prove the first statement of Lemma~\ref{lem:breaking-terms-1}, since both the indicator $\ind[E_{ij}]$ and the contributed GFT $(b_i-s_i)^+\cdot\ind[i\in S^*(\Bb,\Bs)]$ depend on the realization of $b_i,s_i$. In Lemma~\ref{lem:breaking-terms-1} we show that the two random variables are \emph{positively correlated} with respect to $b_i$, which allows us to prove the first statement. The second statement follows from the fact that $(b_i-s_i)^+\leq (b_i-\theta_{ij})^+ +(\theta_{ij}-s_i)^+$ for every $b_i,s_i$, and that $S^*(\Bb,\Bs)\in \cF$ for every $\Bb,\Bs$.

In Lemma~\ref{lem:breaking-terms-2}, we bound term \circled{2} in a similar way. The proof of Lemmas~\ref{lem:breaking-terms-1} and~\ref{lem:breaking-terms-2} can be found in Appendix~\ref{sec:appx_log1/r}.

\begin{lemma}\label{lem:breaking-terms-2}
For every $i\in [n]$ and $j=1,\ldots,\lceil \log(2/r)\rceil$, let $\theta_{ij}'= G_i^{-1}(\frac{1}{2^{j}})$. 
Then 
\begin{align*}
 \quad\quad \circled{2} &\leq 2\cdot\sum\limits_{j=1}^{\lceil \log(2/r)\rceil}\E_{\Bb,\Bs}\left[\max_{S\in \cF}\sum_{i\in S}\left\{(b_i-\theta_{ij}')^+\cdot\ind[s_i\leq \theta_{ij}'] \right\}\right] \quad \circled{5} \\
&+ 2\cdot\sum\limits_{j=1}^{\lceil \log(2/r)\rceil}\E_{\Bb,\Bs}\left[\max_{S\in \cF}\sum_{i\in S}\left\{(\theta_{ij}'-s_i)^+\cdot\ind[b_i\geq \theta_{ij}'] \right\}\right] \quad \circled{6}.
\end{align*}
\notshow{$$\circled{2} \leq \underbrace{2\cdot\sum\limits_{j=1}^{\lceil \log(2/r)\rceil}\E_{\Bb,\Bs}\left[\max_{S\in \cF}\sum_{i\in S}\left\{(b_i-\theta_{ij}')^+\cdot\ind[s_i\leq \theta_{ij}'] \right\}\right]}_{\circled{5}} 
+ \underbrace{2\cdot\sum\limits_{j=1}^{\lceil \log(2/r)\rceil}\E_{\Bb,\Bs}\left[\max_{S\in \cF}\sum_{i\in S}\left\{(\theta_{ij}'-s_i)^+\cdot\ind[b_i\geq \theta_{ij}'] \right\}\right]}_{\circled{6}}.$$}
\end{lemma}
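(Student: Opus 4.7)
The plan is to mirror the proof of Lemma~\ref{lem:breaking-terms-1} with the roles of $\DD_i^B$ and $\DD_i^S$ interchanged, dyadically slicing the region $\{b_i\geq s_i\geq y_i\}$ that defines $\circled{2}$ using the seller's quantiles $\theta'_{ij}=G_i^{-1}(1/2^j)$ in place of the buyer's quantiles $\theta_{ij}$. I would execute the argument in four steps.

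First, for each $i$ and $j$ I would introduce an event $E'_{ij}$ that constrains $s_i$ to a dyadic quantile slice of $\DD_i^S$ around $\theta'_{ij}$, together with a matching lower threshold on $b_i$, and a tightened event $\overline{E}'_{ij}$ chosen so that $s_i\leq\theta'_{ij}\leq b_i$ holds throughout $\overline{E}'_{ij}$. Since $G_i(\theta'_{i,\lceil\log(2/r)\rceil})=1/2^{\lceil\log(2/r)\rceil}\leq r/2\leq r_i/2=G_i(y_i)$, we have $\theta'_{i,\lceil\log(2/r)\rceil}\leq y_i$, and a short dyadic computation shows that $\bigcup_j E'_{ij}$ covers $\{b_i\geq s_i\geq y_i\}$, giving
\[\circled{2}\;\leq\;\sum_{j=1}^{\lceil\log(2/r)\rceil}\E_{\Bb,\Bs}\!\left[\sum_i(b_i-s_i)^+\cdot\ind[i\in S^*(\Bb,\Bs)\wedge E'_{ij}]\right].\]

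Second, I would prove the analog of the first statement of Lemma~\ref{lem:breaking-terms-1}: each summand is at most twice the corresponding summand with $E'_{ij}$ replaced by $\overline{E}'_{ij}$. The argument is the same positive-correlation argument. Conditioning on $(\Bb_{-i},\Bs_{-i})$ and on the quantile slice of $s_i$, the quantity $(b_i-s_i)^+\cdot\ind[i\in S^*(\Bb,\Bs)]$ is monotone nondecreasing in $b_i$: downward-closedness of $\cF$ ensures that increasing $b_i$ only keeps item $i$ inside the GFT-maximizing feasible set $S^*$, and $(b_i-s_i)^+$ is itself nondecreasing in $b_i$. The tightened event conditions on a stochastically dominant half of the range of $b_i$, so the conditioning probability halves while the conditional expectation of this monotone quantity only grows, yielding the factor-$2$ bound.

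Third, on $\overline{E}'_{ij}$ both indicators $\ind[s_i\leq\theta'_{ij}]$ and $\ind[b_i\geq\theta'_{ij}]$ equal one, so the identity
\[(b_i-s_i)\;=\;(b_i-\theta'_{ij})^+\cdot\ind[s_i\leq\theta'_{ij}]\;+\;(\theta'_{ij}-s_i)^+\cdot\ind[b_i\geq\theta'_{ij}]\]
holds on $\overline{E}'_{ij}$. Substituting this identity and then dropping $\ind[\overline{E}'_{ij}]$ from the resulting nonnegative expression only increases it. Finally, since $S^*(\Bb,\Bs)\in\cF$, we have $\sum_i h_i\cdot\ind[i\in S^*(\Bb,\Bs)]\leq\max_{S\in\cF}\sum_{i\in S}h_i$ for any nonnegative $h_i$; applying this bound and splitting the max-sum over its two nonnegative pieces yields exactly $\circled{5}+\circled{6}$.

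The step I expect to be the main obstacle is the correct choice of the events $E'_{ij}$ and $\overline{E}'_{ij}$ so that simultaneously (i) the collection covers $\{b_i\geq s_i\geq y_i\}$, (ii) the tightening produces the straddle $s_i\leq\theta'_{ij}\leq b_i$ required for the decomposition, and (iii) the tightening costs only a factor of two under the monotonicity argument---pinning down the dyadic indexing is the only nontrivial difference from Lemma~\ref{lem:breaking-terms-1}. Once the events are fixed, the remaining steps transcribe verbatim from the proof of Lemma~\ref{lem:breaking-terms-1}.
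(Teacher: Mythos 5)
There is a genuine gap in your step two. You place the dyadic slice on $s_i$ (using the seller's quantiles) and obtain the factor $2$ by tightening the \emph{lower threshold on $b_i$} to $\theta'_{ij}$, arguing that ``the conditioning probability halves.'' But the halving device only works when the two thresholds are consecutive dyadic quantiles of the variable being thresholded (in Lemma~\ref{lem:breaking-terms-1} they are $\overline{F_i}^{-1}(1/2^{j-1})$ and $\overline{F_i}^{-1}(1/2^{j})$, quantiles of $b_i$'s own distribution). Here the tightened threshold is forced by the target expression to be $\theta'_{ij}=G_i^{-1}(1/2^j)$, a quantile of $\DD_i^S$, and $\Pr_{b_i\sim\DD_i^B}[b_i\geq \theta'_{ij}]$ bears no fixed relation to the probability of your looser $b_i$-event: if within the slice $G_i^{-1}(1/2^{j})\leq s_i\leq G_i^{-1}(1/2^{j-1})$ the buyer's value lies in $\bigl(s_i,\,G_i^{-1}(1/2^{j-1})\bigr)$ with probability $0.99$ and above $G_i^{-1}(1/2^{j-1})$ with probability $0.01$, then after tightening you retain an arbitrarily small fraction of the GFT in that slice, and no constant-factor inequality of the claimed form holds. (There is also a consistency problem: for your events to cover $\{b_i\geq s_i\geq y_i\}$ the $b_i$-threshold in $E'_{ij}$ must be the bottom of the $s_i$-slice, but then the straddle point must be the \emph{top} of the slice, so the tightening is in $b_i$ at a seller quantile — exactly the step that fails; tightening in $s_i$ instead would shrink the slice to a single point.)

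The correct mirror — and the paper's actual proof — slices the \emph{buyer's} value by the seller's quantiles and tightens the $s_i$-threshold: take $E'_{ij}=\{G_i^{-1}(1/2^j)\leq b_i\leq G_i^{-1}(1/2^{j-1})\wedge s_i\leq G_i^{-1}(1/2^{j-1})\}$ and $\overline{E}'_{ij}=\{G_i^{-1}(1/2^j)\leq b_i\leq G_i^{-1}(1/2^{j-1})\wedge s_i\leq \theta'_{ij}\}$, so that $s_i\leq\theta'_{ij}\leq b_i$ holds on $\overline{E}'_{ij}$. Coverage of $\{b_i\geq s_i\geq y_i\}$ is immediate since $b_i\geq y_i\geq G_i^{-1}(1/2^{\lceil\log(2/r)\rceil})$ and $s_i\leq b_i$ gives the threshold on $s_i$ for free. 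The factor $2$ then comes from $\Pr[s_i\leq\theta'_{ij}]=\tfrac12\Pr\bigl[s_i\leq G_i^{-1}(1/2^{j-1})\bigr]$ combined with monotonicity in $s_i$ (not $b_i$) of
\begin{equation*}
q_i(\Bb,s_i)=(b_i-s_i)^+\cdot\Pr_{s_{-i}}\bigl[i\in S^*(\Bb,\Bs)\bigr]\cdot\ind\Bigl[G_i^{-1}(1/2^j)\leq b_i\leq G_i^{-1}(1/2^{j-1})\Bigr],
\end{equation*}
which is non-increasing in $s_i$ because both $(b_i-s_i)^+$ and the probability that $i$ is in the GFT-maximizing feasible set are. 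With this choice of events, your steps one, three, and four go through essentially verbatim.
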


We refer to terms $\circled{3}$ and $\circled{5}$ as buyer surplus, and $\circled{4}$ and $\circled{6}$ as seller surplus. In the rest of this section we will bound each term separately.





\notshow{
\begin{align*}
    \circled{1}&
    \leq \textstyle\sum\limits_{j=1}^{\lceil \log(\frac{2}{r})\rceil}\E_{\Bb,\Bs}\left[\sum_i(b_i-\theta_{ij})^+\cdot\ind[i\in S^*(\Bb,\Bs)\wedge E_{ij}] \right]
    + \textstyle\sum\limits_{j=1}^{\lceil \log(\frac{2}{r})\rceil}\E_{\Bb,\Bs}\left[\sum_i(\theta_{ij}-s_i)^+\cdot\ind[i\in S^*(\Bb,\Bs)\wedge E_{ij}] \right]
\end{align*}


Similarly, for every $i\in [n], j\in 1,2,\ldots,\lceil \log(\frac{2}{r})\rceil$, let $\theta_{ij}'= G_i^{-1}(\frac{1}{2^{j}})$.
For every $i,j$, let $E_{ij}'$ be the event that both $b_i$ and $s_i$ fall in the bottom $\frac{1}{2^{j-1}}$ quantile of the seller's distribution, but $b_i$ does not fall in the bottom $\frac{1}{2^{j}}$ quantile of the seller's distribution.  That is, $G_i^{-1}(\frac{1}{2^j})\leq b_i\leq G_i^{-1}(\frac{1}{2^{j-1}})\wedge s_i\leq G_i^{-1}(\frac{1}{2^{j-1}})$. We have
\begin{align*}
    \circled{2}
    &\leq \textstyle\sum\limits_{j=1}^{\lceil \log(\frac{2}{r})\rceil}\E_{\Bb,\Bs}\left[\sum_i(b_i-\theta_{ij}')^+\cdot\ind[i\in S^*(\Bb,\Bs)\wedge E_{ij}'] \right]
    + \textstyle\sum\limits_{j=1}^{\lceil \log(\frac{2}{r})\rceil}\E_{\Bb,\Bs}\left[\sum_i(\theta_{ij}'-s_i)^+\cdot\ind[i\in S^*(\Bb,\Bs)\wedge E_{ij}'] \right]
\end{align*}

We refer to the two terms on the right hand side as \circled{5} and \circled{6}. We summarize the notation and argument above in the following lemma.

\begin{lemma}\label{lem: fb-breaking terms}
For any two-sided market between a single constrained additive buyer and $n$ unit-supply sellers, $\fb$ is bounded by the sum of terms  $\circled{3},\circled{4},\circled{5}$ and $\circled{6}$. 

Here, for every $i$, $x_i=\overline{F_i}^{-1}(\frac{r_i}{2}), y_i=G_i^{-1}(\frac{r_i}{2})$. For every $i\in [n], j\in 1,2,\ldots,\lceil \log(\frac{2}{r})\rceil$, $\theta_{ij}=\overline{F_i}^{-1}(\frac{1}{2^{j}})$, $\theta_{ij}'= G_i^{-1}(\frac{1}{2^{j}})$; $E_{ij}$ is the event that $\overline{F_i}^{-1}(\frac{1}{2^{j-1}})\leq s_i\leq \overline{F_i}^{-1}(\frac{1}{2^{j}})\wedge b_i\geq \overline{F_i}^{-1}(\frac{1}{2^{j-1}})$; $E_{ij}'$ is the event that $G_i^{-1}(\frac{1}{2^j})\leq b_i\leq G_i^{-1}(\frac{1}{2^{j-1}})\wedge s_i\leq G_i^{-1}(\frac{1}{2^{j-1}})$. 
\end{lemma}

We refer to the term $\circled{3}$ and $\circled{5}$ as buyer surplus, and $\circled{4}$ and $\circled{6}$ as seller surplus. In the rest of the section, we will design mechanisms to bound each term separately.

}
\subsection{Bounding Buyer Surplus}\label{sec:buyer surplus}

We bound terms $\circled{3}$ and $\circled{5}$ using fixed posted price mechanisms. Let $\pp$ denote the optimal GFT among all fixed posted price mechanisms. Recall that our market is not symmetric: a single multi-dimensional buyer with a feasibility constraint faces multiple single-dimensional sellers.
As a result, even for the general constrained-additive buyer, bounding buyer surplus is fairly straightforward using fixed price mechanisms that set $\theta_i^S=\theta_i^B=\theta_{ij}$ (or $\theta_i^S=\theta_i^B=\theta_{ij}'$) for each term.


\begin{lemma}\label{lem:term3&5}\label{lem:buyer-surplus-main-body}
For any $\{p_i\}_{i\in [n]}\in \mathbb{R}_{+}^n$,
$$\E_{\Bb,\Bs}\left[\max_{S\in \cF}\sum_{i\in S}\{(b_i-p_i)^+\cdot\ind[s_i\leq p_i]\}\right]\leq \pp.$$
Thus both $\emph{\circled{3}}$ and $\emph{\circled{5}}$ are upper bounded by $O(\log(\frac{1}{r}))\cdot \pp$.
\end{lemma}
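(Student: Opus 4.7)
The plan is to exhibit a single fixed posted price (FPP) mechanism whose expected GFT dominates the left-hand side, for each choice of $\{p_i\}$. Concretely, for the first statement, set $\theta_i^B = \theta_i^S = p_i$ for every $i$; this is a valid FPP mechanism since $\theta_i^B \geq \theta_i^S$ trivially holds. Given realized profile $(\Bb,\Bs)$, the set of sellers willing to trade is $R(\Bs) = \{i : s_i \leq p_i\}$. The buyer, facing prices $p_i$ on items in $R(\Bs)$ with a downward-closed feasibility constraint, selects
$$S^* \in \arg\max_{S \in \cF,\, S \subseteq R(\Bs)} \sum_{i \in S}(b_i - p_i),$$
and by downward-closedness we may assume $S^*$ only contains items with $b_i \geq p_i$ (dropping items with negative surplus only helps the buyer).

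The key observation is a term-by-term comparison of the realized GFT with the LHS integrand. For each $i \in S^*$, we have $s_i \leq p_i$ and $b_i \geq p_i$, so the trade contributes
$$b_i - s_i \;\geq\; b_i - p_i \;=\; (b_i - p_i)^+.$$
Summing over $S^*$ and noting that items with $s_i > p_i$ contribute $0$ to both sides yields
$$\sum_{i \in S^*}(b_i - s_i) \;\geq\; \max_{S \in \cF,\, S \subseteq R(\Bs)}\sum_{i \in S}(b_i - p_i)^+ \;=\; \max_{S \in \cF}\sum_{i \in S}(b_i - p_i)^+ \cdot \ind[s_i \leq p_i].$$
Taking expectations over $(\Bb,\Bs)$ gives that the FPP's expected GFT dominates the LHS, which is therefore at most $\pp$.

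For the second claim, I would invoke the inequality $\lceil \log(2/r)\rceil$ times: once per $j$ with $p_i = \theta_{ij}$ to bound each summand in $\circled{3}$ by $\pp$, and once per $j$ with $p_i = \theta_{ij}'$ to bound each summand in $\circled{5}$ by $\pp$. Summing over $j$ gives $\circled{3}, \circled{5} \leq 2\lceil \log(2/r)\rceil \cdot \pp = O(\log(1/r)) \cdot \pp$. I do not anticipate a serious obstacle here: the only subtle point is justifying that, in the FPP mechanism's buyer step, we may restrict attention to items with positive surplus (which follows from downward-closedness of $\cF$ together with footnote~\ref{footnote:tie-breaking} on tie-breaking), and that items with $s_i > p_i$ drop out of both the FPP trade and the max on the right-hand side simultaneously.
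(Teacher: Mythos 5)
Your proposal is correct and matches the paper's proof essentially verbatim: both use the FPP mechanism with $\theta_i^B=\theta_i^S=p_i$, observe that the buyer's utility-maximizing feasible subset of the available items has GFT at least $\max_{S\in\cF,\,S\subseteq A(\Bs)}\sum_{i\in S}(b_i-p_i)^+$, and then instantiate $p_i=\theta_{ij}$ (resp.\ $\theta_{ij}'$) and sum over the $\lceil\log(2/r)\rceil$ values of $j$. Your extra remarks on downward-closedness and the simultaneous vanishing of the $s_i>p_i$ terms on both sides are exactly the (implicit) justifications the paper relies on.
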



\begin{proof}
Consider the fixed posted price mechanism $\MM$ with $\theta_i^S=\theta_i^B=p_i$. For every $\Bs$, let $A(\Bs)=\{i\in [n]~|~s_i\leq p_i\}$ be the set of available items. Then the buyer will choose the best set $S\subseteq A(\Bs),S\in \cF$ that maximizes $\sum_{i\in S}(b_i-p_i)^+$ (and not buy any item if $b_i - p_i \leq 0$ for all $i$). Thus the gains from trade $\sum_{i\in S}(b_i-s_i)$ is at least $\sum_{i\in S}(b_i-p_i)^+\geq 0$. We have

$$\gft(\MM)\geq \E_{\Bb,\Bs}\left[\max_{S\subseteq A(\Bs),S\in \cF}\sum_{i\in S}(b_i-p_i)^+\right]= \E_{\Bb,\Bs}\left[\max_{S\in \cF}\sum_{i\in S}\left\{(b_i-p_i)^+\cdot \ind[s_i\leq p_i]\right\}\right].$$

To bound terms \circled{3} and \circled{5}, just apply the above inequality with $p_i=\theta_{ij}$ (or $\theta_{ij}'$).
\end{proof}

\subsection{Bounding Seller Surplus for One Unit-Demand Buyer}\label{subsec:seller-surplus-UD}

In the remainder of this section, we will bound the seller surplus terms (\circled{4} and \circled{6}). As a warm-up, we first focus on the case where the buyer is unit-demand, i.e. the buyer is only interested in at most one item. In this case the prophet inequality suffices for our bound. 

\notshow{
Recall that
$$\textstyle\circled{4}=\sum_{j=1}^{\lceil \log(\frac{2}{r})\rceil}\E_{\Bb,\Bs}\left[\sum_i(\theta_{ij}-s_i)^+\cdot\ind[i\in S^*(\Bb,\Bs)]\cdot \ind[E_{ij}] \right]$$
$$\textstyle\circled{6}=\sum_{j=1}^{\lceil \log(\frac{2}{r})\rceil}\E_{\Bb,\Bs}\left[\sum_i(\theta_{ij}'-s_i)^+\cdot\ind[i\in S^*(\Bb,\Bs)]\cdot \ind[E_{ij}'] \right]$$
We will show that the terms in the sum can again be bounded by a constant multiple of $\pp$.  En route, we will use a the following lemma to make the terms in  \circled{4} look more like those in \circled{6}.

\begin{lemma} \label{lem:6 bounded by theta}
For $\theta_{ij}=\overline{F_i}^{-1}(\frac{1}{2^{j}})$ and $E_{ij}$ denoting the event that $\overline{F_i}^{-1}(\frac{1}{2^{j-1}})\leq s_i\leq \overline{F_i}^{-1}(\frac{1}{2^{j}})\wedge b_i\geq \overline{F_i}^{-1}(\frac{1}{2^{j-1}})$, then
$$\E_{\Bb,\Bs}\left[\sum_i(\theta_{ij}-s_i)^+\cdot\ind[i\in S^*(\Bb,\Bs)\wedge E_{ij}] \right] 
\leq 2\cdot\E_{\Bb,\Bs}\left[\max_{i\in [n]}\{(\theta_{ij}-s_i)^+\cdot\ind[b_i\geq \theta_{ij}]\}\right]$$
\end{lemma}

The proof of this lemma appears in \kgnote{add pointer}.  Then, we can bound the desired terms.
}

\begin{lemma}\label{lem:term4&6-UD}
When the buyer is unit-demand, for any $\{p_i\}_{i\in [n]}\in \mathbb{R}_{+}^n$, we have
$$\textstyle\E_{\Bb,\Bs}\left[\max_i\{(p_i-s_i)^+\cdot\ind[b_i\geq p_i]\}\right]\leq 2\cdot \pp.$$
Hence terms \emph{\circled{4}} and \emph{\circled{6}} are both upper-bounded by $O(\log(\frac{1}{r}))\cdot \pp$.
\end{lemma}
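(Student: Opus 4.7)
The plan is to use a prophet inequality. Set $v_i := (p_i - s_i)^+ \cdot \ind[b_i \geq p_i]$ for each $i \in [n]$; these are non-negative random variables, and they are independent across $i$ since the pairs $(b_i, s_i)$ are. By the Samuel-Cahn prophet inequality, there exists a threshold $T \geq 0$---for instance, $T$ chosen so that $\Pr[\max_i v_i \geq T] = 1/2$---for which the sequential strategy ``stop at the first index with $v_i \geq T$'' achieves expected reward at least $\tfrac{1}{2}\E[\max_i v_i]$. Samuel-Cahn's proof hinges on the identity $\E[\max_i v_i] \leq T + \sum_i \E[(v_i - T)^+]$, decomposing the prophet value into a ``threshold charge'' and an ``excess'' sum.

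I would translate this threshold policy into an FPP mechanism $\MM$ by posting $\theta_i^B = p_i$ and $\theta_i^S = (p_i - T)^+$ for every $i$. Under $\MM$, seller $i$ agrees iff $s_i \leq p_i - T$ and the unit-demand buyer is interested in item $i$ iff $b_i \geq p_i$, so the eligible set is exactly $E := \{i : v_i \geq T\}$, and a trade occurs iff $E \neq \emptyset$. Letting $i^* = \argmax_{i \in E}(b_i - p_i)$ be the buyer's pick, the GFT decomposes as
\[
\gft(\MM) = b_{i^*} - s_{i^*} = (b_{i^*} - p_{i^*}) + v_{i^*} \;\geq\; T + \max_{i \in E}(b_i - p_i)^+,
\]
since $v_{i^*} \geq T$ whenever trade occurs and the unit-demand buyer selects the surplus-maximizing eligible item.

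Taking expectations, $\E[\gft(\MM)] \geq T\cdot\Pr[E \neq \emptyset] + \E[\max_{i \in E}(b_i - p_i)^+] = T/2 + \E[\max_{i \in E}(b_i - p_i)^+]$. The first term matches the ``threshold charge'' in the Samuel-Cahn identity, and the second plays the role of the ``excess.'' Using the pairwise independence of $(b_i, s_i)$ across $i$, together with the fact that $i \in E$ already forces $b_i \geq p_i$, I would show that the buyer-surplus maximum on the right is, in expectation, at least a constant fraction of the prophet's excess sum $\sum_i \E[(v_i - T)^+]$. This yields $\E[\gft(\MM)] \geq \tfrac{1}{2}\E[\max_i v_i]$ and hence $\E[\max_i v_i] \leq 2 \pp$.

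The main obstacle is precisely this last step: the unit-demand buyer picks $i^*$ by maximizing $b_i - p_i$ rather than $v_i$, so the two ``excess'' quantities above are not literally equal. Reconciling them requires a careful conditional-independence argument: for each $i \in E$, the variable $(b_i - p_i)^+$ depends only on $b_i$ while $v_i = p_i - s_i$ depends only on $s_i$, so conditioning on $E$ decouples these contributions and lets us compare the buyer-side maximum item-by-item against the prophet's excess. Once the lemma is established, apply it to each of the $\lceil \log(2/r)\rceil$ summands comprising $\circled{4}$ (with $p_i = \theta_{ij}$) and $\circled{6}$ (with $p_i = \theta_{ij}'$), and sum across $j$ to obtain that both terms are bounded by $O(\log(1/r)) \cdot \pp$.
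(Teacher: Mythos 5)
Your setup is the same as the paper's: define $v_i=(p_i-s_i)^+\cdot\ind[b_i\geq p_i]$, pick a prophet-inequality threshold $T$, and implement it with the FPP mechanism $\theta_i^B=p_i$, $\theta_i^S=p_i-T$ (the paper uses the mean-based threshold $T=\tfrac12\E[\max_i v_i]$ rather than the median, but either works). The gap is in your accounting step. You write $\gft(\MM)=(b_{i^*}-p_{i^*})+v_{i^*}\geq T+\max_{i\in E}(b_i-p_i)$, i.e.\ you crush the seller-side excess $v_{i^*}-T$ down to zero and keep the buyer surplus, and then you propose to show that $\E[\max_{i\in E}(b_i-p_i)^+]$ covers a constant fraction of the prophet's excess $\sum_i\E[(v_i-T)^+]$. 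No conditional-independence argument can rescue this: the left-hand side is determined entirely by the distributions of the $b_i$'s above $p_i$, while the right-hand side is determined entirely by the distributions of the $s_i$'s below $p_i-T$, and these are independent and unrelated. For instance, if each $b_i$ is supported on $\{0\}\cup[p_i,p_i+\epsilon]$ while $p_i-s_i$ is huge whenever it is positive, the buyer-surplus maximum is $O(\epsilon)$ but $\sum_i\E[(v_i-T)^+]$ is large, so the inequality you need is simply false.

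The repair is to keep the excess on the seller side, which is exactly what the standard translation of the prophet inequality does. Whenever a trade occurs the buyer's chosen item $i^*$ satisfies $b_{i^*}\geq p_{i^*}$ and $s_{i^*}\leq p_{i^*}-T$, so $\gft(\MM)\geq p_{i^*}-s_{i^*}=v_{i^*}=T+(v_{i^*}-T)$. Hence
$$\E[\gft(\MM)]\;\geq\;T\cdot\Pr[E\neq\emptyset]\;+\;\sum_i\E\bigl[(v_i-T)^+\cdot\ind[E\setminus\{i\}=\emptyset]\bigr],$$
where the second term uses only that the buyer picks $i$ whenever $i$ is the \emph{unique} active item and that $(v_{i^*}-T)\geq 0$ otherwise. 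By independence across $i$ this is at least $T\cdot\Pr[E\neq\emptyset]+\Pr[E=\emptyset]\cdot\sum_i\E[(v_i-T)^+]$, and plugging in either threshold choice yields $\E[\gft(\MM)]\geq\tfrac12\E[\max_i v_i]$. Note the buyer's tie-breaking by $b_i-p_i$ rather than by $v_i$ is harmless here precisely because \emph{any} rule that selects some active item suffices for this argument; you do not need to relate the buyer's choice to the prophet's. Your final step (specializing $p_i$ to $\theta_{ij}$ and $\theta_{ij}'$ and summing over the $O(\log(1/r))$ values of $j$) is fine.
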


\begin{proof}
For every $i$, let $v_i=(p_i-s_i)^+\cdot\ind[b_i\geq p_i]$ be a random variable that depends on $b_i$ and $s_i$. Let $\Bv=\{v_i\}_{i\in [n]}$. Let $V_i$ be the distribution of $v_i$ where $b_i\sim \DD_i^B, s_i\sim \DD_i^S$, and $V=\times_{i=1}^nV_i$ be the distribution of $\Bv$. Then the LHS of the inequality in the Lemma statement is equal to $\E_{\Bv\sim V}[\max_iv_i]$.

Consider any threshold $\xi>0$.  Observe that  $v_i\geq\xi$ if and only if $b_i\geq p_i\wedge p_i-s_i\geq \xi$. Consider the fixed posted price mechanism $\MM$ with $\theta_i^B=p_i$ and $\theta_i^S=p_i-\xi$ for every $i\in [n]$.  
Whenever the buyer purchases some item $i$, we must have $b_i\geq p_i$ (the buyer buys) and $s_i\leq p_i-\xi$ (the seller sells), and the contributed GFT satisfies $b_i-s_i\geq p_i-s_i\geq \xi$. In addition, the buyer will purchase \emph{some} item if and only if there exists some $i$ such that $v_i\geq \xi$. 
Therefore we can apply the prophet inequality~\cite{KleinbergW12,krengel1978semiamarts,samuel1984comparison} with threshold $\xi=\frac{1}{2}\cdot \E_{\Bv\sim V}[\max_iv_i]$ to ensure that the GFT of mechanism $\MM$ is at least $\frac{1}{2}\E_{\Bv\sim V}[\max_iv_i]$. 
\end{proof}

\notshow{
For term $\circled{4}$, the proof is similar to term $\circled{6}$ but more involved. Now for every $j$, the $j^{\mathrm{th}}$ term in $\circled{4}$ is bounded by $\E_{\Bb,\Bs}\left[\sum_i(\theta_{ij}-s_i)^+\cdot\ind[i\in S^*(\Bb,\Bs)]\cdot \ind[b_i\geq \overline{F_i}^{-1}(\frac{1}{2^{j-1}})]\right]$ by relaxing the event $E_{ij}$. Thus before we can apply the above argument, we first need to show that the benchmark is no more than  $2\cdot\E_{\Bb,\Bs}\left[\sum_i(\theta_{ij}-s_i)^+\cdot\ind[i\in S^*(\Bb,\Bs)]\cdot \ind[b_i\geq \theta_{ij}]\right]$. Since $\theta_{ij}=\overline{F_i}^{-1}(\frac{1}{2^j})$, then $\Pr[b_i\geq \theta_{ij}]=1/2\cdot \Pr[b_i\geq \overline{F_i}^{-1}(\frac{1}{2^{j-1}})]$ and the two events $i\in S^*(\Bb,\Bs)$ and $b_i\geq \overline{F_i}^{-1}(\frac{1}{2^{j-1}})$ are positively correlated. 

Using Lemma~\ref{lem:6 bounded by theta}, we have that the terms from $\circled{4}$ are upper bounded as follows 
$$\E_{\Bb,\Bs}\left[\sum_i(\theta_{ij}-s_i)^+\cdot\ind[i\in S^*(\Bb,\Bs)\wedge E_{ij}] \right] 
\leq 2\cdot\E_{\Bb,\Bs}\left[\max_{i\in [n]}\{(\theta_{ij}-s_i)^+\cdot\ind[b_i\geq \theta_{ij}]\}\right],$$
hence the above analysis applies to \circled{4} as well, with $\theta_{ij}$ in place of $\theta_{ij}'$, with a factor of 2.

}

\notshow{

\begin{proof} 
\mingfeinote{Goes to Appendix.}


For every $j$, note that when $E_{ij}$ occurs, by definition, $b_i\geq \overline{F_i}^{-1}(\frac{1}{2^{j-1}})$. Thus
\begin{equation}\label{equ:proof-term4-UD}
\begin{aligned}
&\E_{\Bb,\Bs}\left[\sum_i(\theta_{ij}-s_i)^+\cdot\ind[i\in S^*(\Bb,\Bs)\wedge E_{ij}] \right]
\leq& \E_{\Bb,\Bs}\left[\sum_i(\theta_{ij}-s_i)^+\cdot\ind[i\in S^*(\Bb,\Bs)\wedge b_i\geq \overline{F_i}^{-1}(\frac{1}{2^{j-1}})] \right]    
\end{aligned}
\end{equation}

For every $i,\Bs,b_i$, define $q_i(b_i,\Bs)=\Pr_{b_{-i}}[i\in S^*(\Bb,\Bs)]$. Then we have that $q_i(b_i,\Bs)$ is non-decreasing in $b_i$. Since $\theta_{ij}=\overline{F_i}^{-1}(\frac{1}{2^{j}})$ and $\Pr_{b_i}\left[b_i\geq \overline{F_i}^{-1}(\frac{1}{2^{j}})\right] = \frac{1}{2} \Pr_{b_i}\left[b_i\geq \overline{F_i}^{-1}(\frac{1}{2^{j-1}})\right]$, we have

\begin{equation}\label{equ:proof-term4-2-UD}
\begin{aligned}
\E_{b_i}\left[q_i(b_i,\Bs)\cdot\ind\left[b_i\geq\theta_{ij}\right]\right]
\geq&q_i(\theta_{ij},\Bs)\cdot\Pr_{b_i}\left[b_i\geq\theta_{ij}\right]\\
= & q_i(\theta_{ij},\Bs)\cdot\Pr_{b_i}\left[\overline{F_i}^{-1}(\frac{1}{2^{j-1}})\leq b_i< \theta_{ij}\right]\\
\geq & \E_{b_i}\left[q_i(b_i,\Bs)\cdot\ind\left[\overline{F_i}^{-1}(\frac{1}{2^{j-1}})\leq b_i< \theta_{ij}\right]\right].
\end{aligned}
\end{equation}


Continuing the RHS of \eqref{equ:proof-term4-UD}, we have

\begin{align*}
    &\E_{\Bb,\Bs}\left[\sum_i(\theta_{ij}-s_i)^+\cdot\ind[i\in S^*(\Bb,\Bs)\wedge b_i\geq \overline{F_i}^{-1}(\frac{1}{2^{j-1}})] \right] \\
    =&\E_{\Bs}\left[\sum_i(\theta_{ij}-s_i)^+\cdot\E_{b_i}\left[q_i(b_i,\Bs)\cdot\ind[b_i\geq \overline{F_i}^{-1}(\frac{1}{2^{j-1}})]\right] \right]\\
    \leq& 2\cdot\E_{\Bs}\left[\sum_i(\theta_{ij}-s_i)^+\cdot\E_{b_i}\left[q_i(b_i,\Bs)\cdot\ind[b_i\geq\theta_{ij}]\right] \right]&\text{(Inequality~\eqref{equ:proof-term4-2-UD})}\\
    =&2\cdot\E_{\Bb,\Bs}\left[\sum_i(\theta_{ij}-s_i)^+\cdot\ind[i\in S^*(\Bb,\Bs)\wedge b_i\geq \theta_{ij}] \right] \\
    \leq&2\cdot\E_{\Bb,\Bs}\left[\max_{i\in [n]}\{(\theta_{ij}-s_i)^+\cdot\ind[b_i\geq \theta_{ij}]\}\right]&\text{($S^*(\Bb,\Bs)$ contains at most one item)}\\
\end{align*}

\end{proof}

We summarize the result for a unit-demand buyer in Theorem~\ref{thm:log1/r-UD}. It directly follows from Lemmas~\ref{lem: fb-breaking terms},~\ref{lem:term3&5}, and \ref{lem:term4&6-UD}. 

\begin{theorem}\label{thm:log1/r-UD}
When the buyer is unit-demand, $\fb\leq O(\log(\frac{1}{r}))\cdot \pp$.
\end{theorem}

}
\subsection{Bounding Seller Surplus with Selectability}\label{subsec:seller-surplus}

In this subsection we bound terms $\circled{4}$ and $\circled{6}$ for a more general class of constraints $\cF$ using a variant of a fixed posted price (FPP) mechanism which we call \emph{constrained} FPP. In the variant, the mechanism determines a (randomized) subconstraint $\cF'\subseteq \cF$ upfront. 
Then the buyer is only allowed to take a feasible set in $\cF'$ (among all items that the sellers agree to sell at prices $\{\theta_i^S\}_{i \in [n]}$) and pays the price $\theta_i^B$ for each item she takes.\footnote{Throughout this paper, we assume for simplicity that the buyer will purchase item $i$ when $b_i=\theta_i^B$ as long as the bundle remains feasible after including $i$. Without this tie-breaking rule, one can simply decrease the posted price for each item by an arbitrarily small value $\epsilon$, and the loss of GFT will be arbitrarily small.} Let $\gcfpp$ denote the the optimal GFT among all \cfpp mechanisms.\footnote{Note that FPP is a subclass of \cfpp, and therefore $\pp\leq\gcfpp$.}
Since all of the posted prices as well as the subconstraint are independent from the agents' reported profiles, the mechanism is DSIC and ex-post IR. The mechanism is also ex-post WBB since $\theta_i^B\geq \theta_i^S$ for all $i\in [n]$.   


\notshow{     

\floatname{algorithm}{Mechanism}
\begin{algorithm}[ht]
\begin{algorithmic}[1]
\REQUIRE For every $i$, the posted prices $\theta_i^B, \theta_i^S$; subconstraint $\cF'\subseteq \cF$; input profile $(\Bb,\Bs)$ 
\STATE Let $R=\{\theta_i^S\geq s_i\}$ be the items on the market.

\STATE Offer set $R$ to the buyer together with the constraint $\cF'$.

\STATE The buyer picks her favorite bundle $S^*$ from the set $R$ subject to the feasiblility constraint $\cF'$. Formally $S^*=\argmax_{S\in \cF', S\subseteq R}\sum_i(b_i-\theta_i^B)$.
\STATE For every item $i\in S^*$, the buyer receives item $i$ by paying $\theta_i^B$. Seller $i$ sells her item and receives $\theta_i^S$.
\end{algorithmic}
\caption{{\sf \rppname~Mechanism}}
\label{alg:pp}
\end{algorithm}

}


To present our result, we introduce a concept for downward-closed constraints called $(\delta,\eta)$-\emph{selectability}~\cite{FeldmanSZ16}.
Feldman et al. introduce $(\delta,\eta)$-selectability in the study of Online Contention Resolution Schemes (OCRS)~\cite{FeldmanSZ16}. 
An OCRS is an algorithm defined for the following online selection problem: There is a ground set $I$, and the elements are revealed one by one, with item $i$ \emph{active} with probability $x_i$ independent of the other items. The algorithm is only allowed to accept active elements and has to irrevocably make a decision whether to accept an element before the next one is revealed. Moreover, the algorithm can only accept a set of elements subject to a feasibility constraint $\cF$. We use the vector $x$ to denote active probabilities for the elements and $R(x)$ to denote the random set of active elements. 

\begin{definition}[relaxation]\label{def-relaxation}
We say that a polytope $P\subseteq [0,1]^{|I|}$ is a relaxation of $P_{\cF}$ if it contains the same $\{0,1\}$-points, i.e., $P\cap \{0,1\}^{|I|} = P_{\cF}\cap \{0,1\}^{|I|}$.
\end{definition}

\begin{definition}[Online Contention Resolution Scheme]\label{def-ocrs}
An \emph{Online Contention Resolution Scheme} (OCRS) for a polytope $P\subseteq [0,1]^{|I|}$ and feasibility constraint $\cF$ is an online algorithm that selects a feasible and active set $S\subseteq R(x)$ and $S\in \cF$ for any $x\in P$. 
A \textbf{greedy OCRS} $\pi$ greedily decides whether or not to select an element in each iteration: given the vector $x\in P$, it first determines a sub-constraint $\cF_{\pi,x}\subseteq \cF$. When element $i$ is revealed, it accepts the element if and only if $i$ is active and $S\cup \{i\}\in \cF_{\pi,x}$, where $S$ is the set of elements accepted so far. In most cases, we choose $P$ to be $P_{\cF}$, the convex hull of all characteristic vectors of feasible sets in $\cF$: $P_{\cF}=conv(\ind_S~|~S\in \cF)$.
\end{definition}

\begin{definition}[$(\delta,\eta)$-selectability \cite{FeldmanSZ16}]\label{def-selectablity}
For any $\delta,\eta\in (0,1)$,  
a greedy OCRS $\pi$ for $P$ and $\cF$ is $(\delta,\eta)$-selectable if for every $x\in \delta\cdot P$ and $i\in I$,
$$\Pr[S\cup \{i\}\in \cF_{\pi,x}, \forall S\subseteq R(x), S\in\cF_{\pi,x}]\geq \eta.$$
The probability is taken over the randomness of $R(x)$ and the subconstraint $\cF_{\pi,x}$. We slightly abuse notation and say that $\cF$ is $(\delta,\eta)$-selectable if there exists a $(\delta,\eta)$-selectable greedy OCRS for $P_{\cF}$ and $\cF$.
\end{definition}

The following lemma is adapted from~\cite{FeldmanSZ16} and connects $(\delta,\eta)$-selectability to \cfpp~mechanisms. Once again, the OCRS gives us both a GFT guarantee and a mechanism: variables $v_i$ correspond to the bound on seller surplus, buyer item prices are $\{p_i\}_{i\in [n]}$, seller prices are $\{p_i - \xi_i\}_{i\in [n]}$, and the subconstraint is suggested by the OCRS.

\begin{lemma}\label{lem:feldman-selectable}
Suppose there exists a $(\delta,\eta)$-selectable greedy OCRS $\pi$ for the polytope $P_{\cF}$, for some $\delta,\eta\in (0,1)$. Fix any $\{p_i\}_{i\in [n]}\in \mathbb{R}_{+}^n$. For every $i\in [n]$, let $v_i=(p_i-s_i)^+\cdot \ind[b_i\geq p_i]$. For any $\Bq \in P_{\cF}$ that satisfies $q_i\leq \Pr_{b_i,s_i}[b_i\geq p_i>s_i] $ $\forall i$, let $\xi_i=p_i-G_i^{-1}(q_i/\Pr[b_i\geq p_i])$.\footnote{When $q_i\leq \Pr_{b_i,s_i}[b_i\geq p_i>s_i]$, $q_i/\Pr[b_i\geq p_i]\leq 1$. Thus $\xi_i$ is well-defined.} We have
$$\textstyle\sum_i\E_{b_i,s_i}\left[v_i\cdot \ind[v_i\geq \xi_i]\right]\leq \frac{1}{\delta\eta}\cdot \gcfpp.$$
Moreover, there exists a choice of $\Bq$ such that
$$\textstyle\E_{\Bb,\Bs}\left[\max_{S\in \cF}\sum_{i\in S}\left\{(p_i-s_i)^+\cdot\ind[b_i\geq p_i]\right\}\right]\leq \textstyle\sum_i\E_{b_i,s_i}\left[v_i\cdot \ind[v_i\geq \xi_i]\right]\leq \frac{1}{\delta\eta}\cdot \gcfpp.$$
\end{lemma}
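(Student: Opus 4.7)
The plan is to establish the two inequalities separately: the first by constructing an explicit \cfpp~mechanism from the greedy OCRS $\pi$, the second by picking a specific $\Bq$ and applying a standard threshold-exchange argument.

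For the first inequality, I would set $\theta_i^B = p_i$ and $\theta_i^S = p_i - \xi_i$, so that the event ``$i$ is available to the buyer and the buyer wants $i$'' (i.e., $s_i \le \theta_i^S$ and $b_i \ge \theta_i^B$) is exactly $\{v_i \ge \xi_i\}$, which by the choice of $\xi_i$ has probability $\Pr[b_i\ge p_i]\cdot G_i(p_i-\xi_i)=q_i$. To turn the OCRS on $\delta\Bq$ into a valid subconstraint, the mechanism independently samples tokens $m_i\sim\mathrm{Bern}(\delta)$ and the OCRS subconstraint $\cF_{\pi,\delta\Bq}$, then uses $\cF' = \{S\in \cF_{\pi,\delta\Bq}:m_i=1\;\forall i\in S\}$. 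Then the ``effective active probability'' of $i$ becomes $\delta q_i$, so $\delta\Bq\in\delta P_{\cF}$ and the selectability hypothesis applies.

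The key step is to lower bound $\Pr[i\in S^*\mid v_i\ge\xi_i,\,m_i=1]$ by $\eta$, where $S^*$ is the buyer's chosen bundle. Suppose for contradiction $i\notin S^*$. Since $b_i\ge p_i$ and (up to a measure-zero event under continuous distributions) $b_i>p_i$, the bundle $S^*\cup\{i\}$ gives the buyer strictly higher surplus, so its infeasibility in $\cF'$ must be because $S^*\cup\{i\}\notin \cF_{\pi,\delta\Bq}$. But selectability guarantees that for every $S\subseteq R(\delta\Bq)$ with $S\in \cF_{\pi,\delta\Bq}$, $S\cup\{i\}\in \cF_{\pi,\delta\Bq}$ with probability $\ge\eta$; applied with $S=S^*$ this gives the desired bound. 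Because the OCRS/marker randomness is independent of $(b_i,s_i)$ beyond the event $\{v_i\ge\xi_i\}$, the inclusion event is conditionally independent of the exact value of $v_i$ on this event, so a short computation gives
\begin{align*}
\gcfpp \;\ge\; \textstyle\sum_i \E\bigl[(b_i-s_i)\,\ind[i\in S^*]\bigr] \;\ge\; \textstyle\sum_i \E\bigl[v_i\,\ind[i\in S^*]\bigr] \;\ge\; \delta\eta\,\textstyle\sum_i \E\bigl[v_i\,\ind[v_i\ge\xi_i]\bigr],
\end{align*}
using that $i\in S^*$ forces $b_i\ge p_i$, whence $b_i-s_i\ge p_i-s_i=v_i$. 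DSIC, ex-post IR and ex-post WBB hold by construction since $\theta_i^B\ge\theta_i^S$.

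For the second inequality, define $S^\dagger(\Bb,\Bs)=\argmax_{S\in\cF}\sum_{i\in S}v_i$ (breaking ties to exclude items with $v_i=0$) and set $q_i=\Pr[i\in S^\dagger]$. Then $\Bq$ is a convex combination of feasible indicator vectors, so $\Bq\in P_\cF$, and $q_i\le\Pr[v_i>0]\le\Pr[b_i\ge p_i>s_i]$, so $\Bq$ is admissible and the threshold $\xi_i$ satisfies $\Pr[v_i\ge\xi_i]=q_i$. Setting $y_i=\ind[v_i\ge\xi_i]-\ind[i\in S^\dagger]\in\{-1,0,1\}$, one has $\E[y_i]=0$ and, checking the three cases, $v_iy_i\ge\xi_iy_i$ pointwise; taking expectations gives $\E[v_i\,\ind[v_i\ge\xi_i]]\ge \E[v_i\,\ind[i\in S^\dagger]]$, and summing yields
\begin{align*}
\E\Bigl[\max_{S\in\cF}\textstyle\sum_{i\in S}v_i\Bigr] \;=\; \textstyle\sum_i \E[v_i\,\ind[i\in S^\dagger]] \;\le\; \textstyle\sum_i \E[v_i\,\ind[v_i\ge\xi_i]],
\end{align*}
which combined with the first inequality is the claim. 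The main obstacle I expect is conceptual rather than algebraic: rigorously arguing that, conditional on $\{v_i\ge\xi_i,\,m_i=1\}$, the event $\{i\in S^*\}$ is independent of the realized value of $v_i$, so that the factor $\delta\eta$ cleanly multiplies $\E[v_i\,\ind[v_i\ge\xi_i]]$ rather than a smaller truncated quantity. This requires showing the buyer's and OCRS's choices depend on $(b_i,s_i)$ only through the indicator of the event $\{v_i\ge\xi_i\}$.
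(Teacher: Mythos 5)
Your construction is essentially the paper's: buyer prices $p_i$, reduced seller prices, the OCRS subconstraint used as the \cfpp subconstraint, the contradiction argument showing an active item must end up in the buyer's bundle whenever an ``addability'' event holds, and for the second inequality the choice $q_i=\Pr[i\in S^\dagger]$ followed by a top-quantile comparison (your pointwise exchange $v_iy_i\geq \xi_i y_i$ is just a compact way of writing the paper's quantile argument). The one genuine deviation is how you land inside $\delta\cdot P_{\cF}$: you thin with independent Bernoulli($\delta$) markers folded into the random subconstraint while keeping seller prices at $p_i-\xi_i$, whereas the paper keeps no markers and instead raises the threshold to $\hat{\xi}_i$ so that the activity probability is exactly $\delta q_i$; both are legitimate randomized \cfpp mechanisms and both deliver the factor $\delta$ (yours exactly, the paper's via $\E[v_i\cdot\ind[v_i\geq\hat{\xi}_i]]\geq\delta\,\E[v_i\cdot\ind[v_i\geq\xi_i]]$).

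The step you flag as the main obstacle is indeed the crux, and as you phrase it it is not quite the right target: conditional on $\{v_i\geq\xi_i,\ m_i=1\}$, the event $\{i\in S^*\}$ is in general \emph{not} independent of the realized value of $v_i$, since whether the buyer takes item $i$ can depend on $b_i$ through competition with other feasible bundles. What you actually need---and what your contradiction argument already implicitly uses---is a sub-event that forces $i\in S^*$ and is measurable with respect to the other items' randomness only. Concretely, let $A_i$ be the event that $S\cup\{i\}\in\cF_{\pi,\delta\Bq}$ for every $S\in\cF_{\pi,\delta\Bq}$ with $S\subseteq R\setminus\{i\}$, where $R$ is the active set. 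This event quantifies over fewer sets than the selectability event, so $\Pr[A_i]\geq\eta$; it references only $R\setminus\{i\}$, the markers $m_{-i}$, and the OCRS's internal randomness (the subconstraint $\cF_{\pi,\delta\Bq}$ is drawn before seeing activity), so it is independent of $(b_i,s_i,m_i)$; and your argument shows $\{v_i\geq\xi_i\}\cap\{m_i=1\}\cap A_i\subseteq\{i\in S^*\}$ under the paper's tie-breaking conventions. Then $\E[v_i\cdot\ind[i\in S^*]]\geq\Pr[m_i=1]\cdot\Pr[A_i]\cdot\E[v_i\cdot\ind[v_i\geq\xi_i]]\geq\delta\eta\,\E[v_i\cdot\ind[v_i\geq\xi_i]]$, which is exactly the factorization you want; this restriction to $S\subseteq R\setminus\{i\}$ is precisely how the paper closes the same gap.
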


\begin{prevproof}{Lemma}{lem:feldman-selectable}
Let $V_i$ denote the distribution of $v_i$ when $b_i\sim \DD_i^B, s_i\sim \DD_i^S$. Let $\Bv=\{v_i\}_{i\in [n]}$. Let $\hat{\textbf{q}}$ be a scaled-down vector of $\textbf{q}$ such that $\hat{q}_i=\delta\cdot q_i$ for every $i\in [n]$ and $\hat{\xi_i}=p_i-G_i^{-1}(\hat{q}_i/\Pr[b_i\geq p_i])$. This is also well-defined since $\hat{q}_i<q_i\leq\Pr[b_i\geq p_i]$. As $q \in P_{\cF}$, then $\hat{q}\in \delta\cdot P_{\cF}$. Consider the \cfpp~mechanism $\MM$ with buyer posted prices $\{p_i\}_{i\in [n]}$, seller posted prices $\{p_i-\hat{\xi}_i\}_{i\in [n]}$, and subconstraint $\cF_{\pi,\hat{q}}\in \cF$ stated in Definition~\ref{def-selectablity}. 

Fix any item $i\in [n]$. We say item $i$ as active if $v_i\geq \hat{\xi}_i$. Similarly to Section~\ref{subsec:seller-surplus-UD},  $v_i\geq \hat{\xi}_i$ if and only if $b_i\geq p_i\wedge s_i\leq p_i-\hat{\xi}_i$. That is, $i$ is active if and only if item $i$ is on the market and the buyer can afford it, which by choice of $\hat \xi_i$ happens independently across all $i$ with probability $\Pr_{v_i}[v_i\geq \hat{\xi}_i]=\Pr_{b_i,s_i}[p_i-s_i\geq \hat{\xi}_i\wedge b_i\geq p_i]=\hat{q}_i$. 

Then for any $\Bv$, the set of active items is $R(\Bv)=\{j\in [n]:~ v_j\geq \hat{\xi}_j\}$. By $(\delta,\eta)$-selectability (Definition~\ref{def-selectablity}) and the fact that $\hat{q}\in \delta\cdot P_{\cF}$, we have

\begin{equation}\label{equ:lem-selectable1}
\Pr_{\pi,\Bv}[S\cup \{i\}\in \cF_{\pi,\hat{q}}, \forall S\subseteq R(\Bv), S\in\cF_{\pi,\hat{q}}]\geq \eta.    
\end{equation}

Note that for the sets $S\in\cF_{\pi,\hat{q}}$ that have $i\in S$, then $S\cup \{i\}\in \cF_{\pi,\hat{q}}$ with probability 1.  Thus, if we require $S \subseteq R(\Bv)\backslash\{i\}$ instead, it can not be that $i \in S$, and so the following LHS occurs with equal probability, allowing us to rewrite inequality~\eqref{equ:lem-selectable1} as follows: 

\begin{equation}\label{equ:lem-selectable2}
\Pr_{\pi,\Bv}[S\cup \{i\}\in \cF_{\pi,\hat{q}}, \forall S\subseteq R(\Bv)\backslash \{i\}, S\in\cF_{\pi,\hat{q}}]\geq \eta.
\end{equation}

For any $\Bv_{-i}$, let $R_i(\Bv_{-i})=\{j\neq i:~ v_j\geq \hat{\xi}_j\}$. Then inequality~\eqref{equ:lem-selectable2} is equivalent to

$$\Pr_{\pi,v_{-i}}[S\cup \{i\}\in \cF_{\pi,\hat{q}}, \forall S\subseteq R_i(\Bv_{-i}), S\in\cF_{\pi,\hat{q}}]\geq \eta.$$


Define event $A_i=\left\{\Bv_{-i}:~S\cup \{i\}\in \cF_{\pi,\hat{q}}, \forall S\subseteq R_i(\Bv_{-i}), S\in\cF_{\pi,\hat{q}}\right\}$. We will argue that item $i$ must be in the buyer's favorite bundle $S^*$ 
when both of the following conditions are satisfied: (i) $v_i\geq \hat{\xi}_i$, and (ii) event $A_i$ happens. Note that in $\MM$, the set of items in the market is $T=\{j:s_j\leq p_j-\hat{\xi}_j\}$, thus $S^*=\argmax_{S\subseteq T, S\in \cF_{\pi,\hat{q}}}\sum_{j\in S}(b_j-p_j)$. Suppose by way of contradiction that both conditions are satisfied but $i\not\in S^*$. Clearly, for every $j\in S^*$, we have $b_j\geq p_j$, otherwise removing $j$ from $S^*$ will give the buyer greater utility. In addition, we have $s_j\leq p_j-\hat{\xi}_j$, so $S^*\subseteq R(\Bv)$. By definition, $S^*$ must lie in $\cF_{\pi,\hat{q}}$. Since event $A_i$ occurs, then $S^*\cup \{i\}\in \cF_{\pi,\hat{q}}$. As $v_i\geq \hat{\xi}_i$, this implies that $b_i\geq p_i$. Thus adding $i$ to $S^*$ keeps the set feasible and does not decrease the buyer's utility $\sum_{j\in S^*}(b_j-p_j)$. Thus $i\in S^*$ (see footnote \ref{footnote:tie-breaking}). 
This is a contradiction.    

Note that condition (i) and (ii) are independent. Thus for every $b_i$ and $s_i$ such that $b_i\geq p_i\wedge s_i\leq p_i-\hat{\xi}_i$ (or equivalently $v_i\geq\hat{\xi}_i$), the expected GFT of item $i$ over $b_{-i},s_{-i}$ is at least 
$$\Pr[A_i]\cdot (b_i-s_i)\geq \eta\cdot (p_i-s_i)=\eta\cdot v_i.$$

Thus

$$\gft(\MM)\geq \eta\cdot\sum_i\E_{v_i\sim V_i}[v_i\cdot \ind[v_i\geq \hat{\xi}_i]]\geq \delta\eta\cdot\sum_i\E_{v_i\sim V_i}[v_i\cdot \ind[v_i\geq \xi_i]],$$

where the last inequality is because for every $i$, we have $\E[v_i|v_i\geq \hat{\xi}_i]\geq \E[v_i|v_i\geq \xi_i]$ and $\Pr[v_i\geq \hat{\xi}_i]=\hat{q}_i=\delta\cdot \Pr[v_i\geq \xi_i]$.

For the second inequality stated in the lemma, note that 
$$\E_{\Bb,\Bs}\left[\max_{S\in \cF}\sum_{i\in S}\left\{(p_i-s_i)^+\cdot\ind[b_i\geq p_i]\right\}\right]=\E_{\Bv}\left[\max_{S\in \cF}\sum_{i\in S}v_i\right].$$
For every $\Bv$, let $\hat{S}(\Bv)=\argmax_{S\in \cF}\sum_{i\in S}v_i$, and break ties in favor of the set with smaller size. 
For every $i$, let $q_i=\Pr_{\Bv}[i\in \hat{S}(\Bv)]$ be the probability that $i$ is in the maximum weight feasible set. We have that $\Bq=\{q_i\}_{i\in [n]}\in P_{\cF}$. 
Also for every $i$, $q_i=\Pr_{\Bv}[i\in \hat{S}(\Bv)]\leq \Pr[v_i>0]=\Pr[b_i\geq p_i>s_i]$.
Moreover,

$$\E_{\Bv}\left[\max_{S\in \cF}\sum_{i\in S}v_i\right]=\sum_{i\in [n]}\E_{\Bv}\left[v_i\cdot \ind[i\in \hat{S}(\Bv)]\right]\leq \sum_{i\in [n]}\E_{v_i\sim V_i}\left[v_i\cdot \ind[v_i\geq \xi_i]\right]$$

The inequality follows from the fact that for every $i$, both sides integrate the random variable $v_i$ with a total probability mass $q_i$, while the right hand side integrates $v_i$ at the top $q_i$-quantile. 
\end{prevproof}

For each $j$ in the summation, choose $p_i$ from Lemma~\ref{lem:feldman-selectable} to be $\theta_{ij}$ (or $\theta_{ij}'$). Then both terms $\circled{4}$ and  $\circled{6}$ are bounded by   
$\frac{\log(1/r)}{\delta\eta}\cdot \gcfpp$. Theorem~\ref{thm:log1/r} then follows directly from Lemmas~\ref{lem:breaking-terms-1},~\ref{lem:breaking-terms-2},~\ref{lem:buyer-surplus-main-body}, and \ref{lem:feldman-selectable}.

Feldman et al.~\cite{FeldmanSZ16} show that many natural constraints---including matroids, matchings, knapsack, and their compositions---are $(\delta,\eta)$-selectable for some constants $\delta$ and $\eta$. For all of these, Theorem~\ref{thm:log1/r} implies that $\gcfpp$ is an $O(\log(1/r))$-approximation to $\fb$. See Appendix~\ref{sec:appx-constraints} for details.
\begin{theorem}\label{thm:log1/r}
Suppose the buyer's feasibility constraint $\cF$ is $(\delta,\eta)$-selectable for some $\delta,\eta\in (0,1)$. Then $\fb\leq O(\frac{\log(1/r)}{\delta\eta})\cdot \gcfpp$.
\end{theorem}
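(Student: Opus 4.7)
The plan is to assemble the theorem directly from the four ingredient lemmas already established in Section~\ref{sec:log1/r}. First I would start from the decomposition $\fb \leq \circled{1}+\circled{2}$ derived at the beginning of Section~\ref{subsec:breaking terms}, separating the contribution of each item into the two regimes based on whether $s_i$ lies in the top $\frac{r_i}{2}$-quantile of the buyer's distribution or the bottom $\frac{r_i}{2}$-quantile of the seller's distribution (which is well-defined because $x_i \geq y_i$ by Lemma~\ref{lem:x_i-vs-y_i}). Then apply Lemma~\ref{lem:breaking-terms-1} to bound $\circled{1} \leq \circled{3}+\circled{4}$ and Lemma~\ref{lem:breaking-terms-2} to bound $\circled{2} \leq \circled{5}+\circled{6}$, so that $\fb$ is bounded by the sum of the buyer surplus terms $\circled{3},\circled{5}$ and the seller surplus terms $\circled{4},\circled{6}$.

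Next I would bound the buyer surplus terms. Both $\circled{3}$ and $\circled{5}$ are sums of $O(\log(1/r))$ expressions of the form $\E_{\Bb,\Bs}[\max_{S\in\cF}\sum_{i\in S}(b_i-p_i)^+ \ind[s_i \leq p_i]]$ (with $p_i = \theta_{ij}$ and $p_i = \theta_{ij}'$ respectively). By Lemma~\ref{lem:buyer-surplus-main-body}, each such expression is upper bounded by $\pp$, which in turn is upper bounded by $\gcfpp$ since FPP is a subclass of constrained FPP. Summing across the $O(\log(1/r))$ terms yields $\circled{3},\circled{5} \leq O(\log(1/r)) \cdot \gcfpp$.

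For the seller surplus, the terms $\circled{4}$ and $\circled{6}$ are sums of $O(\log(1/r))$ expressions of the form $\E_{\Bb,\Bs}[\max_{S\in\cF}\sum_{i\in S}(p_i-s_i)^+\ind[b_i \geq p_i]]$ with $p_i = \theta_{ij}$ or $p_i = \theta_{ij}'$. This is precisely the quantity that Lemma~\ref{lem:feldman-selectable} bounds: applying it with the stated choice of $\Bq$ yields that each such expression is at most $\frac{1}{\delta\eta}\cdot\gcfpp$. Summing across the $O(\log(1/r))$ values of $j$ gives $\circled{4},\circled{6} \leq O\!\left(\frac{\log(1/r)}{\delta\eta}\right)\cdot \gcfpp$.

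Combining the four bounds gives $\fb \leq O\!\left(\frac{\log(1/r)}{\delta\eta}\right)\cdot \gcfpp$, as desired. The whole argument is a clean assembly of prior lemmas; the conceptual heavy lifting occurred in Lemma~\ref{lem:feldman-selectable}, where the $(\delta,\eta)$-selectability hypothesis was translated into a constrained FPP mechanism by interpreting the events $\{v_i \geq \hat\xi_i\}$ as independent activations of items in an OCRS instance. So the main obstacle is not in this theorem's proof itself but is already resolved by that lemma; here I only need to verify that the chosen $p_i = \theta_{ij}$ (respectively $\theta_{ij}'$) and the implicit $\Bq$ give a valid instantiation of Lemma~\ref{lem:feldman-selectable}.
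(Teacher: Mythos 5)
Your proposal is correct and matches the paper's own proof, which likewise obtains the theorem directly by combining the decomposition $\fb\leq\circled{1}+\circled{2}$ with Lemmas~\ref{lem:breaking-terms-1}, \ref{lem:breaking-terms-2}, \ref{lem:buyer-surplus-main-body}, and~\ref{lem:feldman-selectable}, instantiating the latter with $p_i=\theta_{ij}$ (resp.\ $\theta_{ij}'$) for each $j$. No gaps; this is exactly the intended assembly.
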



\subsection{General Constrained-Additive Buyer}\label{subsec:proof-downward-close}

In this section, we consider the case of a general constrained-additive buyer, and prove an $O(\log(n)\cdot \log(1/r))-$approximation to $\fb$ using \cfpp mechanisms. Note that Lemmas~\ref{lem:breaking-terms-1},~\ref{lem:breaking-terms-2}, and~\ref{lem:buyer-surplus-main-body} still hold in this setting. It is sufficient to bound the seller surplus term with $\gcfpp$. 

Throughout this section, we will use the following variant of FPP mechanisms: Other than posted prices, the mechanism also determines an integer $h>0$ upfront.
The buyer can purchase any set of items of size at least $h$ by paying the posted prices for each item in the set; otherwise, she leaves with nothing.
This is a subclass of \cfpp, with subconstraint $\cF'=\{S\mid S\in \cF\wedge |S|\geq h\}\subseteq \cF$.\footnote{If $h=1$, the mechanism becomes a standard FPP mechanism without any subconstraint $\cF'$.} 

\begin{lemma}\label{lem:log(n)-downward-close}
For any $\{p_i\}_{i\in [n]}\in \mathbb{R}_{+}^n$, $$\textstyle\cA=\E_{\Bb,\Bs}\left[\max_{T\in\cF}\sum_{i\in T}\{(p_i-s_i)^+\cdot\ind[b_i\geq p_i]\}\right]\leq O(\log(n))\cdot \gcfpp.$$
Hence terms \emph{\circled{4}} and \emph{\circled{6}} are both upper-bounded by $O(\log(n)\cdot\log(\frac{1}{r}))\cdot \gcfpp$.
\end{lemma}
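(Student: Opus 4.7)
The plan is to replace the OCRS step in Lemma~\ref{lem:feldman-selectable} by a dyadic bucketing of the variables $v_i = (p_i-s_i)^+\cdot\ind[b_i\geq p_i]$ combined with a concentration inequality, and to handle each bucket by a \cfpp mechanism that forces the buyer to purchase a feasible set of size at least some threshold $h_k$. Within a bucket all items contribute roughly the same seller surplus, so forcing the buyer to take many items extracts (up to constant factors) the full value of the bucket.

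Concretely, set $L_k = 2^k$ in a suitable unit, $A_k(\Bb,\Bs) = \{i : v_i\geq L_k\}$ and $\tilde A_k(\Bb,\Bs) = \{i : v_i\in[L_k,2L_k)\}$. Pushing $\max_{T\in\cF}$ inside the dyadic decomposition $v_i\leq\sum_k 2L_k\cdot\ind[v_i\in[L_k,2L_k)]$ and using that $\cF$ is downward-closed yields
\[
\cA \;\leq\; \sum_k 2L_k\cdot\E[\tilde M_k], \qquad \tilde M_k\;:=\;\max_{T\subseteq\tilde A_k,\,T\in\cF}|T|.
\]
I would then argue that only $O(\log n)$ values of $k$ matter: buckets with $L_k\ll \gcfpp/n$ form a geometric tail of total mass $O(\gcfpp)$ because $\tilde M_k\leq n$, while buckets with $L_k$ too large to have $\Pr[v_i\geq L_k]$ non-negligible for any $i$ can be handled by running bilateral-trade FPPs on the individual witness items together with a union bound. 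For each of the remaining $O(\log n)$ buckets $k$, I would deploy a \cfpp with buyer prices $p_i$, seller prices $p_i-L_k$, and subconstraint $\cF'=\{S\in\cF:|S|\geq h_k\}$: item $i$ is simultaneously on the market and affordable exactly when $v_i\geq L_k$, so the active set is $A_k\supseteq\tilde A_k$ with independent membership across $i$, and whenever the buyer purchases a feasible set of size at least $h_k$ her contribution to GFT is at least $h_k L_k$. The mechanism's GFT is therefore at least $h_k L_k\cdot\Pr[M_k\geq h_k]$, where $M_k:=\max_{T\subseteq A_k,\,T\in\cF}|T|\geq\tilde M_k$.

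The technical heart of the proof, and the main obstacle, is a concentration inequality showing $\Pr[M_k\geq h_k]=\Omega(1)$ for $h_k=\lceil\E[\tilde M_k]/2\rceil$. Since $M_k$ is a monotone $1$-Lipschitz function of the independent Bernoulli indicators $\ind[i\in A_k]$, I would invoke a Talagrand-type self-bounding concentration inequality to obtain constant-probability concentration around the mean whenever $\E[\tilde M_k]$ is not too small; the delicate case of buckets where $\E[\tilde M_k]$ is too small for such concentration to bite is handled separately by the unconstrained FPP ($h_k = 1$), whose GFT is already $\Omega(L_k\cdot\E[\tilde M_k])$ via a second-moment argument on $|\tilde A_k|$, and the aggregate contribution of these small buckets is absorbed into $O(\gcfpp)$. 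Combining the per-bucket guarantee $\gcfpp\gtrsim L_k\cdot\E[\tilde M_k]$ over the $O(\log n)$ relevant buckets yields $\cA\leq O(\log n)\cdot\gcfpp$; instantiating this with $p_i=\theta_{ij}$ and $p_i=\theta_{ij}'$ and summing the $O(\log(1/r))$ applications implied by Lemmas~\ref{lem:breaking-terms-1} and~\ref{lem:breaking-terms-2} produces the claimed $O(\log n\cdot\log(1/r))\cdot\gcfpp$ bound on \circled{4} and \circled{6}.
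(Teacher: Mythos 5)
Your proposal follows essentially the same route as the paper's proof: dyadic bucketing of the $v_i$ into $O(\log n)$ ranges after trimming a negligible low tail and a union-boundable high tail, a constrained FPP per bucket forcing the buyer to take at least $h_k$ items, constant-probability concentration of the maximum feasible active-set size around its mean, and a separate $h_k=1$ argument for buckets whose expectation is too small to concentrate. The ``Talagrand-type self-bounding'' concentration you invoke is precisely what the paper establishes from scratch via Efron--Stein plus Paley--Zygmund (showing $\var[Z]\leq\E[Z]$), so the two arguments coincide.
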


For every $i\in [n]$, again construct random variables $v_i=(p_i-s_i)^+\cdot\ind[b_i\geq p_i]$. The main issue here is that in an FPP mechanism, say with posted prices $\theta_i^B=\theta_i^S=p_i$, the buyer will pick the maximum weight feasible set (among all items that sellers are willing to sell) according to weight $b_i-p_i$ (her utility). However, this might be far from the set used in the benchmark, i.e. the maximum weight feasible set according to weight $p_i-s_i$. In the previous section (when the constraint $\cF$ had selectability), by setting different prices for both sides and adding a more restrictive constraint, we guaranteed that if both the buyer and seller accept the posted prices for some item, then the buyer would purchase this item with at least constant probability. For general downward-closed $\cF$, it is unclear how to achieve this property with a \cfpp mechanism.

For every $\Bb,\Bs$, let $T^*(\Bb,\Bs)=\argmax_{T\in \cF}\sum_{i\in T}v_i$ be the optimal set used in the benchmark. We divide $\cA$ into three terms according to the value of $v_i$ when $i$ is in this optimal set: $v_i<\cA/2n$, $v_i\in [\cA/2n,2n\cA]$ and $v_i>2n\cA$. 
Denote the three terms $\cA_S, \cA_M, \cA_L$ accordingly. First we notice that the contribution of $\cA_S$ is at most a constant fraction of $\cA$, as
$\textstyle\cA_S=\E_{\Bb,\Bs}\left[\sum_iv_i\cdot\ind[i\in T^*(\Bb,\Bs)\wedge v_i<\frac{\cA}{2n}]\right]<\E_{\Bb,\Bs}\left[\frac{\cA}{2n}\cdot n\right]=\frac{\cA}{2}$. 

For $\cA_L$, in Lemma~\ref{lem:A_L}, we first prove that $\Pr_{b_i,s_i}[b_i\geq p_i\wedge p_i-s_i\geq 2n\cA]\leq \frac{1}{2n}$ holds for every $i$. This implies that in a standard FPP mechanism (where $h=1$) with $\theta_i^B=p_i,\theta_i^S=(p_i-2n\cA)^+$ for all $i$, the buyer purchases each item $i$ with probability at least $\frac{1}{2}$ if both the buyer and seller $i$ accept the posted prices.  First, we will need Lemma~\ref{lem:fpp}.

\begin{lemma}\label{lem:fpp}
Given any \cfpp mechanism $\MM$ with posted prices $\{\theta_i^B\}_{i\in [n]}$, $\{\theta_i^S\}_{i\in [n]}$ and $h=1$, suppose $\sum_i\Pr[b_i\geq \theta_i^B\wedge s_i\leq \theta_i^S]\leq \frac{1}{2}$. Then
$$\gft(\MM)\geq \frac{1}{2}\sum_i\E_{b_i,s_i}\left[(b_i-s_i)\cdot \ind[b_i\geq \theta_i^B\wedge s_i\leq \theta_i^S]\right].$$
\end{lemma}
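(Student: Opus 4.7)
The plan is to exploit the hypothesis $\sum_i \Pr[A_i] \leq \tfrac12$ (where $A_i$ denotes the event $b_i \geq \theta_i^B \wedge s_i \leq \theta_i^S$) together with independence across items to argue that each item $i$ is traded with probability at least $\tfrac12$ conditional on $A_i$. Concretely, for every $i\in[n]$ define the event $B_i = A_i \cap \bigcap_{j\neq i} \overline{A_j}$, i.e., $i$ is the \emph{unique} item for which both sides agree. First I would check that whenever $B_i$ occurs, the buyer's chosen bundle $S^*$ equals $\{i\}$: indeed, $S^* \subseteq R = \{j: s_j \leq \theta_j^S\}$, and since the buyer maximizes $\sum_{j\in S^*}(b_j - \theta_j^B)$ over feasible subsets of $R$, she never includes any $j\in R$ with $b_j < \theta_j^B$ (such a $j$ strictly decreases her surplus). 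Under $B_i$, every $j\neq i$ in $R$ satisfies $b_j < \theta_j^B$, so $S^* \subseteq \{i\}$; and because $\{i\}\in\cF$ (by downward-closedness) yields surplus $b_i - \theta_i^B \geq 0$, the buyer picks $S^*=\{i\}$ (using the tie-breaking convention of footnote~\ref{footnote:tie-breaking}). Hence item $i$ trades and contributes $(b_i - s_i)$ to the GFT, giving
\[
\gft(\MM) \;\geq\; \sum_{i=1}^n (b_i - s_i)\cdot \ind[B_i].
\]

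Next I would take expectations and exploit independence. Since $(b_j, s_j)$ are mutually independent across $j$, the events $\{A_j\}_j$ are mutually independent, and $(b_i-s_i)\cdot\ind[A_i]$ is measurable with respect to $(b_i,s_i)$ while $\bigcap_{j\neq i}\overline{A_j}$ is measurable with respect to $(b_j,s_j)_{j\neq i}$. Noting that $\ind[A_i]\cdot(b_i-s_i)\geq 0$ (since $A_i$ forces $b_i\geq\theta_i^B\geq\theta_i^S\geq s_i$), we therefore get
\[
\E\bigl[(b_i-s_i)\ind[B_i]\bigr] \;=\; \E\bigl[(b_i-s_i)\ind[A_i]\bigr]\cdot \Pr\!\Bigl[\bigcap_{j\neq i}\overline{A_j}\Bigr].
\]
By a union bound and the hypothesis, $\Pr\bigl[\bigcap_{j\neq i}\overline{A_j}\bigr] \geq 1 - \sum_{j\neq i}\Pr[A_j] \geq 1 - \tfrac12 = \tfrac12$. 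Summing over $i$ yields the claim.

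I do not anticipate a real obstacle: the only subtlety is verifying that $S^*=\{i\}$ on the event $B_i$, which rests on the downward-closedness of $\cF$ and the tie-breaking convention already set up in the paper. The proof does not need any selectability or structural assumption on $\cF$, and the constant $\tfrac12$ arises directly from the Weierstrass-style bound $\prod_{j\neq i}(1-\Pr[A_j])\geq 1 - \sum_{j\neq i}\Pr[A_j]$ together with the assumed budget $\sum_i \Pr[A_i]\leq \tfrac12$ on the aggregate trade probability.
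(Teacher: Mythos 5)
Your proof is correct and follows essentially the same route as the paper: condition on item $i$ being the unique item where both posted prices are accepted, use independence across items to factor, and apply a union bound with the hypothesis $\sum_i\Pr[A_i]\leq\tfrac12$ to get the factor $\tfrac12$. The extra care you take in verifying $S^*=\{i\}$ via downward-closedness and the tie-breaking convention is a fine elaboration of a step the paper states more briefly, not a different argument.
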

\begin{proof}
For any item $i$, the buyer will purchase item $i$ if both of the following events happen: 
\begin{enumerate}
\item $b_i\geq \theta_i^B$ and $s_i\leq \theta_i^S$;
\item For all items $k\not=i$, either $s_k>\theta_k^S$ or $b_k<\theta_k^B$. 
\end{enumerate}

By the union bound, the second event happens with probability at least $1-\sum_{k\not=i}\Pr[b_i\geq \theta_i^B\wedge s_i\leq \theta_i^S]\geq\frac{1}{2}$. Since both events are independent, we have

$$\gft(\MM)\geq \frac{1}{2}\sum_i\E_{b_i,s_i}\left[(b_i-s_i)\cdot \ind[b_i\geq \theta_i^B\wedge s_i\leq \theta_i^S]\right].$$

\end{proof}

\begin{lemma}\label{lem:A_L}
$\cA_L=\E_{\Bb,\Bs}\left[\sum_iv_i\cdot\ind[i\in T^*(\Bb,\Bs)\wedge v_i>2n\cA]\right]\leq 2\cdot \pp$.
\end{lemma}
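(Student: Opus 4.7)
The plan is to apply Lemma~\ref{lem:fpp} to a carefully chosen standard FPP mechanism. The large-value truncation $v_i > 2n\cA$ is exactly what is needed to make the buyer-seller ``match'' probability small enough to fit into Lemma~\ref{lem:fpp}'s hypothesis. The main obstacle is really the indicator $\ind[i \in T^*(\Bb,\Bs)]$ coupling different items together; fortunately, since we upper bound each term in the sum and the indicators only appear inside, we can simply drop this indicator and handle each $i$ independently.

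First I would establish the pointwise probability bound claimed in the preceding paragraph of the excerpt: $\Pr[b_i \geq p_i \wedge p_i - s_i > 2n\cA] \leq \tfrac{1}{2n}$. This follows by Markov's inequality applied to $v_i \geq 0$, once we observe that $\cA = \E_{\Bb,\Bs}\!\left[\max_{T\in\cF}\sum_{i\in T} v_i\right] \geq \E[v_i]$ (since $\cF$ is downward-closed and non-empty, $\{i\} \in \cF$). Thus $\Pr[v_i > 2n\cA] \leq \E[v_i]/(2n\cA) \leq \tfrac{1}{2n}$, and the event $\{v_i > 2n\cA\}$ is exactly $\{b_i \geq p_i \wedge p_i - s_i > 2n\cA\}$.

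Next I would construct the FPP mechanism $\MM$ with $h=1$, buyer price $\theta_i^B = p_i$, and seller price $\theta_i^S = (p_i - 2n\cA)^+$ for every $i$. For indices with $p_i - 2n\cA < 0$, the event $\{s_i \leq \theta_i^S\}$ has probability zero (since $s_i \geq 0$), and such items contribute $0$ to $\cA_L$, so we may restrict attention to the remaining indices. Summing the pointwise bound gives $\sum_i \Pr[b_i \geq \theta_i^B \wedge s_i \leq \theta_i^S] \leq n \cdot \tfrac{1}{2n} = \tfrac{1}{2}$, so the hypothesis of Lemma~\ref{lem:fpp} is satisfied, yielding
\[
\gft(\MM) \geq \tfrac{1}{2}\sum_i \E_{b_i,s_i}\!\left[(b_i - s_i)\cdot \ind[b_i \geq p_i \wedge s_i \leq p_i - 2n\cA]\right].
\]

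Finally I would bound $\cA_L$ against this quantity. Using $\ind[i \in T^*] \leq 1$ and the identity $\ind[v_i > 2n\cA] = \ind[b_i \geq p_i \wedge s_i < p_i - 2n\cA]$,
\[
\cA_L \leq \sum_i \E\!\left[(p_i - s_i)\cdot \ind[b_i \geq p_i \wedge s_i \leq p_i - 2n\cA]\right] \leq \sum_i \E\!\left[(b_i - s_i)\cdot \ind[b_i \geq p_i \wedge s_i \leq p_i - 2n\cA]\right],
\]
where the second inequality uses $b_i \geq p_i$ on the indicated event so that $b_i - s_i \geq p_i - s_i$. Combining with the previous display gives $\cA_L \leq 2\,\gft(\MM) \leq 2\,\pp$, as desired.
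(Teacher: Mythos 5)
Your proposal is correct and follows essentially the same route as the paper: the same FPP mechanism with $\theta_i^B=p_i$, $\theta_i^S=(p_i-2n\cA)^+$, the same Markov-type bound $\Pr[v_i\geq 2n\cA]\leq\frac{1}{2n}$ derived from $\cA\geq\E[v_i]$, and the same application of Lemma~\ref{lem:fpp} followed by $b_i-s_i\geq p_i-s_i$ on the trading event. The extra care you take with items where $p_i<2n\cA$ and with dropping the indicator $\ind[i\in T^*]$ is implicit in the paper's version but adds nothing substantively different.
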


\begin{proof}
Consider the FPP mechanism with $\theta_i^B=p_i,\theta_i^S=(p_i-2n\cA)^+$ for all $i$ (and $h=1$). 

Note that for every $i\in [n]$, it must hold that $\Pr_{b_i,s_i}[b_i\geq p_i\wedge p_i-s_i\geq 2n\cA]\leq \frac{1}{2n}$. In fact,

$$\cA\geq \E_{b_i,s_i}[(p_i-s_i)^+\cdot \ind[b_i\geq p_i]]\geq 2n\cA\cdot \Pr_{b_i,s_i}[b_i\geq p_i\wedge p_i-s_i\geq 2n\cA].$$
Thus by Lemma~\ref{lem:fpp} and the fact that $b_i-s_i\geq p_i-s_i$ when $b_i\geq p_i$, we have

$$\pp\geq \frac{1}{2}\sum_i\E_{b_i,s_i}\left[(p_i-s_i)\cdot \ind[b_i\geq p_i\wedge p_i-s_i\geq 2n\cA]\right]\geq \frac{1}{2}\cdot \cA_L.$$
\end{proof}

In Lemma~\ref{lem:A_M} we bound $\cA_M$, which is the primary challenge for this approximation.

\begin{lemma}\label{lem:A_M}
$\cA_M\leq O(\log(n))\cdot \gcfpp$.
\end{lemma}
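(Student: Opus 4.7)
Since only the range $v_i\in[\cA/(2n),2n\cA]$ contributes to $\cA_M$, I would dyadically bucket this range into $K=O(\log n)$ pieces. Set $L_k:=2^{k-1}\cdot \cA/(2n)$ and let $B_k(\Bb,\Bs):=\{i:v_i\in[L_k,2L_k)\}$. Then
\[
\cA_M \;\le\; \sum_{k=1}^{K} 2L_k\cdot \E\!\left[|T^*(\Bb,\Bs)\cap B_k|\right],
\]
so it suffices to exhibit, for each $k$, a \cfpp~mechanism $\MM_k$ whose GFT is $\Omega\!\left(L_k\cdot \E[|T^*\cap B_k|]\right)$.

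For each $k$, I would use the \cfpp~mechanism with buyer prices $\theta_i^B=p_i$, seller prices $\theta_i^S=p_i-L_k$ for every $i$, and subconstraint $\cF'_k=\{S\in\cF:|S|\ge h_k\}$ for an integer $h_k\ge 1$ to be chosen. Under $\MM_k$ item $i$ is simultaneously on the market and affordable exactly when $v_i\ge L_k$; call this random set $A_k$, and let $M_k:=\max\{|S|:S\in\cF,\,S\subseteq A_k\}$ be the maximum feasible affordable set size. Because $\cF$ is downward-closed and $T^*\cap B_k\subseteq A_k$ lies in $\cF$, we have $|T^*\cap B_k|\le M_k$ pointwise, hence $\E[|T^*\cap B_k|]\le \E[M_k]$. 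Moreover, whenever $M_k\ge h_k$ the buyer picks a feasible subset of $A_k$ of size at least $h_k$, and each traded item contributes $b_i-s_i\ge p_i-s_i\ge L_k$ to the GFT. Therefore
\[
\gft(\MM_k)\;\ge\; L_k\cdot h_k\cdot \Pr[M_k\ge h_k],
\]
and the task reduces to picking $h_k$ so that $h_k\,\Pr[M_k\ge h_k]=\Omega(\E[M_k])$.

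When $\E[M_k]\ge C$ for a sufficiently large constant $C$, I would use that $M_k$ is a $1$-Lipschitz, self-bounding function of the independent Bernoullis $\ind[v_i\ge L_k]$: adding or removing one element changes $M_k$ by at most one, and $\sum_i(M_k-M_k^{(i)})\le M_k$ when $M_k^{(i)}$ deactivates element $i$, since only the $M_k$ elements in an optimal feasible subset are ``critical''. A Boucheron--Lugosi--Massart lower tail bound then gives $\Pr[M_k\le \E[M_k]/2]\le \exp(-\Omega(\E[M_k]))\le 1/2$, so setting $h_k=\lceil\E[M_k]/2\rceil$ yields $\gft(\MM_k)=\Omega(L_k\cdot \E[M_k])$. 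When $\E[M_k]<C$, take $h_k=1$ and let $\mu_k:=\sum_i\Pr[v_i\ge L_k]\ge \E[M_k]$; by independence, $\Pr[A_k\ne\emptyset]\ge 1-e^{-\mu_k}$, which is $\Omega(1)$ if $\mu_k\ge 1$ and at least $\mu_k/2\ge \E[M_k]/2$ otherwise, so again $\gft(\MM_k)=\Omega(L_k\cdot \E[M_k])$. Since every $\MM_k$ is a \cfpp~mechanism, $\gft(\MM_k)\le \gcfpp$, and summing over the $K=O(\log n)$ buckets produces $\cA_M=O(\log n)\cdot \gcfpp$. The main technical obstacle is precisely the concentration step in the large-mean regime: we need a quantitative lower tail for the maximum feasible affordable set size on an arbitrary downward-closed family, and the self-bounding structure of this functional is what drives it---this is exactly what lets us dispense with the $(\delta,\eta)$-selectability assumption of the previous subsection at the cost of only an extra $O(\log n)$ factor.
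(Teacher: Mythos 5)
Your proposal is correct and follows essentially the same route as the paper's proof: dyadic bucketing of $[\cA/2n,2n\cA]$ into $O(\log n)$ levels, a \cfpp{} mechanism with buyer prices $p_i$, seller prices shifted down by $L_k$, and a size threshold set to roughly half the expected maximum feasible affordable set size, plus a separate elementary argument (threshold $1$) when that expectation is below a constant. The only substantive difference is the concentration tool: you invoke the Boucheron--Lugosi--Massart lower tail for the self-bounding cardinality $M_k$, whereas the paper proves the needed anti-concentration for the $[1/2,1]$-weighted analogue $Z(t)$ via Paley--Zygmund together with the Efron--Stein variance bound $\var[Z]\leq \E[Z]$ --- the same self-bounding structure in a second-moment form --- so your argument is a valid substitute yielding the identical $O(\log n)\cdot\gcfpp$ bound.
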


\begin{proof}

We further divide the interval $[\cA/2n,2n\cA]$ into $O(\log(n))$ buckets, where in each bucket $k$, $v_i$ falls in the range $[L_k,2L_k]$ for some $L_k$. Formally, for any $k\in \{1,2,...,\lceil 2\log(n)+2\rceil\}$, let $L_k=2^k\cdot \frac{\cA}{4n}$. We have
$$\cA_M\leq\sum_{k=1}^{\lceil 2\log(n)+2\rceil}\E_{\Bb,\Bs}\left[\sum_iv_i\cdot\ind[i\in T^*(\Bb,\Bs)\wedge L_k\leq v_i\leq 2L_k]\right].$$

In the rest of the proof, we will show that for any $k$, there exists some constant $c>0$ such that
$$\E_{\Bb,\Bs}\left[\sum_iv_i\cdot\ind[i\in T^*(\Bb,\Bs)\wedge L_k\leq v_i\leq 2L_k]\right]\leq c\cdot \gcfpp.$$

Fix any $k$. 
For every $i\in [n]$, let $t_i^{(k)}=\frac{v_i}{2L_k}\cdot \ind[L_k\leq v_i\leq 2L_k]$. This is a random variable in $[\frac{1}{2},1]$. Note that all random variables $t=\{t_i^{(k)}\}_{i\in [n]}$ are independent. Let $Z(t)=\max_{T\in \cF}\sum_{i\in T}t_i^{(k)}$. Then the contribution to $\cA_L$ from values in this range is bounded by the expectation of the random variable $Z(t)$:
$$\E_{\Bb,\Bs}\left[\sum_iv_i\cdot\ind[i\in T^*(\Bb,\Bs)\wedge L_k\leq v_i\leq 2L_k]\right]\leq \E_{\Bb,\Bs}\left[\max_{S\in \cF}\sum_{i\in S}v_i\cdot\ind[L_k\leq v_i\leq 2L_k]\right]=2L_k\E_t[Z(t)].$$


\notshow{
The proof of Lemma~\ref{lem:Z-concentrate} can be found in Section~\ref{subsec:Z-concentrate}.

\begin{lemma}\label{lem:Z-concentrate}
For any $c\in (0,1)$,
$$\Pr_t[Z(t)\geq c\cdot \E[Z(t)]]\geq \frac{(1-c)^2}{1+1/(2\cdot\E[Z(t)])}$$
\end{lemma}
}

Now consider the \cfpp mechanism with $\theta_i^B=p_i$ and $\theta_i^S=(p_i-L_k)^+$ for every $i$ (the threshold $h$ is determined later). Then in the mechanism, whenever the buyer purchases an item, the contributed GFT is at least $L_k$. Thus it is sufficient to show that the expected size of the purchasing set is at least a constant factor of $\E[Z(t)]$. Note that $Z(t)$ is a random variable on $t$, which is the maximum weight feasible set over $n$ independent random variables in $[0,1]$. In Lemma~\ref{lem:Z-concentrate}, we prove that $Z(t)$ concentrates near its mean. The proof is postponed to Section~\ref{subsec:Z-concentrate}.

\begin{lemma}\label{lem:Z-concentrate}
For any $c\in (0,1)$,
$$\Pr_t[Z(t)\geq c\cdot \E[Z(t)]]\geq \frac{(1-c)^2}{1+1/\E[Z(t)]}.$$
\end{lemma}

We first suppose that $\E[Z(t)]\geq \frac{1}{4}$. By applying Lemma~\ref{lem:Z-concentrate} with $c=\frac{1}{2}$, we get
$$\Pr_t\left[Z(t)\geq \frac{\E[Z(t)]}{2}\right]\geq \frac{1}{20}.$$
Let $h=\max\left\{\lfloor\frac{\E[Z(t)]}{2}\rfloor,1\right\}$. In mechanism $\MM_k$, note that for every $i$, $t_i^{(k)}>0$ implies that item $i$ is on the market and that the buyer can afford it. With probability at least $\frac{1}{20}$, $Z(t)\geq h$, which implies that the item set $\{i\mid i\in \argmax_{S\in \cF}\sum_{i\in S}t_i^{(k)}\wedge t_i^{(k)}>0\}$ is a feasible set of size at least $h$.  (Recall that all $t_i^{(k)}$ are in $[\frac{1}{2},1]$). In this scenario, the buyer will purchase a set of items of size at least $h$. For every item $i$ she purchases, the contributed GFT is $b_i-s_i\geq \theta_i^B-\theta_i^S=L_k$. Thus, $\gft(\MM_k)\geq \frac{1}{20}\cdot h\cdot L_k$. Readers who are familiar with mechanism design may notice that the role of the size threshold $h$ is similar to an ``entry fee'' in the posted price mechanism in auctions~\cite{BabaioffILW14,CaiDW16,CaiZ17,ChawlaM16,RubinsteinW15,Yao15}, though the buyer doesn't have to pay extra money to attend the auction. It guarantees that the buyer will purchase at least $h$ items when she can afford it, as otherwise she gets no utility.


When $\E[Z(t)]\geq \frac{1}{4}$, we have $h\geq \frac{\E[Z(t)]}{4}$. Thus
$$\E_{\Bb,\Bs}\left[\sum_iv_i\cdot\ind[i\in T^*(\Bb,\Bs)\wedge L_k\leq v_i\leq 2L_k]\right]\leq 160\cdot \gcfpp.$$

Now we consider the case where $\E[Z(t)]<\frac{1}{4}$. For every $i$, let $q_i=\Pr[t_i^{(k)}>0]=\Pr_{b_i,s_i}[b_i\geq p_i\wedge p_i-s_i\in [L_k,2L_k]]$. Then it holds that
$$\Pr[\forall i, t_i^{(k)}=0]=\prod_{i=1}^n(1-q_i)>\frac{1}{2}.$$
This is because if there exists $i$ such that $t_i^{(k)}>0$, then $Z(t)=\max_{T\in \cF}\sum_{i\in T}t_i^{(k)}\geq \frac{1}{2}$ as $t_i^{(k)}\in [\frac{1}{2},1]$ for every $i$. Thus if $\Pr[\forall i, t_i^{(k)}=0]\leq \frac{1}{2}$, then $\E[Z(t)]\geq \frac{1}{4}$, which leads to a contradiction.

Consider the \cfpp mechanism $\MM$ with $\theta_i^B=p_i$, $\theta_i^S=(p_i-L_k)^+$, and $h=1$. For every $i$, define event $E_i=\{t\mid t_i^{(k)}>0\wedge t_j^{(k)}=0,\forall j\not=i\}$. Note that $t_i^{(k)}>0$ implies that seller $i$ accepts price $\theta_i^S$ and also the buyer can afford item $i$. Under event $E_i$, there is at least one item on the market that the buyer can afford, i.e. item $i$. Thus the buyer must purchase \emph{some} item $j$ on the market that she can afford (possibly item $i$). For this item $j$, we have $b_j\geq \theta_j^B$ and $s_j\leq \theta_j^S$. Thus the contributed GFT is at least $b_j-s_j\geq p_j-s_j\geq L_k$. Since all $E_i$s are disjoint events, we have
$$\gft(\MM)\geq \sum_i \Pr[E_i]\cdot L_k=L_k\cdot \sum_i q_i\cdot \prod_{j\not=i}(1-q_j)\geq L_k\cdot \sum_i q_i\cdot \prod_{j}(1-q_j)>\frac{1}{2}L_k\cdot \sum_i q_i,$$
where the equality uses the fact that all $t_i^{(k)}$s are independent. On the other hand, since $t_i^{(k)}\leq 1$ for any $i$,
$$\E[Z(t)]\leq \E\left[\sum_i t_i^{(k)}\cdot\ind[t_i^{(k)}>0]\right]\leq \sum_i q_i.$$
Thus 

$$\E_{\Bb,\Bs}\left[\sum_iv_i\cdot\ind[i\in T^*(\Bb,\Bs)\wedge L_k\leq v_i\leq 2L_k]\right]\leq 2L_k\cdot \E[Z(t)]\leq 4\cdot \gcfpp.$$

Summing the inequality over all $k$ finishes the proof.

\end{proof}

\begin{prevproof}{Lemma}{lem:log(n)-downward-close}
By Lemmas~\ref{lem:A_L},~\ref{lem:A_M}, and the fact that $\cA_S\leq \frac{\cA}{2}$, we have that
$$\cA\leq 2(\cA_M+\cA_L)\leq O(\log(n))\cdot \gcfpp.$$
\end{prevproof}

Theorem~\ref{thm:downward-close} summarizes our result for a general constrained-additive buyer. It directly follows from Lemmas~\ref{lem:breaking-terms-1},~\ref{lem:breaking-terms-2},~\ref{lem:buyer-surplus-main-body}, and~\ref{lem:log(n)-downward-close}.  

\begin{theorem}\label{thm:downward-close}
For any downward-closed constraint $\cF$,
$\fb\leq O(\log(n)\cdot \log(\frac{1}{r}))\cdot \gcfpp$.
\end{theorem}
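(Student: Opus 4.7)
The plan is to assemble Theorem~\ref{thm:downward-close} directly from the four cited lemmas, since together they already produce the desired bound. I would begin by invoking the decomposition from Section~\ref{subsec:breaking terms} that upper bounds $\fb$ by the sum $\circled{1} + \circled{2}$, where $\circled{1}$ concerns the event $s_i < x_i$ and $\circled{2}$ concerns the event $s_i \geq y_i$. Then I would apply Lemma~\ref{lem:breaking-terms-1} to bound $\circled{1}$ by $2\cdot(\circled{3} + \circled{4})$ and Lemma~\ref{lem:breaking-terms-2} to bound $\circled{2}$ by $2\cdot(\circled{5} + \circled{6})$, where each of $\circled{3},\circled{4},\circled{5},\circled{6}$ is itself a sum of $\lceil \log(2/r) \rceil$ terms indexed by $j$.

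Next I would handle the buyer-surplus side. For each fixed $j$, the $j$-th term in $\circled{3}$ (resp.\ $\circled{5}$) has exactly the form addressed by Lemma~\ref{lem:buyer-surplus-main-body} with prices $p_i = \theta_{ij}$ (resp.\ $p_i = \theta_{ij}'$). That lemma yields an upper bound of $\pp \leq \gcfpp$ per term. Summing over the $O(\log(1/r))$ values of $j$ shows that $\circled{3} + \circled{5} = O(\log(1/r)) \cdot \gcfpp$.

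For the seller-surplus side, I would apply Lemma~\ref{lem:log(n)-downward-close} to each $j$-th summand in $\circled{4}$ (with $p_i = \theta_{ij}$) and in $\circled{6}$ (with $p_i = \theta_{ij}'$). Since the inner maximum $\max_{S \in \cF}$ exactly matches the hypothesis of Lemma~\ref{lem:log(n)-downward-close}, each such term is bounded by $O(\log n) \cdot \gcfpp$. Summing across the $O(\log(1/r))$ values of $j$ gives $\circled{4} + \circled{6} = O(\log n \cdot \log(1/r)) \cdot \gcfpp$.

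Combining the two bounds yields $\fb \leq 2(\circled{3}+\circled{4}+\circled{5}+\circled{6}) = O(\log n \cdot \log(1/r)) \cdot \gcfpp$, which is the claim. There is no real obstacle at this stage: the substantive work lies in the four lemmas themselves (the delicate decomposition through the events $\overline{E}_{ij}$, the concentration estimate for $Z(t)$, and the bucketing argument in Lemma~\ref{lem:A_M}). Once those are established, the theorem reduces to routine aggregation, with the seller-surplus side dominating and contributing the extra $\log n$ factor over the $O(\log(1/r))$ selectable-case bound of Theorem~\ref{thm:log1/r}.
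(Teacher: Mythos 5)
Your proposal is correct and follows exactly the paper's route: the paper proves Theorem~\ref{thm:downward-close} by citing precisely Lemmas~\ref{lem:breaking-terms-1}, \ref{lem:breaking-terms-2}, \ref{lem:buyer-surplus-main-body}, and \ref{lem:log(n)-downward-close} and aggregating them as you describe, with the buyer-surplus terms contributing $O(\log(1/r))\cdot\gcfpp$ and the seller-surplus terms contributing the dominant $O(\log n\cdot\log(1/r))\cdot\gcfpp$.
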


\subsubsection{Proof of Lemma~\ref{lem:Z-concentrate}}\label{subsec:Z-concentrate}

We recall the statement of Lemma~\ref{lem:Z-concentrate}: \emph{For any $c\in (0,1)$,}
$$\Pr_t[Z(t)\geq c\cdot \E[Z(t)]]\geq \frac{(1-c)^2}{1+1/\E[Z(t)]}.$$

Recall that $Z(t)=\max_{T\in \cF}\sum_{i\in T}t_i^{(k)}$. In the proof we will omit the superscript $k$ as it is fixed. The random seed $t$ is also omitted if clear from context.

\begin{lemma}\label{lem:P-Z inequ}
(Paley-Zygmund Inequality~\cite{paley1932note}) For any random variable $Z\geq 0$ with finite variance, for any $c\in [0,1]$,
$$\Pr[Z\geq c\cdot \E[Z]]\geq (1-c)^2\cdot\frac{\E[Z]^2}{\var[Z]+\E[Z]^2}.$$
\end{lemma}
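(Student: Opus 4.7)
The plan is to prove this classical Paley--Zygmund inequality via a short two-step argument combining a trivial truncation bound with the Cauchy--Schwarz inequality. First I would decompose the mean of $Z$ according to whether $Z$ falls below or at/above the target threshold $c\cdot \E[Z]$:
$$\E[Z] \;=\; \E\!\left[Z\cdot \ind[Z<c\E[Z]]\right] \;+\; \E\!\left[Z\cdot \ind[Z\geq c\E[Z]]\right].$$
The first summand is trivially at most $c\E[Z]$ (since $Z<c\E[Z]$ on the relevant event and $Z\geq 0$ overall), which immediately yields the key lower bound
$$\E\!\left[Z\cdot \ind[Z\geq c\E[Z]]\right] \;\geq\; (1-c)\,\E[Z].$$

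Next I would apply the Cauchy--Schwarz inequality to the random variables $Z$ and $\ind[Z\geq c\E[Z]]$, using that the square of an indicator equals itself:
$$\E\!\left[Z\cdot \ind[Z\geq c\E[Z]]\right]^2 \;\leq\; \E[Z^2]\cdot \E\!\left[\ind[Z\geq c\E[Z]]\right] \;=\; \E[Z^2]\cdot \Pr[Z\geq c\E[Z]].$$
Substituting the lower bound from the first step into the left-hand side and rearranging gives
$$\Pr[Z\geq c\E[Z]] \;\geq\; \frac{(1-c)^2\,\E[Z]^2}{\E[Z^2]} \;=\; (1-c)^2\cdot \frac{\E[Z]^2}{\var[Z]+\E[Z]^2},$$
where the final equality is just $\E[Z^2]=\var[Z]+\E[Z]^2$.

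There is no real obstacle here: the inequality is a textbook result and the proof is essentially forced once one notices that the indicator of the desired event should be paired with $Z$ in a Cauchy--Schwarz step so that one factor becomes the probability we want to lower-bound. The only points worth flagging are that finite variance is used to ensure $\E[Z^2]$ is finite and the Cauchy--Schwarz step is valid, and that non-negativity of $Z$ is used to conclude $\E[Z\cdot \ind[Z<c\E[Z]]]\leq c\E[Z]$ from $Z<c\E[Z]$ on that event.
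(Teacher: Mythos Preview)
Your proof is correct and is the standard textbook argument for the Paley--Zygmund inequality. The paper does not give its own proof of this lemma; it simply states it as a classical result with a citation to~\cite{paley1932note}, so there is nothing to compare against.
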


To use Lemma~\ref{lem:P-Z inequ}, we only need to show an upper bound on $\var[Z(t)]$. 

\begin{lemma}
$\var[Z(t)]\leq \E[Z(t)]$.
\end{lemma}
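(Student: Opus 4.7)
The plan is to establish $\var[Z(t)] \leq \E[Z(t)]$ via the Efron--Stein inequality, exploiting a self-bounding structure of the max-weight function $Z$. For each $i \in [n]$, define $Z^{(i)}(t)$ to be $Z$ evaluated with $t_i^{(k)}$ replaced by $0$; crucially, $Z^{(i)}$ depends only on $t_{-i}$. The key facts I would verify are (a) $0 \leq Z - Z^{(i)} \leq 1$ pointwise, and (b) $\sum_i (Z - Z^{(i)}) \leq Z$. Fact (a) holds because raising a coordinate from $0$ to $t_i^{(k)} \geq 0$ cannot decrease the max, while fixing an optimal set $T^*$ for $Z$ gives $Z^{(i)} \geq \sum_{j \in T^*, j \neq i} t_j^{(k)} = Z - t_i^{(k)} \cdot \ind[i \in T^*]$, so $Z - Z^{(i)} \leq t_i^{(k)} \leq 1$. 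Fact (b) follows by summing this last bound over $i$: $\sum_i (Z - Z^{(i)}) \leq \sum_{i \in T^*} t_i^{(k)} = Z$.

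With the two properties in hand, I would invoke the standard corollary of Efron--Stein that, for independent $\{t_i^{(k)}\}$ and any family of functions $f_i$ depending only on $t_{-i}$,
$$\var[Z] \;\leq\; \sum_i \E\!\left[(Z - f_i(t_{-i}))^2\right].$$
Taking $f_i = Z^{(i)}$ and using $(Z - Z^{(i)})^2 \leq Z - Z^{(i)}$ (valid since $Z - Z^{(i)} \in [0,1]$ by fact (a)) together with fact (b) yields
$$\var[Z] \;\leq\; \sum_i \E\!\left[(Z - Z^{(i)})^2\right] \;\leq\; \sum_i \E[Z - Z^{(i)}] \;\leq\; \E[Z],$$
which is the desired inequality.

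The only non-routine step is verifying the self-bounding property, and in particular fact (b); once that is established, the Efron--Stein reduction and the inequality $x^2 \leq x$ on $[0,1]$ finish the proof mechanically. I don't anticipate significant obstacles here, since the max-weight structure of $Z$ over a downward-closed $\cF$ is exactly the setting in which such self-bounding arguments are cleanest: each coordinate's marginal contribution is witnessed by the optimal set, and the total marginal contribution telescopes back to $Z$ itself.
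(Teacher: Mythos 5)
Your proof is correct and follows essentially the same route as the paper: Efron--Stein with a leave-one-out comparison function, the bound $(Z-Z^{(i)})^2\leq Z-Z^{(i)}$ from the $[0,1]$ range, and the self-bounding/telescoping step $\sum_i(Z-Z^{(i)})\leq Z$ witnessed by the optimal set. Your $Z^{(i)}$ (zeroing out coordinate $i$) coincides with the paper's $Z_i(t_{-i})=\max_{T\in\cF,\,i\notin T}\sum_{j\in T}t_j$ since the weights are nonnegative and $\cF$ is downward-closed, so the two arguments are the same up to phrasing.
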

\begin{proof}
By the Efron-Stein Inequality~\cite{efron1981jackknife},

$$\var[Z(t)]\leq \frac{1}{2}\sum_i \E_{t_i,t_i',t_{-i}}[(Z(t_i,t_{-i})-Z(t_i',t_{-i}))^2]=\sum_i \E_{t_{-i}}[\var[Z(t)|t_{-i}]].$$
Here $t_i'$ shares the same distribution with $t_i$ (a fresh sample). Note that for every fixed $t_{-i}$, $\var_{t_i}[Z(t_i,t_{-i})]\leq \E_{t_i}[(Z(t_i,t_{-i})-a)^2]$ for any real constant $a$. For every $i$, let $Z_i(t_{-i})=\max_{T\in \cF,i\not=T}\sum_{j\in T}t_j$, which only depends on $t_{-i}$. We have
$$\var[Z(t)]\leq \sum_i \E_{t_{-i}}[\var[Z(t)|t_{-i}]]\leq \sum_i \E[(Z(t)-Z_i(t_{-i}))^2]\leq \sum_i \E[Z(t)-Z_i(t_{-i})],$$
where the last inequality follows from the fact that $Z_i(t_{-i})\leq Z(t)\leq Z_i(t_{-i})+1$, as every random variable $t_j\in [\frac{1}{2},1]$.

Now fix any $t$. Let $T^*=\argmax_{T\in \cF}\sum_{j\in T}t_j$. Then for every $i$, by the definition of $Z_i$, $\sum_{j\in T^*\backslash\{i\}}t_j\leq Z_i(t_{-i})$. Thus $$\sum_iZ_i(t_{-i})\geq \sum_i\sum_{j\in T^*\backslash\{i\}}t_j=(n-1)\cdot \sum_{j\in T^*}t_j=(n-1)\cdot Z(t).$$

Hence,
$$\var[Z(t)]\leq \sum_i \E[Z(t)-Z_i(t_{-i})]\leq \E[Z(t)].$$
\end{proof}

\section{An Unconditional Approximation for a Single Constrained-Additive Buyer}\label{sec:log n approx}

In this section, we prove Theorem~\ref{thm:logn}, an unconditional $O(\log n)$-approximation when the buyer's feasibility constraint is selectable, and an unconditional 
$O(\log^2(n))$-approximation for a general constrained-additive buyer---without dependence on distributional parameters. The result combines the $\log (1/r)$-approximation and a novel mechanism---the \emph{seller adjusted posted price mechanism}.  


\begin{theorem}\label{thm:logn}
Suppose the buyer's feasibility constraint $\cF$ is $(\delta,\eta)$-selectable for some $\delta,\eta\in (0,1)$. Then there exists a DSIC, ex-post IR, ex-ante WBB mechanism $\MM$ such that
$\opt\leq O(\frac{\log n}{\delta\cdot \eta})\cdot \gft(\MM).$  Moreover, for a general constrained-additive buyer, there exists a DSIC, ex-post IR, ex-ante WBB mechanism $\MM$ such that
$\opt\leq O(\log^2 (n))\cdot \gft(\MM).$
\end{theorem}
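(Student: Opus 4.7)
The plan is to combine the $\log(1/r)$-approximation with a benchmark decomposition for the second-best, so that the factor of $\log(1/r)$ becomes $\log n$ on the ``likely to trade'' items, and then to handle the ``unlikely to trade'' items separately. Concretely, partition the items into $L=\{i\in[n]:r_i\geq 1/n\}$ and $H=[n]\setminus L$. First I would establish a stitching lemma of the form
\[
\opt \;\leq\; \fb(L) + \textsc{Profit}(H),
\]
where $\fb(L)$ is the first-best GFT restricted to items in $L$ with the constraint $\cF$ projected onto $L$, and $\textsc{Profit}(H)$ is the optimal revenue obtainable from selling items in $H$ in a one-sided auction (the ``super seller'' auction) to the constrained-additive buyer, where seller $i$'s cost is treated as a production cost. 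The proof of this stitching lemma would first bound $\opt$ by the optimal profit of the super seller auction on all of $[n]$ (a standard duality observation for BIC/IR/WBB mechanisms in two-sided markets), and then apply a marginal-mechanism-style decomposition that relates the optimal profit on $[n]$ to $\fb(L) + \textsc{Profit}(H)$.

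Next I would bound each term algorithmically. For $\fb(L)$, I would invoke Theorem~\ref{thm:log1/r} (respectively Theorem~\ref{thm:downward-close}) on the sub-market restricted to $L$; because $r_i\geq 1/n$ for every $i\in L$, the parameterized approximation of $O(\log(1/r)/(\delta\eta))$ collapses to $O(\log n/(\delta\eta))$ in the $(\delta,\eta)$-selectable case and to $O(\log^2 n)$ in the general constrained-additive case. The approximating mechanism here is a \cfpp~mechanism $\MM_L$ that ignores items in $H$, so it is automatically DSIC, ex-post IR, and ex-post WBB. For $\textsc{Profit}(H)$, I would build a \Msapp~mechanism $\MM_H$ achieving an $O(\log n)$ approximation; the critical fact making this possible is that $\sum_{i\in H} r_i \leq |H|\cdot 1/n \leq 1$, so in expectation at most one item in $H$ can ever trade, reducing the multi-dimensional profit maximization over $H$ to an essentially unit-demand profit problem that admits a logarithmic approximation via appropriately set seller-dependent buyer prices.

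I would finally obtain the claimed mechanism $\MM$ by randomizing uniformly over $\MM_L$ and $\MM_H$ (or equivalently running the better of the two with probability one each, losing a factor of $2$). Since $\MM_L$ and $\MM_H$ each operate on disjoint item sets, linearity gives
\[
\gft(\MM) \;\geq\; \tfrac12\bigl(\gft(\MM_L)+\gft(\MM_H)\bigr)
\;\geq\; \Omega\!\bigl(\tfrac{\delta\eta}{\log n}\bigr)\cdot \opt
\]
in the selectable case, and the analogous $\Omega(1/\log^2 n)$ in the general case. DSIC, ex-post IR, and ex-ante WBB are preserved by the randomization because each component already satisfies the first two and the last is a linear (expectation-level) property.

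The main obstacle will be the construction and analysis of $\MM_H$: the \Msapp~mechanism posts buyer prices that depend on the sellers' reports, and even if the mappings $\theta_i(\cdot)$ are bi-monotonic (so that Myerson's payment identity can be applied to the single-dimensional sellers), budget balance is far from automatic. I would handle this via the allocation coupling argument outlined in Section~\ref{sec:techniques}: identify an auxiliary (possibly non-BB) allocation $q$, show that the realized allocation of $\MM_H$ sandwiches between $q/4$ and $q/2$, and use the upper sandwich to upper bound payments to sellers while using the lower sandwich to lower bound payments from the buyer. A secondary, though lighter, obstacle is verifying that the super-seller-profit upper bound together with the marginal mechanism decomposition indeed yields the stitching inequality on the right subset $L$; this requires care because the constraint $\cF$ couples the two sides, and one must verify that projecting $\cF$ onto $L$ does not lose more than a constant factor in the benchmark.
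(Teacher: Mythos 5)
There is a genuine gap, and it is in the very first step of your plan: your stitching inequality $\opt \leq \fb(L) + \textsc{Profit}(H)$ drops a term that the paper's proof cannot do without. The ``standard duality observation'' does \emph{not} bound $\opt$ by the super seller profit alone; it bounds it by the super seller profit \emph{plus} the optimal buyer utility of the super buyer procurement auction (Lemma~\ref{lem:ub-opt}: $\opt \leq \opts + \optb$). The $\optb$ term is the buyer-surplus part of the GFT decomposition and is not dominated by any profit benchmark. Your inequality is in fact false: take bilateral trade ($n=1$) with $b,s$ i.i.d.\ uniform on $[0,1]$. Then $r_1=1/2<1/n$, so your ``likely'' set $L$ is empty and your claim reads $\opt\leq\opts$; but the Myerson--Satterthwaite second best trades when $b-s\geq 1/4$ and has GFT $9/64\approx 0.14$, while the optimal super-seller profit is $\E_s[(1-s)^2/4]=1/12\approx 0.083$. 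Consequently your final mechanism, a mixture of the \cfpp~mechanism on the likely items and the \Msapp~mechanism on the unlikely items, cannot cover the whole benchmark: the paper must also run a third mechanism, the multi-dimensional generalization of the buyer offering mechanism (Lemma~\ref{lem:mechanism for OPT-B}), whose GFT is at least $\optb=\E_{\Bb,\Bs}[\max_{S\in\cF}\sum_{i\in S}(b_i-\widetilde{\tau}_i(s_i))^+]$, and the final benchmark is the three-part bound of Lemma~\ref{lem:UB of second best GFT}. This mechanism is neither an FPP/\cfpp~nor a \Msapp~mechanism, so it is not subsumed by the two components you propose.

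The rest of your outline does track the paper: the marginal-mechanism decomposition is applied to $\opts$ (not to $\opt$ directly), with the likely-to-trade items charged to the first-best (and hence to Theorem~\ref{thm:log1/r} or~\ref{thm:downward-close}, turning $\log(1/r)$ into $\log n$), and the unlikely-to-trade items charged to a profit benchmark that the \Msapp~mechanism approximates; the budget-balance issue for \Msapp~is indeed handled by the $q/4$--$q/2$ allocation coupling you describe. Two smaller corrections: the marginal mechanism lemma for profit (Lemma~\ref{lem:separation}) is exact under the restriction $\cF\big|_T$, so no constant is lost in the projection; and the unlikely-items step is a two-stage argument -- first $\opts(L,\cF\big|_L)\leq\opts(L,\text{ADD})\leq O(\log|L|)\cdot\sum_{i\in L}\E[(\tilde{\varphi}_i(b_i)-s_i)^+]$ via the Cai--Zhao duality benchmark, and only then does the \Msapp~mechanism pick up $\Omega(1)\cdot\sum_{i\in L}\E[(\tilde{\varphi}_i(b_i)-s_i)^+]$, using that $\Pr[b_j<s_j]\geq 1-1/n$ for all $j\in L$ so that with constant probability at most one unlikely item is tradeable.
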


\subsection{An Upper Bound of the Second-Best GFT}\label{sec:UB of SB}
Formally, we use $\opt$ to denote the optimal GFT attainable by any BIC, IR, ex-ante WBB mechanism. Notice that the GFT of any two-sided market mechanism can be broken down into the buyer's expected utility of this mechanism, plus the sum of all sellers' expected utilities (or profit), plus the difference between buyer's and sellers' expected payment. We show that the $\opt$ is upper bounded by the sum of the designers' utilities in two related \textbf{one-sided markets}: the \emph{super seller auction} and the \emph{super buyer procurement auction}. 

\vspace{-.15in}
\paragraph{Super Seller Auction.} Consider a one-sided market, where the designer is the super seller who owns all the items, replacing all the original sellers. 
The buyer is the same as in our two-sided market setting. The super seller designs a mechanism to sell the items to the buyer. The main difference between the super seller auction and the original two-sided market is that the mechanism only needs to be BIC and IR for the buyer, and does not have any incentive compatibility constraint for the super seller.  We use $\opts$ to denote the maximum profit (revenue minus her cost) achievable by any BIC and IR mechanism in the super seller auction. 

To avoid ambiguity in further proofs, for every subset $T\subseteq [n]$ and downward-closed feasibility constraint $\cJ$ with respect to $T$, we let $\opts(T,\mathcal{J})$ denote the optimal profit in the following super seller auction: the super seller owns the set of items in $T$ and  has cost $s_i\sim \DD_i^S$ for every item $i\in T$. The buyer has value $b_i\sim \DD_i^B$ for every item $i\in T$ and is additive subject to constraint $\cJ$. We slightly abuse notation and write $\opts(T,\text{ADD})$ if the buyer is additive ($\cJ=2^T$) and $\opts(T,\text{UD})$ if the buyer is unit-demand ($\cJ=\{\{i\}:~i\in T\}$). Clearly, $\opts=\opts([n],\mathcal{F})$.

\vspace{-.15in}
\paragraph{Super Buyer Procurement Auction.}

Similarly, let the \emph{super buyer procurement auction} be the one-sided market where the super buyer (same as the real buyer) designs the mechanism to procure items from the sellers. Here the mechanism only needs to be BIC and IR for all of the sellers, but not the buyer. 
We use $\optb$ to denote the maximum utility (value minus payment) of the super buyer attainable by any BIC and IR mechanism in the super buyer procurement auction. 

First, we extend the upper bound of Brustle et al.~\cite{BrustleCWZ17} to our multi-dimensional setting,
\begin{align*}
    \opt&\leq \optb+\opts  & \text{(Lemma~\ref{lem:ub-opt})}
\end{align*}
and then, we prove an analog of the ``Marginal Mechanism Lemma''~\cite{CaiH13,HartN12} for the optimal profit (Lemma~\ref{lem:separation}). The proofs of both extensions appear in Appendix~\ref{apx:UB of SB}. We partition the items into the set of ``likely to trade'' items, that is, items with trade probability $r_i=\Pr_{b_i,s_i}[b_i\geq s_i]\geq 1/n$, and the ``unlikely to trade'' items. We can bound $\opts$ by the first-best GFT of the ``likely to trade'' items and the optimal profit of the super seller auction with the ``unlikely to trade'' items, and then use this to to decompose $\opts$ further, giving
\begin{align*}
    \opt&\leq \optb+\opts(L,\cF\big|_L)+\fb(H,\cF\big|_H) & \text{(Lemma~\ref{lem:likely trade and unlikely trade})}\\ 
    &\leq \optb+ \opts(L,\cF\big|_L)+O\left(\frac{\log n}{\delta\cdot \eta}\right)\cdot \gcfpp. & \text{(Theorem~\ref{thm:log1/r})}
\end{align*}


Of course, we can  Theorem~\ref{thm:downward-close} to instead use an $O\left(\log^2(n)\right)$-factor for a general constrained-additive buyer. 
It is well known that in multi-item auctions, the revenue of selling the items separately is a $O(\log n)$-approximation to the optimal revenue when there is a single additive buyer~\cite{LiY13}. Cai and Zhao~\cite{CaiZ19} provide an extension of this $O(\log n)$-approximation to profit maximization.  We build on this in Section~\ref{sec:UB of unlikely to trade items} to upper bound the $\opts(L,\cF\big|_L)$ term, where with $|L|$ items, we get a $\log(|L|)$ factor (Lemma~\ref{lem:unlikley trade item upper bound}).

All together, this gives the following upper bound on the second-best GFT.

\begin{lemma}[Upper Bound on Second-Best GFT]\label{lem:UB of second best GFT}
Define $H=\{i\in [n]:r_i\geq \frac{1}{n}\}$ and $L=[n]\backslash H=\{i\in [n]:r_i<\frac{1}{n}\}$. Suppose the buyer's feasibility constraint $\cF$ is 
$(\delta,\eta)$-selectable for some $\delta,\eta\in (0,1)$. Then 
\begin{align*}
    \opt \leq \optb+ O\left(\log(|L|)\cdot \sum_{i\in L}\E_{b_i,s_i}\left[(\tilde{\varphi}_i(b_i)-s_i)^+\right]\right)+O\left(\frac{\log n}{\delta\cdot \eta}\right)\cdot \gcfpp.
\end{align*}
For a general constrained-additive buyer,
the $O\left(\frac{\log n}{\delta\cdot \eta}\right)$ factor above becomes $O\left(\log^2(n)\right)$.
\end{lemma}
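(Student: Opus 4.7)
The plan is to assemble the lemma by chaining together the decomposition results already laid out in the excerpt, with the ``likely/unlikely to trade'' partition chosen precisely so that Theorem~\ref{thm:log1/r} (resp.\ Theorem~\ref{thm:downward-close}) can be applied on one piece. First I would apply Lemma~\ref{lem:ub-opt} to upper bound the second-best GFT of the two-sided market by the sum of designer utilities in the two auxiliary one-sided markets:
$$\opt \leq \optb + \opts([n],\cF).$$
Next I would invoke the profit version of the marginal mechanism lemma (Lemma~\ref{lem:separation}) on the partition $[n] = L \cup H$. The point of splitting along the trade-probability threshold $1/n$ is that items in $H$ satisfy $r_i \geq 1/n$, so the $\log(1/r)$ dependence in Theorem~\ref{thm:log1/r} becomes $\log n$ on $H$, while the $L$ side is absorbed into a profit term that will be handled separately. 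Choosing $T = H$ in Lemma~\ref{lem:likely trade and unlikely trade} yields
$$\opts([n],\cF) \leq \opts(L,\cF|_L) + \fb(H,\cF|_H),$$
and combining gives $\opt \leq \optb + \opts(L,\cF|_L) + \fb(H,\cF|_H)$.

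It then remains to bound the last two terms. For $\fb(H,\cF|_H)$, the minimum trade probability on $H$ is at least $1/n$ by construction, and the restriction $\cF|_H$ inherits $(\delta,\eta)$-selectability from $\cF$ (a greedy OCRS for $\cF$ restricted to the coordinate subset $H$ remains $(\delta,\eta)$-selectable). Applying Theorem~\ref{thm:log1/r} to the sub-market indexed by $H$ gives
$$\fb(H,\cF|_H) \leq O\!\left(\frac{\log(1/r_H)}{\delta\eta}\right) \cdot \gcfpp(H) \leq O\!\left(\frac{\log n}{\delta\eta}\right) \cdot \gcfpp,$$
where $r_H = \min_{i \in H} r_i \geq 1/n$ and the last inequality follows because any \cfpp mechanism on the sub-market $H$ can be lifted to a \cfpp mechanism on $[n]$ with the same GFT (set prohibitively high seller prices or low buyer prices on items in $L$ so no trade occurs there). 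For $\opts(L,\cF|_L)$, I would simply invoke Lemma~\ref{lem:unlikley trade item upper bound} to obtain
$$\opts(L,\cF|_L) \leq O\!\left(\log|L| \cdot \sum_{i \in L} \E_{b_i,s_i}\!\left[(\tilde{\varphi}_i(b_i) - s_i)^+\right]\right).$$
Substituting back produces the stated inequality.

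For a general constrained-additive buyer the argument is identical, except that I would replace Theorem~\ref{thm:log1/r} by Theorem~\ref{thm:downward-close} in the bound on $\fb(H,\cF|_H)$. Since $r_H \geq 1/n$, this produces a factor of $O(\log n \cdot \log(1/r_H)) = O(\log^2 n)$ in front of $\gcfpp$, which is the only change. The heavy lifting is not in this assembly itself: it lives in the three ingredients proved elsewhere---the two-sided-to-one-sided split (Lemma~\ref{lem:ub-opt}), the profit-version marginal mechanism decomposition (Lemma~\ref{lem:separation} and Lemma~\ref{lem:likely trade and unlikely trade}), and the $O(\log|L|)$ profit-to-separate-selling bound for the unlikely-to-trade items (Lemma~\ref{lem:unlikley trade item upper bound}). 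The main subtlety to verify when assembling is purely bookkeeping: that restricting to $H$ preserves $(\delta,\eta)$-selectability and that $\gcfpp(H) \leq \gcfpp([n])$, so that the sub-market invocation of Theorem~\ref{thm:log1/r}/Theorem~\ref{thm:downward-close} does not inflate the final benchmark.
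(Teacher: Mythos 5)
Your proposal is correct and follows exactly the paper's route: the paper proves this lemma by combining Lemma~\ref{lem:likely trade and unlikely trade} (which is itself Lemma~\ref{lem:ub-opt} plus Lemma~\ref{lem:separation} with $T=H$, followed by Theorem~\ref{thm:log1/r} or Theorem~\ref{thm:downward-close} applied to $H$, where $r_i\geq 1/n$ and $\cF|_H$ inherits $(\delta,\eta)$-selectability) with Lemma~\ref{lem:unlikley trade item upper bound} for $\opts(L,\cF|_L)$. Your additional bookkeeping remarks (that $\gcfpp$ on the sub-market $H$ is dominated by $\gcfpp$ on $[n]$) are the same implicit observations the paper relies on, so there is nothing genuinely different here.
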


Next, Section~\ref{sec:bounding super buyer} gives details on constructing a mechanism for a two-sided market whose GFT is at least $\optb$.  In Section~\ref{sec:SAPP}, we show how to use a generalization of posted price mechanisms to approximate the second term in the upper bound by the GFT of the 
Seller Adjusted Posted Price mechanism. The approximation heavily relies on the fact that in expectation, only one item can trade, so it is crucial that $L$ only contains the ``unlikely to trade'' items.


\subsection{Bounding the Optimal Buyer Utility in the Super Buyer Procurement Auction}\label{sec:bounding super buyer}
In this section, we construct a two-sided market to bound $\optb$ for any constrained additive buyer. 




\begin{lemma}\label{lem:mechanism for OPT-B}
Consider the mechanism $\MM^*=(x,p^B,p^S)$ where for every item $i$, buyer profile $\Bb$, and seller profile $\Bs$, $$x_i(\Bb,\Bs)=\ind[b_i-\tilde{\tau}_i(s_i)\geq 0\wedge i\in \argmax_{S\in \cF}\sum_{i\in S}(b_i-\widetilde{\tau}_i(s_i))^+].$$ 
Here $\widetilde{\tau}_i(s_i)$ is Myerson's ironed virtual value function\footnote{The seller's unironed virtual value function is $\tau_i(s_i) = s_i + \frac{G_i(s_i)}{g_i(s_i)}$.} for seller $i$'s distribution $\DD_i^S$. For every seller $i$, since $\tilde{\tau}_i(s_i)$ is non-decreasing in $s_i$, $x_i(\Bb,\Bs)$ is non-increasing in $s_i$. Define $p^S_i(\Bb,\Bs)$ as the threshold payment for seller $i$, i.e., the largest cost $s_i$ such that $x_i(\Bb,s_i,s_{-i})=1$. Define the buyer's payment $p^B(\Bb,\Bs)=\sum_i x_i(\Bb,\Bs)\cdot \tilde{\tau}_i(s_i)$. $\MM^*$ is DSIC, ex-post IR, ex-ante SBB~\footnote{One can make the mechanism IR and ex-post SBB by defining $p^B(\Bb,\Bs)=\sum_i p_i^S(\Bb,\Bs)$. The mechanism is still DSIC for all sellers. It is only BIC for the buyer, as the sellers' gains only equal to the virtual welfare when taking expectation over sellers' profile.} and  
$$\gft(\MM^*)\geq \optb=\E_{\Bb,\Bs}[\max_{S\in \cF}\sum_{i\in S}(b_i-\widetilde{\tau}_j(s_j))^+].$$ 
\end{lemma}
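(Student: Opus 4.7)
My plan is to verify the stated incentive/rationality/budget properties and then derive the GFT lower bound via a Myerson-style decomposition. The first step is to check that $x_i(\Bb,\Bs)$ is non-increasing in $s_i$ for every fixed $\Bb$ and $s_{-i}$. Since $\tilde{\tau}_i$ is non-decreasing, only the term $(b_i-\tilde{\tau}_i(s_i))^+$ in the ironed virtual welfare $\sum_{j\in S}(b_j-\tilde{\tau}_j(s_j))^+$ depends on $s_i$, and it is non-increasing. Writing $M_1(s_i)=\max_{S\in\cF,\,i\in S}\sum_{j\in S}(b_j-\tilde{\tau}_j(s_j))^+$ and $M_2=\max_{S\in\cF,\,i\notin S}\sum_{j\in S}(b_j-\tilde{\tau}_j(s_j))^+$, item $i$ belongs to some argmax-set iff $M_1(s_i)\geq M_2$ (under a consistent tie-breaking that favors including $i$); combined with the threshold $b_i\geq\tilde{\tau}_i(s_i)$, this is a downward threshold in $s_i$, so $x_i$ is monotone. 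Seller DSIC and ex-post IR then follow from Myerson's threshold payment. For buyer DSIC, the allocated set under any report $\hat\Bb$ is a feasible set $\hat{S}\in\cF$, so the buyer's true utility $\sum_{i\in\hat{S}}(b_i-\tilde{\tau}_i(s_i))\leq\sum_{i\in\hat{S}}(b_i-\tilde{\tau}_i(s_i))^+\leq\max_{S\in\cF}\sum_{i\in S}(b_i-\tilde{\tau}_i(s_i))^+$, which is exactly her utility under truthful reporting; the same chain yields buyer ex-post IR.

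The critical step is ex-ante strong budget balance. By the procurement version of Myerson's payment identity, the seller's expected payment satisfies $\E[p_i^S(\Bb,\Bs)]=\E[\tau_i(s_i)x_i(\Bb,\Bs)]$ with unironed virtual cost $\tau_i(s_i)=s_i+G_i(s_i)/g_i(s_i)$, whereas the buyer's expected payment is $\sum_i\E[\tilde{\tau}_i(s_i)x_i(\Bb,\Bs)]$. To close the gap I would use the following observation: on each ironing interval of $\tilde{\tau}_i$, $\tilde{\tau}_i(s_i)$ is constant, so both the threshold condition $b_i\geq\tilde{\tau}_i(s_i)$ and the ironed virtual welfare (viewed as a function of $s_i$, with $\Bb$ and $s_{-i}$ fixed) are constant; hence $x_i$ itself is constant on every ironing interval. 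The standard ironing identity $\int_a^b(\tilde{\tau}_i-\tau_i)g_i=0$ on each interval then gives $\E[\tilde{\tau}_i(s_i)x_i]=\E[\tau_i(s_i)x_i]$ for every $i$, which matches the two expected payments and yields ex-ante SBB.

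For the GFT bound I decompose
\[\gft(\MM^*)=\E\Bigl[\sum_i x_i(b_i-s_i)\Bigr]=\E\Bigl[\sum_i x_i(b_i-\tilde{\tau}_i(s_i))\Bigr]+\sum_i\E[x_i(\tilde{\tau}_i(s_i)-s_i)].\]
By the very definition of $x$, the first expectation equals $\E[\max_{S\in\cF}\sum_{i\in S}(b_i-\tilde{\tau}_i(s_i))^+]=\optb$. For the second sum, the same constancy-on-ironing-interval argument lets me replace $\tilde{\tau}_i$ by $\tau_i$ in expectation, giving $\sum_i\E[x_i(\tau_i(s_i)-s_i)]=\sum_i\E[x_i\cdot G_i(s_i)/g_i(s_i)]\geq 0$. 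Combining the two terms gives $\gft(\MM^*)\geq\optb$, as required. The main obstacle I anticipate is making the constancy-on-ironing-intervals claim fully rigorous (including handling ties at the boundary $M_1(s_i)=M_2$ so that $x_i$ is genuinely constant and monotone); once that is nailed down, the rest is bookkeeping with Myerson's lemma and a split of the integrand.
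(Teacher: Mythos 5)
Your proof is correct and follows essentially the same route as the paper: monotone allocation plus threshold payments for seller DSIC/IR, pointwise utility maximization for buyer DSIC/IR, Myerson's payment identity for ex-ante SBB, and then GFT $\geq$ buyer utility $=$ benchmark. Your final accounting is a harmless repackaging of the paper's: your second term $\sum_i\E[x_i(\tilde{\tau}_i(s_i)-s_i)]$ is exactly the sellers' total expected utility (via SBB), which the paper lower-bounds by $0$ using ex-post IR, whereas you lower-bound it directly through $\E[x_i\,G_i(s_i)/g_i(s_i)]\geq 0$; and your explicit verification that $x_i$ is constant on ironing intervals (so that the ironed payment identity $\E[p_i^S]=\E[\tilde{\tau}_i(s_i)x_i]$ holds exactly) is a detail the paper absorbs into its statement of Myerson's Lemma, so making it explicit is fine and arguably more self-contained. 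The one item you should still add a line for: the lemma also asserts $\optb=\E_{\Bb,\Bs}[\max_{S\in\cF}\sum_{i\in S}(b_i-\tilde{\tau}_i(s_i))^+]$, which you use without justification when you label the first term of your decomposition as $\optb$; as in the paper, this follows from the standard Myersonian analysis of the super buyer procurement auction, where only the single-dimensional sellers are strategic, so the optimal buyer utility equals the expected maximum ironed virtual surplus.
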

\begin{proof}

Since the seller's allocation rule is monotone and we use the threshold payment, $\MM^*$ is DSIC and ex-post IR for each seller. 

Note that for any seller profile $\Bs$, when the buyer has true type $\Bb$, her expected utility from reporting $\Bb'$ is $\sum_i x_i(\Bb',\Bs)\cdot (b_i-\widetilde{\tau}_i(s_i))$.
According to the definition of $x$, the buyer's utility is maximized when $\Bb'=\Bb$. Hence, $\MM$ is DSIC for the buyer. Moreover we have ex-post IR, as the buyer's expected utility when reporting truthfully is  $\max_{S\in \cF}\sum_{i\in S}(b_i-\widetilde{\tau}_i(s_i))^+\geq 0$.

It only remains to prove that the mechanism is ex-ante SBB and to lower bound its GFT. By Myerson's lemma\footnote{This lemma is used several times, and is formally stated as Lemma~\ref{lem:mye} in Appendix~\ref{apx:md}.} (Lemma~\ref{lem:mye}), for every $\Bb$ we have
$$\E_{\Bs}\left[\sum_i p_i^S(\Bb,\Bs)\right]=\E_{\Bs}\left[\sum_i x_i(\Bb,\Bs)\cdot\widetilde{\tau}_i(s_i)\right]=\E_{\Bs}[p^B(\Bb,\Bs)].$$

Thus the mechanism is ex-ante SBB.

Why is $\optb = \E_{\Bb,\Bs}[\max_{S\in \cF}\sum_{i\in S}(b_i-\widetilde{\tau}_j(s_j))^+]$? Notice that only the sellers are strategic in a super buyer procurement auction, and their types are all single-dimensional. One can apply the standard Myersonian analysis to the super buyer procurement auction and show that the optimal buyer utility is exactly $\E_{\Bb,\Bs}[\max_{S\in \cF}\sum_{i\in S}(b_i-\widetilde{\tau}_j(s_j))^+]$.

Note that the buyer's expected utility in $\MM^*$ is exactly $\optb$. As $\MM^*$ is an ex-ante SBB mechanism, the expected GFT of $\MM^*$ is equal to the buyer's expected utility plus the sum of all sellers' expected utility, and the latter is non-negative since $\MM^*$ is ex-post IR for every seller.
\end{proof}

\subsection{The Seller Adjusted Posted Price Mechanism}\label{sec:SAPP}
In this section, we introduce a new mechanism---the \emph{Seller Adjusted Posted Price} (\Msapp) Mechanism. We define an adjusted price mechanism to first elicit each seller's cost $s_i$, and then produce posted prices $\{\theta_i(\Bs)\}_{i\in [n]}$ as a function of the reported profile $\Bs$; thus the mechanism is a collection of posted prices depending on the reported seller cost profile. The items are offered to the buyer at each posted price $\theta_i(\Bs)$, with the buyer only allowed to purchase at most one item by paying the posted price. 
See Mechanism~\ref{alg:sapp} for a complete description of the SAPP mechanism. We show that for a properly selected mapping $\{\theta_i(\cdot)\}_{i\in[n]}$, the \Msapp~mechanism is DSIC, ex-post IR, and ex-ante WBB. Moreover, its GFT is at least  $\Theta\left( \sum_{i\in L}\E_{b_i,s_i}\left[(\tilde{\varphi}_i(b_i)-s_i)^+\right]\right)$.

Since the posted prices depend on the reported seller cost profile, we need to be careful to ensure that there is no incentive for any seller to misreport the cost. We identify a sufficient condition for the posted prices, called \emph{bi-monotonicity}, to make sure the corresponding mechanism is DSIC and ex-post IR.

\begin{definition}[Bi-monotonic Prices]\label{def:bi-monotonicity}
We say the posted prices $\{\theta_i(\Bs)\}_{i\in [n]}$ are bi-monotonic, if (i) $\theta_i(\Bs)\geq s_i$ for all seller profile $\Bs$ and seller $i$; (ii) $\theta_i(\Bs)$ is \textbf{non-decreasing in $s_i$} and \textbf{non-increasing in $s_j$ for all $j\neq i$}.
\end{definition}

In Lemma~\ref{sapp:monotone}, we prove that bi-monotonic posted prices induce a monotone allocation rule for every seller, enabling \emph{threshold payments}~\cite{Myerson81,MyersonS83}. Formally, for every seller $i$ let $\hat{x}_i(\Bb,\Bs)$ denote the probability that the buyer trades with seller $i$ under profile $(\Bb,\Bs)$. This is either 0 or 1 since all $\theta_i(\Bs)$s are fixed values when $\Bs$ is fixed. If $\hat{x}_i(\Bb,\Bs)=1$, $p_i^S(\Bb,\Bs)$ is defined as the maximum value $s_i'$ such that $\hat{x}_i(\Bb,s_i',s_{-i})=1$. Otherwise $p_i^S(\Bb,\Bs)=0$. This makes the \Msapp~mechanism DSIC and ex-post IR.

\begin{lemma}\label{sapp:monotone}
Let $\MM$ be a \Msapp~mechanism with bi-monotonic posted prices $\{\theta_i(\Bs)\}_{i\in[n]}$. Then the allocation of the mechanism $\hat{x}_i(\Bb,\Bs)$ is non-increasing in $s_i$ for all sellers $i$, and $\MM$ is DSIC and ex-post IR for the buyer and the sellers.
\end{lemma}

\begin{proof}
Notice that for every type $\Bb$, the buyer chooses the item that maximizes $b_i-\theta_i(\Bs)$ (and does not choose any item if she cannot afford any of the items). For every $i$, by bi-monotonicity, when $s_i$ decreases, $b_i-\theta_i(\Bs)$ does not decrease while $b_j-\theta_j(\Bs)$ does not increase for all $j\not=i$. Thus if the buyer chooses item $i$ under the original $s_i$, she must continue to choose item $i$ for smaller reports $s_i'$. Thus $\hat{x}_i(\Bb,\Bs)$ is non-increasing in $s_i$. 
Since every seller receives the threshold payment, she is DSIC and ex-post IR. As the buyer simply faces a posted price mechanism, the mechanism is DSIC and ex-post IR for the buyer.
\end{proof}


\floatname{algorithm}{Mechanism}\label{mech:SAPP}
\begin{algorithm}[ht]
\begin{algorithmic}[1]
\REQUIRE $\forall i\in [n]$, function $\theta_i(\cdot)$ that maps each seller cost profile to a price for item $i$. Input $(\Bb, \Bs)$.
\STATE Given the sellers' reported cost profile $\Bs$, offer each item $i$ to the buyer at price $\theta_i(\Bs)$.
\STATE The buyer is allowed to purchase at most one item by paying the corresponding posted price.
\STATE If no item is picked, then no trade happens and payment is 0 for every agent. Otherwise, if the buyer chooses item $i$, she receives item $i$ and pays $\theta_i(\Bs)$. Seller $i$ sells her item and receives threshold payment.
\end{algorithmic}
\caption{{\sf Seller Adjusted Posted Price Mechanism}}
\label{alg:sapp}
\end{algorithm}


The main challenge we face here is establishing the budget balance condition. Unfortunately, having bi-monotonic posted prices is not sufficient. Consider the $n=1$ case: the posted price $p(s)=s$ is trivially bi-monotonic. Clearly, the corresponding \Msapp~mechanism achieves \fb. However, due to the impossibility result by Myerson and Satterthwaite~\cite{MyersonS83}, no BIC, IR, and ex-ante WBB mechanism can always achieve \fb, so the \Msapp~mechanism must sometimes violate the budget balance constraint. In Lemma~\ref{sapp:main}, we show that even though bi-monotonic posted prices do not imply budget balance, there is indeed a wide range of bi-monotonic posted prices that induce budget balanced \Msapp~mechanisms. Our budget balance proof heavily relies on an allocation coupling argument (Lemma~\ref{lem:sapp-main1}) that simultaneously provides a lower bound on the buyer's payment, as well as an upper bound on the payment to the seller.


\begin{lemma}\label{sapp:main}
Let $x=\{x_i(\Bb,\Bs)\}_{i\in [n]}$ be an arbitrary allocation rule that satisfies (i) the buyer never purchases more than one item in expectation under each profile $(\Bb,\Bs)$, i.e., $\sum_{i\in[n]} x_i(\Bb,\Bs)\leq 1$, and (ii) for every buyer type $\Bb$ and seller $i$, $x_i(\Bb,\Bs)$ is non-increasing in $s_i$, and non-decreasing in $s_j$ for all $j\not=i$. We define $q_i(\Bs)=\E_{\Bb}[x_i(\Bb,\Bs)\cdot \ind[\tilde{\varphi}_i(b_i)\geq s_i]]$, where $\tilde{\varphi}_i(b_i)$ is Myerson's ironed virtual value for $\DD_i^B$, and $\theta_i(\Bs)=F_i^{-1}(1-\frac{q_i(\Bs)}{2})$. The posted prices $\{\theta_i(\Bs)\}_{i\in [n]}$ are bi-monotonic, and the corresponding SAPP mechanism $\MM$ is DSIC, ex-post IR, and ex-ante WBB. Moreover, $\gft(\MM)\geq \frac{1}{4}\E_{\Bb,\Bs}\left[\sum_i (\widetilde{\varphi}_i(b_i)-s_i)\cdot x_i(\Bb,\Bs)\right]$.
\end{lemma}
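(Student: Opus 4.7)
The plan is to follow the allocation-coupling strategy sketched in the discussion preceding this lemma.

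\textbf{Bi-monotonicity and DSIC/IR.} Myerson's ironed virtual value satisfies $\tilde{\varphi}_i(b_i) \leq b_i$ pointwise (a standard consequence of the ironed revenue curve being the concave majorant of the revenue curve through the origin), which gives $\ind[\tilde{\varphi}_i(b_i) \geq s_i] \leq \ind[b_i \geq s_i]$. Combined with the monotonicity hypotheses on $x_i$, this implies $q_i(\Bs)$ is non-increasing in $s_i$ and non-decreasing in each $s_j$ for $j \neq i$, and the monotonicity transfers through $F_i^{-1}$ to $\theta_i(\Bs)$. The same pointwise bound yields $q_i(\Bs) \leq 1 - F_i(s_i)$, hence $\theta_i(\Bs) = F_i^{-1}(1 - q_i(\Bs)/2) \geq s_i$. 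Given bi-monotonicity, DSIC and ex-post IR follow immediately from Lemma~\ref{sapp:monotone}.

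\textbf{Allocation coupling and GFT bound.} Let $a_i(\Bs) := \E_{\Bb}[\hat{x}_i(\Bb,\Bs)]$. The upper bound $a_i(\Bs) \leq q_i(\Bs)/2$ is immediate because $\hat{x}_i = 1$ requires $b_i \geq \theta_i(\Bs)$ (an event of probability exactly $q_i/2$). For the lower bound, since $\sum_j x_j \leq 1$ gives $\sum_j q_j(\Bs) \leq 1$, independence of the $\{b_j\}$ yields
\[
a_i(\Bs) \;\geq\; \frac{q_i(\Bs)}{2}\prod_{j \neq i}\!\left(1 - \frac{q_j(\Bs)}{2}\right) \;\geq\; \frac{q_i(\Bs)}{2}\!\left(1 - \sum_{j \neq i}\tfrac{q_j(\Bs)}{2}\right) \;\geq\; \frac{q_i(\Bs)}{4}.
\]
For the GFT bound, $\theta_i \geq s_i$ together with the coupling lower bound gives $\gft(\MM) \geq \tfrac{1}{4}\sum_i \E_{\Bs}[q_i(\theta_i - s_i)]$. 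To relate this to $\tfrac{1}{4}\E[\sum_i x_i(\tilde{\varphi}_i - s_i)]$, I would use a top-quantile maximization: for fixed $\Bs$, the marginal $\pi_i(b_i,\Bs) := \E_{b_{-i}}[x_i(\Bb,\Bs)] \in [0,1]$ satisfies $\int \pi_i\,\ind[\tilde{\varphi}_i \geq s_i]\,f_i = q_i$, and the maximum of $\int \pi_i \tilde{\varphi}_i\,\ind[\tilde{\varphi}_i \geq s_i]\,f_i$ over all such $\pi_i$ is achieved by $\pi_i = \ind[b_i \geq F_i^{-1}(1-q_i)]$, evaluating to $F_i^{-1}(1-q_i)\cdot q_i$ by Myerson's revenue identity for ironed virtual values. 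Since $\theta_i = F_i^{-1}(1-q_i/2) \geq F_i^{-1}(1-q_i)$, one obtains $\E_{\Bs}[q_i\theta_i] \geq \E[x_i\tilde{\varphi}_i\,\ind[\tilde{\varphi}_i \geq s_i]]$; combined with $\E_{\Bs}[q_i s_i] = \E[x_i s_i\,\ind[\tilde{\varphi}_i \geq s_i]]$ and $(\tilde{\varphi}_i - s_i)^+ \geq \tilde{\varphi}_i - s_i$, this delivers the claimed GFT bound.

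\textbf{Ex-ante WBB, the main obstacle.} Myerson's payment identity for each single-dimensional seller (with threshold payments and the monotone allocation established in Lemma~\ref{sapp:monotone}) gives $\E[p_i^S] = \E[\hat{x}_i \tau_i(s_i)]$. The coupling upper bound then yields $\E[\sum_i p_i^S] \leq \tfrac{1}{2}\sum_i \E_{\Bs}[q_i(\Bs)\tau_i(s_i)]$, while $\E[p^B] = \sum_i \E[\hat{x}_i\,\theta_i] \geq \tfrac{1}{4}\sum_i \E_{\Bs}[q_i(\Bs)\theta_i(\Bs)]$ from the coupling lower bound. WBB therefore reduces to showing $\sum_i \E_{\Bs}[q_i(\theta_i - 2\tau_i(s_i))] \geq 0$. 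A per-item comparison of the form $\theta_i \geq 2\tau_i(s_i)$ cannot hold in general (the $n=1$ case recovers Myerson--Satterthwaite impossibility), so the argument has to aggregate across items and exploit the specific shape of $q_i(\Bs) = \E_{\Bb}[x_i\,\ind[\tilde{\varphi}_i \geq s_i]]$. Concretely, I would rewrite $\E_{\Bs}[q_i\tau_i(s_i)] = \E_{\Bb,\Bs}[x_i\tau_i(s_i)\,\ind[s_i \leq \tilde{\varphi}_i(b_i)]]$ and apply the identity $\E_{s_i}[\tau_i(s_i)\,\ind[s_i \leq c]] = c\,G_i(c)$ (together with the monotonicity of $x_i$ in $s_i$) to cast it against $\E_{\Bs}[q_i\theta_i] = 2\,\E_{\Bb,\Bs}[\tilde{\varphi}_i(b_i)\,\ind[b_i \geq \theta_i(\Bs)]]$, the latter being Myerson's revenue identity applied at quantile $q_i/2$. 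Reconciling these two sides is exactly where the auxiliary allocation-coupling lemma (Lemma~\ref{lem:sapp-main1}) is invoked, and I expect this step to be the crux of the proof.
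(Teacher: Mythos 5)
Your treatment of bi-monotonicity, DSIC, ex-post IR, the allocation-coupling bounds $\hat{x}_i(\Bs)\in[q_i(\Bs)/4,\,q_i(\Bs)/2]$, and the GFT lower bound all match the paper's proof in substance (the paper keeps the slightly sharper lower bound $\frac{q_i+q_i^2}{4}$, which it needs later; your top-quantile argument for replacing $\theta_i q_i$ by $\E[x_i\tilde{\varphi}_i\ind[\tilde{\varphi}_i\geq s_i]]$ is the same as the paper's). The problem is the part you yourself flag as the crux: ex-ante WBB is not proved, and the reduction you propose does not lead to the paper's argument.

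Concretely, you pass to $\E[p_i^S]=\E[\hat{x}_i\tau_i(s_i)]$ via Myerson's payment identity, then degrade the seller side up to $\frac{1}{2}\E[q_i\tau_i]$ and the buyer side down to $\frac{1}{4}\E[q_i\theta_i]$, leaving the condition $\sum_i\E_{\Bs}[q_i(\theta_i-2\tau_i(s_i))]\geq 0$. That condition is not established, and applying the coupling bounds on \emph{both} sides independently throws away exactly the cancellation the proof needs. The paper never introduces $\tau_i$ and never aggregates across items: it fixes the cost profile $\Bs$ and proves the pointwise, per-item inequality $p_i^S(\Bs)\leq\hat{x}_i(\Bs)\cdot\theta_i(\Bs)$, where the right-hand side is \emph{exactly} the buyer's expected payment for item $i$ under $\Bs$ (no factor-$1/4$ degradation). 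To do this it writes the expected threshold payment as $\int_0^{\hat{x}_i(s_i)}\int_0^\infty\ind[\hat{x}_i(z)\geq t]\,dz\,dt$ (after Fubini and absorbing the $s_i\hat{x}_i(s_i)$ term), applies the coupling upper bound along the $s_i$-axis in the form $\hat{x}_i(z)\leq q_i(z)/2\leq\frac{1}{2}\Pr_{b_i}[\tilde{\varphi}_i(b_i)\geq z]$ to bound the inner integral by $\tilde{\varphi}_i(F_i^{-1}(1-2t+\epsilon))$, integrates using Myerson's identity to get $p_i^S(\Bs)\leq\hat{x}_i(\Bs)\cdot F_i^{-1}(1-2\hat{x}_i(\Bs)+\epsilon)$, and only then invokes the coupling \emph{lower} bound (in its strict form $\hat{x}_i\geq\frac{q_i+q_i^2}{4}$, to absorb $\epsilon$) to conclude $F_i^{-1}(1-2\hat{x}_i(\Bs)+\epsilon)\leq\theta_i(\Bs)$. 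This is the step your outline is missing; without it, or a proof of your aggregate sufficient condition, the budget-balance claim is open.
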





\begin{proof}
It is not hard to verify that  $\{\theta_i(\Bs)=F_i^{-1}(1-\frac{q_i(\Bs)}{2})\}_{i\in[n]}$ is bi-monotonic. 
Now we proceed to prove that the \Msapp ~mechanism $\MM$ is ex-ante WBB.  We require the following lemma.

\begin{lemma}\label{lem:sapp-main1}
For every seller $i$ and every seller profile $\Bs$, $\hat{x}_i(\Bs)\in \left[\frac{q_i(\Bs)+q_i(\Bs)^2}{4},\frac{q_i(\Bs)}{2}\right]$. 
\end{lemma}

\begin{proof}
Note that the buyer will purchase item $i$ if both of the following conditions are satisfied: 
\begin{enumerate}
    \item The buyer can afford item $i$, i.e., $b_i\geq \theta_i(\Bs)$.
    \item The buyer can not afford any other items, i.e., $b_j<\theta_j(\Bs),\forall j\not=i$.
\end{enumerate}

By choice of $\theta_i(\Bs)$, the first event happens with probability $\Pr[b_i\geq \theta_i(\Bs)]=q_i(\Bs)/2$. 

Note that $\sum_{i\in[n]} q_i(\Bs)\leq \E_{\Bb}[\sum_{i\in [n]} x_i(\Bb,\Bs)]\leq 1$. For each $j\neq i$, $\Pr[b_j<\theta_j(\Bs)]=1-\frac{q_j(\Bs)}{2}$. Thus $\sum_{j\neq i} \left(1-\frac{q_j(\Bs)}{2}\right)\geq n-\frac{3}{2}+\frac{q_i(\Bs)}{2}$.
The second event happens with probability $$\prod_{j\neq i} \left(1-\frac{q_j(\Bs)}{2}\right)\geq \frac{1}{2}+\frac{q_i(\Bs)}{2}.$$ The equality holds when one out of the $n-1$ $q_j(\Bs)$'s equals  
$1-q_i(\Bs)$ and the rest are all equal to $0$.
Notice that the two events are independent, so we have the upper and lower bound on $\hat{x}_i(\Bs)$.
\end{proof}

We return to the proof of Lemma~\ref{sapp:main}. For easy reference, we list our notation again:
\begin{itemize}
    \item $x=\{x_i(\Bb,\Bs)\}_{i\in [n]}$ is an arbitrary allocation.
    \item 
    $\hat{x}_i(\Bb,\Bs)$ is the probability that item $i$ trades in $\MM$ under profile $(\Bb,\Bs)$.
    \item $\hat{x}_i(\Bs) = \E_\Bb[\hat{x}_i(\Bb,\Bs)]$ is the probability that item $i$ trades over the randomness of buyer valuations, i.e. the interim trade probability.
    \item $q_i(\Bs)=\E_{\Bb}[x_i(\Bb,\Bs)\cdot \ind[\tilde{\varphi}_i(b_i)\geq s_i]]$ is the probability that item $i$ trades in allocation $x$ \emph{and} the buyer's ironed virtual value for item $i$ is above the seller's cost.
    \item $\theta_i(\Bs)=F_i^{-1}(1-\frac{q_i(\Bs)}{2})$ is the buyer's posted price set such that $\Pr[b_i\geq \theta_i(\Bs)]=q_i(\Bs)/2$.
\end{itemize}
 Fix any seller profile $\Bs$. For simplicity, we slightly abuse notation and use $\hat{x}_i(z)$ and $q_i(z)$ to denote $\hat{x}_i(z,s_{-i})$ and $q_i(z,s_{-i})$. The expected payment from the buyer under cost profile $\Bs$ is $\sum_{i\in[n]} \hat{x}_i(s_i)\cdot \theta_i(\Bs)$. For every seller $i$, denote $p_i^S(\Bs)=\E_{\Bb}[p_i^S(\Bb,\Bs)]$ her expected payment under cost profile $\Bs$. 

Note that for every $\Bb,\Bs$, the threshold payment $p_i^S(\Bb,\Bs)$ can be rewritten as the quantity $\int_{s_i}^\infty \hat{x}_i(\Bb,t,s_{-i})dt+s_i\cdot \hat{x}_i(\Bb,s_i,s_{-i})$: When $\hat{x}_i(\Bb,\Bs)=0$, then $\hat{x}_i(\Bb,t,s_{-i})$ for all $t\geq s_i$ since $\hat{x}_i(\Bb,\Bs)$ is non-increasing in $s_i$. Thus the above quantity is 0. When $\hat{x}_i(\Bb,\Bs)=1$, let $s_i'$ be the maximum value such that $\hat{x}_i(\Bb,s_i',s_i)=1$. Then the above quantity is equal to $\int_{s_i}^{s_i'}1dt+s_i=s_i'=p_i^S(\Bb,\Bs)$. Thus 

$$p_i^S(\Bs)=\E_{\Bb}[p_i^S(\Bb,\Bs)]=\int_{s_i}^\infty \hat{x}_i(z,s_{-i})dz+s_i\cdot \hat{x}_i(s_i,s_{-i}).$$
We will show that $p_i^S(\Bs)\leq \hat{x}_i(s_i)\cdot \theta_i(\Bs)$. By definition,
\begin{align*}
    p_i^S(\Bs)&=\int_{s_i}^\infty \hat{x}_i(z)dz+s_i\cdot \hat{x}_i(s_i)\\
    &=\int_{s_i}^\infty\int_{0}^\infty \ind[\hat{x}_i(z)\geq t]dtdz+s_i\cdot \hat{x}_i(s_i) & \text{(~$\hat{x}_i(z)=\int_{0}^\infty \ind[\hat{x}_i(z)\geq t]dt,\forall z$~)}\\
    &=\int_{s_i}^\infty\int_{0}^{\hat{x}_i(s_i)} \ind[\hat{x}_i(z)\geq t]dtdz+s_i\cdot \hat{x}_i(s_i) & \quad \text{(~$\ind[\hat{x}_i(z)\geq t]=0,\forall z\geq s_i {\textrm{ and }} t>\hat{x}_i(s_i)$)}\\
    &=\int_{0}^{\hat{x}_i(s_i)}\int_{s_i}^\infty\ind[\hat{x}_i(z)\geq t]dzdt+s_i\cdot \hat{x}_i(s_i).
\end{align*}

The last equality follows from {Fubini's Theorem,  as the integral is finite due to the monotonicity of $\hat{x}_i(\cdot)$.}

Moreover, since $\hat{x}_i(\cdot)$ is non-increasing, for every $z\leq s_i,t\leq \hat{x}_i(s_i)$, we have $\hat{x}_i(z)\geq \hat{x}_i(s_i)\geq t$. Thus
$$\int_{0}^{\hat{x}_i(s_i)}\int_{0}^{s_i} \ind[\hat{x}_i(z)\geq t]dzdt=\int_{0}^{\hat{x}_i(s_i)}\int_{0}^{s_i} 1dzdt=s_i\cdot \hat{x}_i(s_i).$$

Combining the two equations, we have
\begin{align*}
p_i^S(\Bs) &=\int_{0}^{\hat{x}_i(s_i)}\int_{0}^\infty\ind[\hat{x}_i(z)\geq t]dzdt &\\ &\leq \int_{0}^{\hat{x}_i(s_i)}\int_{0}^\infty\ind[q_i(z)\geq 2t]dzdt   & \text{(Lemma~\ref{lem:sapp-main1})}\\
&\leq\int_{0}^{\hat{x}_i(s_i)}\int_{0}^\infty\ind\left[\Pr_{b_i}[\widetilde{\varphi}_i(b_i)\geq z]\geq 2t\right]dzdt   & \text{(Definition of $q_i(\cdot)$)}
\end{align*}

For every $t$, we prove that $\int_{0}^\infty\ind\left[\Pr_{b_i}[\widetilde{\varphi}_i(b_i)\geq z]\geq 2t\right]dz\leq \widetilde{\varphi}_i(F_i^{-1}(1-2t+\epsilon))$ for any $\epsilon>0$. In fact, let $z^*=\widetilde{\varphi}_i(F_i^{-1}(1-2t+\epsilon))$. For every $z>z^*$, $\Pr[\widetilde{\varphi}_i(b_i)\geq z]\leq \Pr[\widetilde{\varphi}_i(b_i)>z^*]=\Pr[b_i>F_i^{-1}(1-2t+\epsilon)]\leq 2t-\epsilon$. So $\ind\left[\Pr[\widetilde{\varphi}_i(b_i)\geq z]\geq 2t \right]=0$ for all $z>z^*$.

Therefore, for any $\epsilon>0$, we have the following.  We will change variables.
\begin{align*}
p_i^S(\Bs)&\leq \int_{0}^{\hat{x}_i(s_i)}\widetilde{\varphi}_i(F_i^{-1}(1-2t+\epsilon))dt\\
&=\int_{\infty}^{F_i^{-1}(1-2\hat{x}_i(s_i)+\epsilon)}\widetilde{\varphi}_i(y)d\frac{1+\epsilon-F_i(y)}{2}   & \text{($y=F_i^{-1}(1-2t+\epsilon)$)}\\
&=-\frac{1}{2}\int_{\infty}^{F_i^{-1}(1-2\hat{x}_i(s_i)+\epsilon)}\widetilde{\varphi}_i(y)f_i(y)dy\\
&=\frac{1}{2}\int_{F_i^{-1}(1-2\hat{x}_i(s_i)+\epsilon)}^{\infty}\widetilde{\varphi}_i(y)f_i(y)dy\\
&=\frac{1}{2}F_i^{-1}(1-2\hat{x}_i(s_i)+\epsilon)\cdot [1 - F_i(F_i^{-1}(1-2\hat{x}_i(s_i)+\epsilon))]   & \text{(Myerson's Lemma (\ref{lem:mye}))}\\
&= F_i^{-1}(1-2\hat{x}_i(s_i)+\epsilon) \cdot  (\hat{x}_i(s_i)-\epsilon/2) \\
&\leq\hat{x}_i(s_i)\cdot F_i^{-1}(1-2\hat{x}_i(s_i)+\epsilon)
\end{align*}

If $q_i(s_i)=0$, then $\hat{x}_i(s_i)\cdot F_i^{-1}(1-2\hat{x}_i(s_i)+\epsilon)=0=\hat{x}_i(s_i)\cdot \theta_i(\Bs)$. 
Otherwise, choose $\epsilon$ to be any number in $(0,\frac{q_i(s_i)^2}{4})$. Then, according to Lemma~\ref{lem:sapp-main1} and our choice of $\epsilon$, $$1-2\hat{x}_i(s_i)+\epsilon\leq 1-\frac{q_i(s_i)}{2}-\frac{q_i(s_i)^2}{4}< 1-\frac{q_i(s_i)}{2}.$$
Hence, $F_i^{-1}(1-2\hat{x}_i(s_i)+\epsilon)< \theta_i(\Bs)$. 
Thus $p_i^S(\Bs)\leq \hat{x}_i(\Bs)\cdot \theta_i(\Bs)$ for every $i$ and $\Bs$, which implies that $\E_{\Bs}\left[\sum_i\theta_i(\Bs)\cdot \hat{x}_i(\Bs)\right]\geq \E_{\Bs}\left[\sum_i p_i^S(s_i,s_{-i})\right]$. Hence $\MM$ is ex-ante WBB. 

\vspace{.2in}

We now need to lower bound the GFT from mechanism $\MM$.

\begin{align*}
    \gft(\MM)=&\E_{\Bb,\Bs}\left[\sum_i(b_i-s_i)\cdot \hat{x}_i(\Bb,\Bs)\right]\\
    \geq&\E_{\Bs}\left[\sum_i(\theta_i(\Bs)-s_i)\cdot \hat{x}_i(\Bs)\right]  & \text{($\hat{x}_i(\Bb,\Bs)=0$ if $b_i<\theta_i(\Bs)$)}\\
    \geq &\frac{1}{2}\E_{\Bs}\left[\sum_i\left(F_i^{-1}\left(1-\frac{q_i(\Bs)}{2}\right)-s_i\right)\cdot \frac{q_i(\Bs)}{2}\right]   & \text{(Definition of  $\theta_i(\Bs)$, $q_i(\Bs)$ and Lemma~\ref{lem:sapp-main1})}\\
    =&\frac{1}{2}\E_{\Bb,\Bs}\left[\sum_i(\widetilde{\varphi}_i(b_i)-s_i)\cdot\ind\left[b_i\geq F_i^{-1}\left(1-\frac{q_i(\Bs)}{2}\right)\right]\right]  & \text{(Myerson's Lemma (\ref{lem:mye}))}\\
    \geq& \frac{1}{4}\E_{\Bb,\Bs}\left[\sum_i(\widetilde{\varphi}_i(b_i)-s_i)\cdot x_i(\Bb,\Bs)\cdot \ind[\tilde{\varphi}_i(b_i)\geq s_i]\right]\\
    \geq& \frac{1}{4}\E_{\Bb,\Bs}\left[\sum_i(\widetilde{\varphi}_i(b_i)-s_i)\cdot x_i(\Bb,\Bs)]\right]   
\end{align*}

Here the second-to-last inequality uses the fact that $$\E_{b_i}\left[\widetilde{\varphi}_i(b_i)\cdot\ind[b_i\geq F_i^{-1}\left(1-\frac{q_i(\Bs)}{2}\right)\right]\geq \frac{1}{2}\cdot\E_{\Bb}\left[\widetilde{\varphi}_i(b_i)\cdot x_i(\Bb,\Bs)\cdot \ind[\tilde{\varphi}_i(b_i)\geq s_i]\right]$$
holds for every $\Bs$ and $i$. This is because the right hand side $$\frac{1}{2}\cdot\E_{\Bb}[\widetilde{\varphi}_i(b_i)\cdot x_i(\Bb,\Bs) \cdot \ind[\tilde{\varphi}_i(b_i)\geq s_i]]=\E_{b_i}\left[\widetilde{\varphi}_i(b_i)\cdot \frac{1}{2}\E_{b_{-i}}[x_i(\Bb,\Bs) \cdot \ind[\tilde{\varphi}_i(b_i)\geq s_i]]\right]$$ can be viewed as the expectation of $\widetilde{\varphi}_i(b_i)$ on an event of $b_i$ with a total probability mass $$\E_{b_i}\left[\frac{1}{2}\E_{b_{-i}}[x_i(\Bb,\Bs) \cdot \ind\left[\tilde{\varphi}_i(b_i)\geq s_i]\right]\right]=\frac{q_i(\Bs)}{2},$$ while the left hand side is the maximum expectation of $\widetilde{\varphi}_i(b_i)$ on any event of $b_i$ with  total probability mass $\frac{q_i(\Bs)}{2}$, as $\widetilde{\varphi}_i(b_i)$ is non-decreasing on $b_i$. 
\end{proof}

Lemma~\ref{lem:SAPP bounding unlikely trade items} shows how to choose an allocation rule $x$ so that the induced \Msapp~mechanism (using Lemma~\ref{sapp:main}) has GFT at least  $\Omega\left(\sum_{i\in L}\E_{b_i,s_i}\left[(\tilde{\varphi}_i(b_i)-s_i)^+\right]\right)$. Note that the existence of such an $x$ heavily relies on the fact that in expectation there is only one  item that can trade among the ``unlikely to trade'' items. 



\begin{lemma}\label{lem:SAPP bounding unlikely trade items}
We let  $\sapp(S)$ denote the optimal GFT attainable by any DSIC, ex-post IR, and ex-ante WBB \Msapp~mechanisms over items in $S$ for any subset $S\subseteq [n]$. $\sapp(L)\geq \frac{1}{4e}\cdot \sum_{i\in L}\E_{b_i,s_i}[(\widetilde{\varphi}_i(b_i)-s_i)^+]$.
\end{lemma}

\begin{proof}
Let $\Bb_L=\{b_i\}_{i\in L}$ and $\Bs_L=\{s_i\}_{i\in L}$. For every $i\in L$, define the event that only $i$ is tradeable: $$A_i=\left\{(\Bb_L,\Bs_L):b_i\geq s_i\wedge b_j<s_j,\forall j\in L\backslash\{i\}\right\}.$$ We consider the following allocation rule:
\begin{align*}
x_i(\Bb_L,\Bs_L)=
\begin{cases}
\ind[\widetilde{\varphi}_i(b_i)\geq s_i] &,~ \text{if }  (\Bb,\Bs)\in A_i\\
0 &,~\text{otherwise}
\end{cases}
\end{align*}

Notice that $(\Bb_L,\Bs_L)\in A_i$ implies that $(\Bb_L,s_i',\Bs_{L\backslash\{i\}})\in A_i$ for any $s_i'\leq s_i$. Thus, $x_i(\Bb_L,\Bs_L)$ is non-increasing in $s_i$. Similarly, it is easy to verify that $x_i(\Bb_L,\Bs_L)$ is non-decreasing in all $s_j$ where $j\in L\backslash\{i\}$. Furthermore, $\sum_{i\in L} x_i(\Bb_L,\Bs_L)\leq 1$ for all $\Bb_L,\Bs_L$. If we choose the posted prices according to Lemma~\ref{sapp:main}, then the corresponding mechanism has GFT that is at least $\frac{1}{4}\E_{\Bb,\Bs}\left[\sum_i (\widetilde{\varphi}_i(b_i)-s_i)\cdot x_i(\Bb,\Bs)\right]$.

Moreover, by the definition of $x_i(\Bb,\Bs)$, 
\begin{align*}
\E_{\Bb,\Bs}\left[\sum_{i\in L} (\widetilde{\varphi}_i(b_i)-s_i)\cdot x_i(\Bb,\Bs)\right]=&\sum_{i\in L} \E_{b_i,s_i}[(\widetilde{\varphi}_i(b_i)-s_i)^+)]\cdot \prod_{j\in L\backslash\{i\}}\Pr_{b_j,s_j}[b_j<s_j]\\
\geq&\sum_{i\in L} \E_{b_i,s_i}[(\widetilde{\varphi}_i(b_i)-s_i)^+)]\cdot (1-\frac{1}{n})^{|L|}\\
\geq& \frac{1}{e}\cdot\sum_{i\in L}\E_{b_i,s_i}[(\widetilde{\varphi}_i(b_i)-s_i)^+)]
\end{align*}
The first inequality holds because for each item $j\in L$, $\Pr_{b_j,s_j}[b_j<s_j]\geq 1-1/n$. Hence, $$\sapp(L)\geq \frac{1}{4e}\cdot \sum_{i\in L}\E_{b_i,s_i}[(\widetilde{\varphi}_i(b_i)-s_i)^+].$$
\end{proof}

\subsection{Bounding the Optimal Profit from the Unlikely to Trade Items}\label{sec:UB of unlikely to trade items}

In this section, 
we provide an upper bound for the optimal super seller profit from items in $L$. It is well known that in multi-item auctions the revenue of selling the items separately is a $O(\log n)$-approximation to the optimal revenue when there is a single additive buyer~\cite{LiY13}. Cai and Zhao~\cite{CaiZ19} provide a extension of this $O(\log n)$-approximation to profit maximization. Combining this approximation with some basic observations based on the Cai-Devanur-Weinberg duality framework~\cite{CaiDW16}, we derive the following upper bound of $\opts(L,\cF\big|_L)$.
\begin{lemma}\label{lem:unlikley trade item upper bound}
$$\opts(L,\cF\big|_L)\leq O\left(\log(|L|)\cdot \sum_{i\in L}\E_{b_i,s_i}\left[(\tilde{\varphi}_i(b_i)-s_i)^+\right]\right).$$
Here $\widetilde{\varphi}_i(b_i)$ is Myerson's ironed virtual value function\footnote{The buyer's unironed virtual value function is $\varphi_i(b_i) = b_i - \frac{1-F_i(b_i)}{f_i(b_i)}$. These values are averaged to be made monotonic in the quantile space, which creates $\widetilde{\varphi}_i(b_i)$.}  for the buyer's distribution for item $i$, $\DD_i^B$.
\end{lemma}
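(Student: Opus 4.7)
The plan is to combine three ingredients: a reduction from the constrained-additive buyer to an additive buyer, the Cai-Zhao $O(\log)$-approximation to profit maximization via separate selling~\cite{CaiZ19}, and Myerson's single-item analysis.

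First, I would show $\opts(L,\cF\big|_L) \leq \opts(L,\text{ADD})$. Given any BIC, IR mechanism $M^* = (x^*, p^{B*})$ for the constrained-additive super-seller auction over $L$, construct a mechanism $M'$ for the additive version that runs $M^*$ as a black box: on reported profile $(\Bb,\Bs)$, $M'$ charges $p^{B*}(\Bb,\Bs)$ but only delivers to the additive buyer the subset
$$T^*(\Bb,\Bs) \;=\; \argmax_{T \subseteq x^*(\Bb,\Bs),\, T \in \cF\big|_L} \sum_{i \in T} b_i.$$
The additive buyer's utility in $M'$ upon reporting $\Bb'$ equals $\sum_{i \in T^*(\Bb',\Bs)} b_i - p^{B*}(\Bb',\Bs)$, which is at most the constrained-additive buyer's utility upon reporting $\Bb'$ in $M^*$, with equality when $\Bb' = \Bb$. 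Hence BIC and IR for $M'$ are inherited from BIC and IR of $M^*$. The revenue is unchanged and the super seller's realized cost is weakly smaller (since $T^* \subseteq x^*$), so $M'$'s profit is at least $M^*$'s.

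Next, I would invoke the Cai-Zhao extension of the Li-Yao~\cite{LiY13} $O(\log n)$-approximation to profit maximization, which is built on the Cai-Devanur-Weinberg duality framework~\cite{CaiDW16}. This yields
$$\opts(L,\text{ADD}) \;\leq\; O(\log|L|) \cdot \sum_{i \in L} \opts(\{i\}),$$
where $\opts(\{i\})$ is the optimal super-seller profit from selling item $i$ alone.

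Finally, I would evaluate $\opts(\{i\})$ using Myerson's theory. The super seller knows $s_i$ and posts a price $p$ to maximize $(p-s_i)\Pr_{b_i}[b_i \geq p]$. Standard virtual-value analysis shows that the optimal monotone allocation selects exactly when $\widetilde{\varphi}_i(b_i) \geq s_i$, giving $\opts(\{i\}) = \E_{b_i,s_i}[(\widetilde{\varphi}_i(b_i)-s_i)^+]$. Substituting into the previous display delivers the claimed bound. The main obstacle is the Cai-Zhao separation theorem in the middle step, which is the deep black-box ingredient; the reduction in Step 1 (checking BIC is preserved despite using the reported rather than true type in the $\argmax$) and the Myerson evaluation in Step 3 are routine.
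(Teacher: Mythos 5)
Your proposal follows the same skeleton as the paper's proof: reduce $\opts(L,\cF\big|_L)$ to $\opts(L,\text{ADD})$, invoke the Cai--Zhao $O(\log|L|)$ separate-selling bound for an additive buyer, and evaluate the single-item profit $\opts(\{i\})=\E_{b_i,s_i}[(\widetilde{\varphi}_i(b_i)-s_i)^+]$ via Myerson. Steps 2 and 3 match the paper exactly (Lemma~\ref{lem:mz19-additive} and the standard single-item identity). The genuine difference is in Step 1. The paper proves $\opts(L,\cF\big|_L)\leq\opts(L,\text{ADD})$ through the Cai--Devanur--Weinberg duality benchmark: it takes the optimal dual/flow for the \emph{additive} problem, observes that the induced virtual-welfare benchmark is a valid upper bound for the $\cF\big|_L$-constrained problem as well, and then relaxes the feasibility polytope from $P_{\cF|_L}$ to $[0,1]^{|L|}$. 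You instead give a direct mechanism-simulation argument: run the optimal constrained-additive mechanism, charge the same payment, and deliver only the utility-maximizing feasible subset $T^*(\Bb',\Bs)$ of the awarded set; BIC and IR transfer because the additive buyer's utility under any misreport is dominated by the constrained-additive buyer's utility under the same misreport, with equality on truthful reports. This is a correct and more elementary "marginal-mechanism"-style coupling (indeed in the spirit of the paper's own Lemma~\ref{lem:separation}), and it avoids any appeal to duality for this step; the only (mild) extra hypothesis it uses is $s_i\geq 0$ so that shrinking the delivered set weakly lowers the seller's cost. The paper's duality route buys uniformity with the rest of its profit analysis, but your version is self-contained and equally valid.
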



To bound $\opts(L,\cF\big|_L)$, we need the following result from \cite{CaiZ19}. It provides a benchmark of the optimal profit using the Cai-Devanur-Weinberg duality framework~\cite{CaiDW16}: The profit of any BIC, IR mechanism is upper bounded by the buyer's virtual welfare with respect to some virtual value function, minus the sellers' total cost for the same allocation.

A sketch of the framework is as follows: We first formulate the profit maximization problem as an LP. Then we Lagrangify the BIC and IR constraints to get a partial Lagrangian dual of the LP. Since the buyer’s payment is unconstrained in the partial Lagrangian, one can argue that the corresponding dual variables must form a flow to ensure that the benchmark is finite. By weak duality, any choice of the dual variables/flow derives a benchmark for the optimal profit. In \cite{CaiZ19}, they also construct a canonical flow and prove that there exists a BIC and IR mechanism whose profit is within a constant factor times the benchmark w.r.t. the flow for any single constrained-additive buyer.     
\begin{lemma}\label{lem:mz19}
\cite{CaiZ19} For any $T\subseteq [n]$ and feasibility constraint $\cJ$ with respect to $T$, consider the super seller auction with item set $T$ and to $\cJ$-constrained buyer. Any flow $\lambda_T$ induces a finite benchmark for the optimal profit, that is,

$$\opts(T,\cJ)\leq \max_{x\in P_{\cJ}}\E_{\Bb,\Bs}\left[\sum_{i\in T} x_i(\Bb,\Bs)\cdot (\Phi_i^T(\Bb)-s_i)\right]$$

where

$$\Phi_i^{T}(\Bb)=b_i-\frac{1}{f_i(b_i)}\sum_{\Bb'}\lambda_T(\Bb',\Bb)\cdot (b_i'-b_i)$$
can be viewed as buyer $i$'s virtual value function, and $P_{\cJ}$ is the set of all feasible allocation rules. More specifically, $\lambda_T(\Bb',\Bb)$ is the Lagrangian multiplier for the BIC/IR constraint that says when the buyer has true type $\Bb$ she does not want to misreport $\Bb'$. The equality sign is achieved when the optimal dual $\lambda_T^*$ is chosen.


\end{lemma}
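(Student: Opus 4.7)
The plan is to reduce the bound to a sum of single-item Myersonian profits via a two-step argument that leans on Lemma~\ref{lem:mz19}. First, I would relax the buyer's feasibility from $\cF|_L$ to the additive constraint, showing $\opts(L,\cF|_L)\leq \opts(L,\text{ADD})$. This can be read off directly from Lemma~\ref{lem:mz19}: for any feasible flow $\lambda_L$, the virtual values $\Phi^L_i(\Bb)$ depend only on the buyer's type distribution and on $\lambda_L$, not on $\cF|_L$; the feasibility constraint enters the benchmark only through the polytope $P_{\cF|_L}$ over which $x$ is maximized. Since $\cF|_L$ is downward-closed, $P_{\cF|_L}\subseteq P_{\text{ADD}}$, so enlarging the maximization domain can only increase the benchmark. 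Choosing the flow that achieves equality in Lemma~\ref{lem:mz19} for the \emph{additive} super seller auction shows that the relaxed benchmark equals $\opts(L,\text{ADD})$.

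Second, I would invoke the $O(\log|L|)$-approximation for a single additive buyer that sells items separately: Cai--Zhao~\cite{CaiZ19} extend the Li--Yao~\cite{LiY13} inequality from revenue to profit maximization, giving
$$\opts(L,\text{ADD})\leq O(\log|L|)\cdot \sum_{i\in L}\opts(\{i\},\text{ADD}).$$
Finally, for any single item $i$, the super seller auction collapses to a single-dimensional Myersonian problem with random cost $s_i$: conditional on $s_i$, the revenue-maximizing posted price yields expected profit $\E_{b_i}[(\tilde{\varphi}_i(b_i)-s_i)^+]$ by Myerson's payment identity, so averaging over $s_i$ gives $\opts(\{i\},\text{ADD})=\E_{b_i,s_i}[(\tilde{\varphi}_i(b_i)-s_i)^+]$. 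Chaining the three steps proves the lemma.

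The main obstacle is justifying the relaxation step in a clean way. A purely mechanism-theoretic argument that ``weakening the buyer's feasibility cannot decrease the super seller's profit'' is not immediate, because the BIC/IR constraints themselves depend on the valuation structure, and a mechanism that is incentive-compatible for a constrained-additive buyer need not be incentive-compatible for an additive one. The Cai--Devanur--Weinberg duality framework circumvents this by decoupling the ``virtual value'' term (which depends only on the buyer's type distribution and the chosen flow) from the ``feasibility'' term (which enters only as the domain of the allocation). This decoupling is precisely what the ``basic observations'' mentioned in the paragraph preceding the lemma provide, and it is the technical bridge that connects the multi-dimensional constrained-additive setting to the additive benchmark treated by Cai--Zhao.
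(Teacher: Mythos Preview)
Your proposal does not prove the stated lemma. Lemma~\ref{lem:mz19} is a quoted result from~\cite{CaiZ19} (the Cai--Devanur--Weinberg-style duality bound on $\opts(T,\cJ)$), and the paper does not supply a proof for it; it is simply invoked as a black box. What you have written is instead a proof sketch for Lemma~\ref{lem:unlikley trade item upper bound}, which \emph{uses} Lemma~\ref{lem:mz19} as a tool.

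If your intended target was Lemma~\ref{lem:unlikley trade item upper bound}, then your plan is essentially identical to the paper's own argument. The paper first proves $\opts(L,\cF|_L)\leq \opts(L,\text{ADD})$ (Lemma~\ref{lem:compare-to-additive}) exactly as you describe: apply Lemma~\ref{lem:mz19} with the optimal additive dual $\hat\lambda_L$, observe that the virtual values $\hat\Phi_i^L$ depend only on the buyer's type distribution and not on the feasibility constraint, and that enlarging the allocation polytope from $P_{\cF|_L}$ to $P_{\text{ADD}}=[0,1]^{|L|}$ can only increase the maximum. It then cites the Cai--Zhao $\log(|L|)$ bound for an additive buyer (Lemma~\ref{lem:mz19-additive}) and identifies each single-item profit with $\E_{b_i,s_i}[(\tilde\varphi_i(b_i)-s_i)^+]$, just as you do. Your discussion of the ``main obstacle'' is on point and is precisely why the paper routes the relaxation through the dual benchmark rather than through a direct mechanism comparison.
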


\vspace{.2in}

Next, we show that $\opts(L,\cF\big|_L)$ is no more than $\opts(L,\text{ADD})$ using Lemma~\ref{lem:mz19}.

\begin{lemma}\label{lem:compare-to-additive}
$\opts(L,\cF\big|_L)\leq \opts(L,\text{ADD}).$
\end{lemma}
\begin{proof}
Let $\hat{\lambda}_L$ be the optimal dual in Lemma~\ref{lem:mz19} when the buyer is additive without any feasibility constraint, and $\hat{\Phi}_i^L(\cdot)$ be the induced virtual value function. We have that 
\begin{align*}
\opts(L,\cF\big|_L)&\leq \max_{x\in P_{\cF|_L}}\E_{\Bb,\Bs}\left[\sum_{i\in L}x_i(\Bb,\Bs)\cdot (\hat{\Phi}_i^L(\Bb)-s_i)\right]\\
&\leq \E_{\Bb,\Bs}\left[\sum_{i \in L}(\hat{\Phi}_i^L(\Bb)-s_i)^+\right]\\
&=\max_{x_i(\Bb,\Bs)\in [0,1]}\E_{\Bb,\Bs}\left[\sum_ix_i(\Bb,\Bs)\cdot (\hat{\Phi}_i^L(\Bb)-s_i)\right]\\
&=\opts(L,\text{ADD}). 
\end{align*}
\end{proof}




Cai and Zhao~\cite{CaiZ19} also give a logarithmic upper bound of the optimal profit for a single additive buyer, using the sum of optimal profit for each individual item.
\begin{lemma}\label{lem:mz19-additive}
\cite{CaiZ19} $$\opts(L,\text{ADD})\leq\log(|L|)\cdot \sum_{i\in L}\opts(\{i\})= \log(|L|)\cdot \sum_{i\in L}\E_{b_i,s_i}[(\varphi_i(b_i)-s_i)^+].$$
\end{lemma}

Together, Lemmas~\ref{lem:compare-to-additive} and \ref{lem:mz19-additive} conclude the proof of Lemma~\ref{lem:unlikley trade item upper bound}:
$$\opts(L,\cF\big|_L)\leq O\left(\log(|L|)\cdot \sum_{i\in L}\E_{b_i,s_i}\left[(\tilde{\varphi}_i(b_i)-s_i)^+\right]\right).$$


\vspace{.4in}
\begin{prevproof}{Theorem}{thm:logn}
The theorem follows directly from Lemmas~\ref{lem:UB of second best GFT},~\ref{lem:mechanism for OPT-B}, ~\ref{sapp:main}, and~\ref{lem:SAPP bounding unlikely trade items}.
\end{prevproof}
\section{Lower Bounds and the First-Best--Second-Best Gap}\label{sec:reduction}

In the unconditional approximation results stated in Section~\ref{sec:log n approx}, we compare the GFT of our mechanism to $\opt$. Readers may be interested in whether our mechanism is also an approximation to $\fb$. In fact, this question is related to one of the major open problems in two-sided markets: \emph{How large is the gap between the second-best and the first-best GFT?} In this section, we consider a unit-demand buyer and present a reduction from achieving a $\fb$ approximation in our multi-dimensional setting to the open problem regarding the gap in single-dimensional two-sided markets.

\vspace{-.1in}
\paragraph{Matching Markets.} This setting has a two-sided market with $n$ buyers, $n$ sellers, and $n$ identical items. Each seller owns one item and each buyer is interested in buying at most one item. Thus the value (or cost) for every agent is a scalar. Here we consider a special case where for every $i\in [n]$, buyer $i$ and seller $i$ can only trade with each other, 
and at most one pair of agents in the market can trade. This is bilateral trade when $n=1$.

\begin{theorem}\label{thm:reduction}
Suppose the buyer is unit-demand in the multi-dimensional setting, and define $\fb$, $\optb$, $\sapp$ as in the previous section. 
Consider the following matching market with $n$ buyers and $n$ sellers: for every $i\in [n]$, buyer $i$ has value drawn from $\DD_i^B$ and seller $i$ has cost drawn from $\DD_i^S$. 
Let $\fbsd=\E_{\Bb,\Bs}[\max_i(b_i-s_i)]$ be the first-best GFT of the matching market defined above (which is the same as $\fb$ in the multi-dimensional unit-demand setting) and $\optsd$ be the second-best GFT. 
For any $c>1$, suppose $\optsd\geq 1/c\cdot\fbsd$, then
$$\max\{\optb,\sapp\}\geq \frac{1}{2c}\cdot \fb.$$
\end{theorem}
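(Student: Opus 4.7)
The plan is to upper-bound $\optsd$ via the sum of two one-sided benchmarks, and then absorb each benchmark into one of $\optb$ and $\sapp$. First I would prove an analog of Lemma~\ref{lem:ub-opt} for the single-dimensional matching market: any BIC/IR/ex-ante-WBB mechanism has GFT at most $\optb + \opts^{\textsc{SD}}$, where $\opts^{\textsc{SD}}$ denotes the optimal super-seller profit in the matching market (one super seller owns all items and faces the $n$ single-minded buyers, with at most one pair trading). The argument follows the duality template of Brustle et al.~\cite{BrustleCWZ17}: Lagrangify the BIC/IR constraints on one side of the market at a time and bound the residual by the corresponding one-sided optimum. A key observation is that the super-buyer benchmark in the matching market coincides with $\optb$ in the multi-dimensional unit-demand setting: both reduce, after optimizing over allocation rules that pick at most one item, to $\E_{\Bb,\Bs}\bigl[\max_i (b_i - \widetilde{\tau}_i(s_i))^+\bigr]$ (cf.\ Lemma~\ref{lem:mechanism for OPT-B}). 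Likewise, the standard single-dimensional Myerson analysis for a super-seller facing $n$ single-minded unit-demand buyers with at most one sale gives $\opts^{\textsc{SD}} = \E_{\Bb,\Bs}\bigl[\max_i (\widetilde{\varphi}_i(b_i) - s_i)^+\bigr]$.

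Next I would instantiate Lemma~\ref{lem:mechanism for OPT-B} with $\cF$ being the unit-demand constraint to obtain a DSIC/ex-post IR/ex-ante SBB two-sided mechanism $\MM^*$ with $\gft(\MM^*) \geq \optb$. For the second term, I would apply Lemma~\ref{sapp:main} to the bi-monotonic allocation
\[
x_i(\Bb, \Bs) = \ind\!\left[i \in \argmax_j \bigl(\widetilde{\varphi}_j(b_j) - s_j\bigr)\right]\cdot \ind\!\left[\widetilde{\varphi}_i(b_i) \geq s_i\right].
\]
One readily checks that $\sum_i x_i(\Bb,\Bs)\le 1$, that $x_i$ is non-increasing in $s_i$, and that $x_i$ is non-decreasing in each $s_j$ with $j\neq i$. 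The resulting \Msapp~mechanism is DSIC, ex-post IR, and ex-ante WBB, with
\[
\gft \;\geq\; \tfrac{1}{4}\,\E_{\Bb,\Bs}\!\left[\max_i(\widetilde{\varphi}_i(b_i) - s_i)^+\right] \;=\; \tfrac{1}{4}\,\opts^{\textsc{SD}},
\]
so $\sapp \geq \tfrac{1}{4}\,\opts^{\textsc{SD}}$. Chaining the inequalities, $\fb = \fbsd \leq c\cdot\optsd \leq c\,(\optb + \opts^{\textsc{SD}}) \leq c\,(\optb + 4\sapp)$, which already gives $\max\{\optb,\sapp\}\ge \fb/(5c)$.

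The main obstacle is sharpening this constant from $5$ down to the claimed $2$. I expect the improvement to come from a unit-demand-specific strengthening of Lemma~\ref{sapp:main}: because the allocation above satisfies $\sum_i q_i(\Bs)\leq 1$ with only one item ever trading, the ``no other item affordable'' event in Lemma~\ref{lem:sapp-main1} can be analyzed more tightly than the generic bound $\tfrac{1}{2}+q_i(\Bs)/2$, pushing the $1/4$ factor up to $1/2$ and hence $\sapp\geq \tfrac{1}{2}\opts^{\textsc{SD}}$. Combined with $\opts^{\textsc{SD}}\leq 2\sapp$, this yields $\optsd\leq \optb + \opts^{\textsc{SD}}\leq 2\max\{\optb,\sapp\}$, giving $\max\{\optb,\sapp\}\geq \fb/(2c)$. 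An alternative route is to forgo the $\optb/\opts^{\textsc{SD}}$ split altogether and instead build a ``better-of-buyer-and-seller-offering'' two-sided mechanism in the spirit of~\cite{BrustleCWZ17}, pairing $\MM^*$ with a seller-offering SAPP whose maximum GFT provably dominates $\optsd/2$; the main technical step there is exhibiting a SAPP whose GFT already matches $\opts^{\textsc{SD}}$ (as it does trivially in the $n=1$ bilateral trade case via $\theta(s)=\widetilde{\varphi}^{-1}(s)$) and then invoking the standard ``max of two $\geq$ average'' trick.
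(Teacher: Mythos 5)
You follow essentially the same route as the paper: upper-bound $\optsd$ by $A+B$, where $A=\E_{\Bb,\Bs}\left[\max_i(\tilde{\varphi}_i(b_i)-s_i)^+\right]$ and $B=\E_{\Bb,\Bs}\left[\max_i(b_i-\tilde{\tau}_i(s_i))^+\right]$ (the paper simply cites the bound from Brustle et al.~\cite{BrustleCWZ17}; your re-derivation via a single-dimensional analogue of Lemma~\ref{lem:ub-opt} is a harmless substitute), note that $\optb=B$ in the unit-demand setting and is achieved as GFT by the mechanism of Lemma~\ref{lem:mechanism for OPT-B}, and feed the allocation $x_i(\Bb,\Bs)=\ind\left[i=\argmax_k(\tilde{\varphi}_k(b_k)-s_k)\wedge \tilde{\varphi}_i(b_i)\geq s_i\right]$ into Lemma~\ref{sapp:main} to lower-bound $\sapp$ in terms of $A$. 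This is exactly the paper's construction.

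The gap is the constant, and your own accounting exposes it: with Lemma~\ref{sapp:main} as stated you only get $\sapp\geq A/4$, hence $\max\{\optb,\sapp\}\geq\frac{1}{5c}\cdot\fb$, not the claimed $\frac{1}{2c}\cdot\fb$. The paper's proof reaches $\frac{1}{2c}$ by asserting the factor-free inequality $\sapp\geq A$ for this allocation (it invokes Lemma~\ref{sapp:main} with the $\frac14$ dropped), so the one missing ingredient is precisely a SAPP mechanism whose GFT matches the seller-side benchmark $A$ with no constant loss---your second, unproven route, which you verify only for the $n=1$ case via $\theta(s)=\tilde{\varphi}^{-1}(s)$. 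Your first proposed fix cannot deliver the theorem even if the strengthening of Lemma~\ref{lem:sapp-main1} went through: from $\sapp\geq A/2$ one only gets $\optsd\leq B+A\leq \optb+2\sapp\leq 3\max\{\optb,\sapp\}$, i.e.\ a $\frac{1}{3c}$ bound; the step ``$\optb+A\leq 2\max\{\optb,\sapp\}$'' requires $\sapp\geq A$ outright, not $\sapp\geq A/2$. So as written your proposal establishes the statement only with $5c$ (conditionally $3c$) in place of $2c$; to match the paper's constant you must prove the lossless bound $\sapp\geq \E_{\Bb,\Bs}\left[\max_i(\tilde{\varphi}_i(b_i)-s_i)^+\right]$ for this unit-demand allocation, which is exactly the step the paper's proof states when it applies Lemma~\ref{sapp:main} without the $\frac14$.
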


The proof of Theorem~\ref{thm:reduction} is straightforward and can be found in Appendix~\ref{appx:reduction}; it directly follows from Lemmas~\ref{lem:mechanism for OPT-B}, \ref{sapp:main}, and an upper bound of $\optsd$ by Brustle et al.~\cite{BrustleCWZ17}.  The main takeaway of Theorem~\ref{thm:reduction} is that, if the largest gap between $\fbsd$ and $\optsd$ is at most (i.e. a constant) $c$ for matching markets, then our mechanism is a $2c$-approximation to $\fb$. Note that if the buyer is additive, such a reduction clearly exists: In the additive case, items can be treated separately without impacting the IC constraint. Then performing a Buyer (or Seller) Offering mechanism\footnote{In bilateral trade, a Buyer Offering mechanism lets the buyer choose a take-it-or-leave-it price for the seller according to her value. And in the Seller Offering mechanism, the seller is asked to pick the price for the buyer.} for every item separately obtains GFT at least $\optsd$~\cite{BrustleCWZ17}, thus approximating $\fb$ by the assumption. 
Theorem~\ref{thm:reduction} shows that for a unit-demand buyer, a similar reduction also exists using the SAPP mechanism.

On the other hand, finding a lower bound for our result (compared to $\opt$) is at least as hard as finding a lower bound 
for the approximation ratio w.r.t.
$\fb$, and thus is \emph{at least as hard} as finding an instance in the matching market that separates $\fbsd$ from $\optsd$---a problem that has long remained open. Indeed,  even in bilateral trade, deciding whether the gap is finite or not is still open. 

\section*{Acknowledgements}
The authors would like to thank Anna Karlin for helpful discussions in the early stages of the paper.

\newpage 
\bibliographystyle{plain}
\bibliography{Yang.bib}
\newpage
\appendix

\section{Examples}\label{sec:lower bounds}
\notshow{\noindent\textbf{Example for Knapsack Constraints}

We have proved the approximation for intersection of matroids and matching constraints. Reader may wonder if the same class of mechanisms applies to more general downward-closed feasibility constraints. At the end of this section, we briefly discuss the performance of \ksppabbrev~mechanisms for general downward-closed $\cF$. As shown in Lemma~\ref{lem:kspp-intersect-matroid}, selectability and \propnoun~are the two crucial properties that guarantee the GFT of our \ksppabbrev~mechanism is comparable to the seller surplus term. Unfortunately, for general downward-closed constraint $\cF$, it remains open whether $\cF$ has $(\delta,\eta)$-selectability for constant $\delta$ and $\eta$. For other constraints proved to have selectability, i.e. knapsack constraints~\cite{FeldmanSZ16}, we show in Example~\ref{example-knapsack} that general knapsack constraints can be $\Theta(n)$-exchangeable. Moreover,  the second statement in Lemma~\ref{lem:kspp-intersect-matroid} can be violated by up to a factor of $\Theta(n)$ if we use  Mechanism~\ref{alg:kspp-matroid}. The evidence suggests that a new mechanism and different analytic techniques are needed to solve more general constraints.   

\begin{definition}[Knapsack Constraint]
A \emph{knapsack constraint} $\cF$ with respect to the ground set $I$ is defined as: $\cF=\{S\subseteq I:~\sum_{i\in S}c_i\leq 1\}$. Here $c_i\in [0,1]$ is the size of element $i$. The knapsack polytope $P_{\cF}=\{x\in [0,1]^{|I|}:~\sum_{i\in I}x_ic_i\leq 1\}$.
\end{definition}
\begin{example}\label{example-knapsack}
For any constant $C,\gamma>0$ such that $\gamma<C$, let $k=\lceil C/\gamma\rceil>1$. Consider the following knapsack constraint $\cF$ with respect to $[n]$. Let $A=[k]$ and $B=\{k+1,\ldots,n\}$. For every item $i\in A$, $c_i=1$; for every item $i\in B$, $c_i=\frac{1}{n}$. In other words, any feasible set in $\cF$ can either include one item from $A$, or any subset of $B$. Consider $S=\{1\}$ and $S'=B$. Let $H$ be the mapping $H:B\mapsto \{\{1\},\emptyset\}$ in Definition~\ref{def:exchangeable}. Since item 1 is not compatible to any item in $B$, $H(i)=\{1\}$ for any $i\in B$. Thus $\beta\geq |\{i\in B~\mid~1\in H(i)\}|=|B|=n-k$. Thus $\cF$ is $(n-k)$-\propname.

Now consider the packing procedure under the scenario in Table~\ref{table:example}.
\begin{table}[h]
    \centering
    \begin{tabular}{|c|c|c|c|}
    \hline
    Item $i$ & $c_i$ & $w_i$ & $v_i=\theta_i^S-s_i$\\
    \hline
     $i<k$  & 1 & $\gamma$ & $1+\epsilon$  \\
    \hline
    $i=k$   & 1 & $C-(k-1)\gamma$ & $1+\epsilon$\\
    \hline
    $i>k$   & $1/n$ & $\gamma$ & $1$\\
    \hline
    \end{tabular}
    \caption{The scenario for some reported sellers' proifile $\Bs$}
    \label{table:example}
\end{table}
Here $\epsilon>0$ is any positive constant. The packing procedure in Mechanism~\ref{alg:kspp-matroid} will pack all items in $A$. On the other hand, among all items that are not packed (set $B$), the maximum weight feasible set is also $B$. We have $\sum_{i\in A}w_i\cdot v_i=C(1+\epsilon)$ while $\sum_{i\in B}v_i=n-k$.

Note that $k$ is a constant. We have
$$\sum_{i\in B}v_i\geq \Omega(n)\cdot \sum_{i\in A}w_i\cdot v_i$$
\end{example}
}

\paragraph{Tight Example of the $\log\left(\frac{1}{r}\right)$-Approximation.}

Consider the case when $n=1$ (bilateral trade). We introduce an example provided by Blumrosen and  Dobzinski~\cite{BlumrosenD16}. They prove that in this example, no fixed posted price mechanisms can achieve an approximation ratio better than $\Omega(\log(1/r))$ compared to the first-best GFT. In addition, we will verify that the statement also holds for the second-best GFT for the same example. It implies that our $\log(\frac{1}{r})$-approximation is tight even compared to the second-best GFT.

\begin{example}[Example in Bilateral Trading~\cite{BlumrosenD16}]\label{example-log(1/r)-tight}
For any $t>0$, consider a buyer and a seller with values on the support $[0,t]$. Let $\lambda=\frac{1}{1-e^{-t}}$. Let $F(b)=\lambda(1-e^{-b})$ with $f(b)=\lambda e^{-b}$ and $G(s)=\lambda(e^{s-t}-e^{-t})$ with $g(s)=\lambda e^{s-t}$. Then 
$$r=\Pr[b\geq s]=\int_{0}^t\int_{0}^b\lambda e^{-b}\cdot \lambda e^{s-t}dsdb=\lambda^2\cdot e^{-t}(t-1+e^{-t})=\frac{t-1}{e^t-1}+\frac{t}{(e^t-1)^2}$$ 
$$\fb=\int_{0}^t\int_{0}^b(b-s)\lambda e^{-b}\cdot \lambda e^{s-t}dsdb=\lambda^2\cdot (\frac{t-2}{e^t}+\frac{t+2}{e^{2t}})$$
In any fixed posted price mechanism, note that the mechanism always achieves a larger GFT by choosing the same price for both agents. The gains from trade from posting at price $p$ is
$$\gft(p)=\int_{0}^p\int_{p}^v(b-s)\lambda e^{-b}\cdot \lambda e^{s-t}dbds=\lambda^2(\frac{t+2}{e^{2t}}+\frac{2}{e^t}-\frac{p+2}{e^{p+t}}-\frac{e^p(t+2-p)}{e^{2t}})<\lambda^2(\frac{t+2}{e^{2t}}+\frac{2}{e^t})$$

When $t$ is sufficiently large, $\fb$ is about $\lambda^2\cdot \frac{t-2}{e^t}$ while $\gft(p)$ is at most $\lambda^2\cdot \frac{2}{e^t}$, as $\frac{t+2}{e^{2t}}$ is negligible. Thus $\gft(p)=O(1/t)\cdot \fb$. On the other hand, $r=\Theta(\frac{t}{e^t})$ for large $t$, $\log(\frac{1}{r})=\Theta(t)$. Thus $\gft(p)=O(1/\log(\frac{1}{r}))\cdot \fb$.

We now verify that $\gft(p)=O(1/\log(\frac{1}{r}))\cdot \opt$ for any $p\in [0,t]$ and sufficiently large $t$. By~\cite{BrustleCWZ17},
$$\opt\geq \E_{b,s}[(b-s)\cdot\ind[\tilde{\varphi}(b)-s\geq 0]]$$

For the above distribution, $\varphi(b)=b-\frac{1-F(b)}{f(b)}=b-1+e^{b-t}$ is monotonic increasing in $b$. Thus $\tilde{\varphi}(b)=\varphi(b)$.
\begin{align*}
\frac{1}{\lambda^2}\cdot\E_{b,s}[(b-s)\cdot\ind[\tilde{\varphi}(b)-s\geq 0]]&=\int_{0}^t\int_0^{b-1+e^{b-t}}(b-s)\cdot e^{s-b-t}dsdb\\
&\geq\int_{0}^t\int_0^{b-1}(b-s)\cdot e^{s-b-t}dsdb\\
&=e^{-t}\cdot\int_{0}^t\int_{1}^{b}k\cdot e^{-k}dkdb &\text{($k=b-s$)}\\
&=e^{-t}\cdot\int_{0}^t (-e^{-k}(k+1)\big|_{1}^{b})ds\\
&=e^{-t}\cdot\int_{0}^t \left[\frac{2}{e}-e^{-b}(b+1)\right]ds\\
&=e^{-t}\cdot (\frac{2t}{e}+\frac{t+2}{e^t}-2)
\end{align*}


Thus when $t$ is sufficiently large, $\opt=\Omega(\lambda^2\cdot\frac{t}{e^t})$ and we have $\gft(p)=O(1/t)\cdot \opt=O(1/\log(\frac{1}{r}))\cdot \opt$. 
\end{example}


\begin{example}[$\sapp$ vs. $\pp$]
\label{example-sapp-beat-fp}
For any fixed $n$, consider the following instance for an additive buyer. $\DD_1^B$ and $\DD_1^S$ are distributions in Example~\ref{example-log(1/r)-tight} for some sufficiently large $t$. Pick any $C>0$. For every $i=2,\ldots,n$, $\DD_i^B$ is a degenerate distribution at $C$, i.e. the value is $C$ with probability 1. Distribution $\DD_i^S$ takes value $C+\epsilon$ with probability $1-\frac{1}{2n}$ and $C$ with probability $\frac{1}{2n}$, for some small $\epsilon>0$.
As shown in  Example~\ref{example-sapp-beat-fp}, when $t$ is large, $r_1=\Theta(\frac{t}{e^t})<\frac{1}{n}$. For $i\geq 2$, $r_i=\frac{1}{2n}$. Thus all items are ``unlikely to trade'' items ($r_i<\frac{1}{n}$).  

Note that for $i\geq 2$, $b_i$ is always no more than $s_i$. By Lemma~\ref{lem:SAPP bounding unlikely trade items}, $$\sapp=\Omega\left(\sum_i\E_{b_i,s_i}[(\tilde{\varphi}_i(b_i)-s_i)^+]\right)=\Omega(\E_{b_1,s_1}[(\tilde{\varphi}_1(b_1)-s_1)^+])$$

In Example~\ref{example-log(1/r)-tight}, when $t$ is sufficiently large, $\E_{b_1,s_1}[(\tilde{\varphi}_1(b_1)-s_1)^+]=\Omega(\lambda^2\cdot\frac{t}{e^t})$. On the other hand, any fixed price mechanism can only gain positive GFT from item 1. Thus $\pp=O(\lambda^2\cdot \frac{2}{e^t})$, which can be arbitrarily far from $\sapp$ as $t$ goes to infinity.
\end{example}

\paragraph{Dependence on $r$ is Necessary.}

We show that the dependence on $r=\min_i{r_i}$ is necessary for the approximation result of fixed posted price mechanisms. More formally, suppose fixed posted price mechanisms achieves an approximation ratio of $f(r_1,\ldots,r_n)$, for some $n$-ary function $f$. We will show that $f(r_1,\ldots,r_n)=\Omega(\log(1/r))$.
Consider the instance shown in Example~\ref{example-sapp-beat-fp}. Clearly $\fb=\E\left[(b_1-s_1)^+\right]$. 
Since all items other than item 1 always contribute 0 gains from trade, no fixed posted price mechanism can achieve better than $\Omega(\log(1/r_1))$-approximation to the first-best. Thus $f(r_1,\ldots,r_n)=\Omega(\log(1/r_1))$. Similarly we have $f(r_1,\ldots,r_n)=\Omega(\log(1/r_i))$ for all $i=1,\ldots,n$. Thus $f(r_1,\ldots,r_n)=\Omega(\log(1/r))$.

\paragraph{SAPP Mechanism is Necessary.} We provide the following example (Example~\ref{example:sapp-necessary}) to show that the class of SAPP mechanisms defined in Mechanism~\ref{mech:SAPP} is necessary to obtain any finite approximation ratio to $\opt$. 

Our example is constructed in the bilateral trade setting. By Lemma 4 of \cite{BrustleCWZ17}, the mechanism used in Lemma~\ref{lem:mechanism for OPT-B} shares the same allocation rule with the Buyer Offering (BO) mechanism, where the buyer picks a take-it or leave-it price according to her value and the seller can choose whether to sell at that price. Let the Seller Offering (SO) mechanism \cite{BlumrosenM16,BrustleCWZ17} be the mechanism analogous to the BO mechanism by switching the role of the buyer and the  seller. Denote $\bo$ and $\so$ the GFT of the BO and SO mechanism respectively. We show that in \Cref{example:sapp-necessary}, both $\pp$ and $\bo$ are arbitrarily far away from $\opt$. 

Following a recent breakthrough by Deng, Mao, Sivan and Wang which shows that $\max\{\so, \bo\}\leq 8.23\cdot\fb$~\cite{deng2021approximately}, we show that for Example~\ref{example:sapp-necessary}, $\so+\bo< c\cdot \fb$ for some absolute constant $c<1$.\footnote{The example was constructed in the early version of the paper~\cite{cai2021multi}. Here we bound $\so$ with a more careful analysis.} Example~\ref{example:sapp-necessary} also shows that for any integer $k$, there exists a bilateral trade instance such that $\so+k\cdot\bo$ is strictly less than $\fb$.\footnote{We can construct a similar family of instances such that for any integer $k$, there is an bilateral trade instance with $\bo+k\cdot \so$  strictly less than $\fb$.} Alternatively, a recent work by Babaioff, Dobzinski and Kupfer~\cite{babaioff2021note} independently obtain a similar result {that constructs a bilateral trade instance where } $\so+\bo<0.99\cdot \fb$.


\begin{lemma}\label{lem:example-3}
Given any integer $m\geq 3$. There exists a bilateral trade instance such that:
\begin{enumerate}
    \item $\bo\leq 1$, and $\pp\leq \log(m)$.
    \item $\frac{1}{4}\cdot \lfloor\log m\rfloor(\log m-\log\log m-1)\leq\fb\leq \log m\cdot (\log m+1)$
    \item $\fb-\so\in \left(\frac{\log\log m-1}{4}, \frac{\log(m+2)}{2}\right]$
\end{enumerate}
Hence for any sufficiently large integer $m$, both of the following inequalities hold: 
\begin{itemize}
    \item $\max\{\pp,\bo\}\leq O(\frac{1}{\log(m)})\cdot\opt$
    \item $\so+\frac{\log\log(m)}{5}\cdot \bo<\fb$
\end{itemize}
Moreover, there exists an integer $m$ and some absolute constant $c<1$ such that $\so+\bo<c\cdot \fb$.
\end{lemma}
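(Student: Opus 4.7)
My plan is to exhibit an explicit bilateral trade instance parameterized by $m$, then bound each of $\fb,\pp,\bo,\so$ by direct computation on a dyadic grid. Set $k=\lfloor\log m\rfloor$. I take the buyer's value $b$ to be a scaled equal-revenue type distribution on $\{2^i: 1\le i\le k\}$, with $\Pr[b\ge 2^i]=\min\{1,(\log m)/2^i\}$, and take the seller's cost $s$ supported on a compatible dyadic grid with probabilities chosen so that Myerson's (ironed) virtual cost $\tilde{\tau}(s)=s+G(s)/g(s)$ is concentrated above the support of $b$. This asymmetry is the heart of the construction: it suppresses the buyer-offering mechanism without allowing $s$ itself to be much larger than $b$, which is precisely what keeps $\fb-\so$ small.

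For $\pp$, I observe that any fixed price $p$ yields GFT at most $p\cdot\Pr[b\ge p]\le\log m$ by the equal-revenue scaling, so $\pp\le\log m$. For $\bo$, I invoke the characterization $\bo=\E[(b-\tilde{\tau}(s))^+]$ already used in Lemma~\ref{lem:mechanism for OPT-B}; by the choice of the seller's distribution, this integrand vanishes except on a small event, and a direct calculation yields $\bo\le 1$. For the bounds on $\fb$, I compute $\E[(b-s)^+]$ directly on the dyadic supports: the upper bound is $\fb\le\E[b]\le k(k+1)$ from the scaling, and the lower bound $\fb\ge\tfrac14 k(k-\log k-1)$ follows by summing over the $\Theta(k-\log k)$ levels on which $b$ exceeds $s$ with positive probability, each contributing $\Theta(\log m)$ in expectation.

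The bounds on $\fb-\so$ come from the symmetric characterization $\so=\E[(\tilde{\varphi}(b)-s)^+]$, where $\tilde{\varphi}$ is the buyer's ironed virtual value. The difference $\fb-\so$ then equals $\E[(b-s)^+-(\tilde{\varphi}(b)-s)^+]$, the gap contributed by ironing the buyer. On the equal-revenue-type dyadic support the ironing loses roughly one level's worth of value per trade, yielding the claimed interval $\bigl((\log\log m-1)/4,\; \log(m+2)/2\bigr]$ after a short calculation. The three bullet consequences are then short deductions: since $\opt\ge\so=\Omega(\log^2 m)$ while $\max\{\pp,\bo\}\le\log m$, the first bullet gives the $O(1/\log m)$ factor; the second bullet rearranges to $\fb-\so > \tfrac{\log\log m}{5}\cdot\bo$, which holds once $m$ is large by combining the lower bound on $\fb-\so$ with $\bo\le 1$; and the final claim holds by fixing any $m$ with $\log\log m>5$ (so $\fb-\so>1\ge\bo$), for which $c:=(\so+\bo)/\fb$ is a concrete absolute constant strictly less than $1$.

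The main obstacle is engineering the seller's cost distribution so that $\tilde{\tau}(s)$ is simultaneously (i) large enough to cap $\bo$ by the small constant $1$ and (ii) moderate enough that $s$ stays within $O(\log m)$ of $b$ on the high-GFT events, keeping $\fb-\so$ controlled from above. Ironing on a discrete geometric grid introduces boundary effects at the highest and lowest levels of the support, and matching the tight numerical constants $\tfrac14$ and $\tfrac12$ appearing in the three items of the lemma requires careful accounting of these boundary terms.
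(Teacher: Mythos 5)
There is a genuine gap, and it is in the choice of the buyer's distribution. You take $b$ to be a scaled equal-revenue distribution on $\{2^i\}$ with $\Pr[b\ge 2^i]=\min\{1,(\log m)/2^i\}$. For such a distribution the (discrete) virtual value collapses: at $b=2^i$ one computes $\varphi(2^i)=2^i-\frac{\Pr[b\ge 2^{i+1}]\cdot 2^i}{f(2^i)}=0$ for every non-top type, so $\tilde{\varphi}(b)\approx 0$ on essentially the whole support. Since $\so=\E[(b-s)\cdot\ind[\tilde{\varphi}(b)\ge s]]$ (note: not $\E[(\tilde{\varphi}(b)-s)^+]$ as you write — the SO mechanism's GFT is the true surplus restricted to the event that the buyer's virtual value clears the cost), the seller-offering mechanism trades only when $s$ is at the very bottom of its support or $b$ is the top type, so $\fb-\so=\Theta(\fb)=\Theta(\log^2 m)$. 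This destroys item 3 of the lemma, whose entire point is the upper bound $\fb-\so\le\log(m+2)/2$; your claim that "ironing loses roughly one level's worth of value per trade" is exactly what fails for equal-revenue distributions. The paper avoids this by putting the equal-revenue-type structure on the \emph{seller's} side (costs $2^m-2^k$ with $g(2^m-2^k)=2^{-(k+1)}$, giving $\tau(s)\equiv 2^m$ above the buyer's support, which is what caps $\bo\le 1$), while the buyer's support $\{2^m-2^k\}_{k\le L}$ with the telescoping weights $p_k$ is engineered so that $\varphi(2^m-2^k)=2^m-2^k-2^{k-1}(m-k)$ stays within only $O(\log(m-k+2))$ levels of the seller's grid below $b$ — that proximity is what makes $\fb-\so$ logarithmic rather than comparable to $\fb$.

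Two smaller but real problems: your bound $\pp\le\log m$ via "any fixed price $p$ yields GFT at most $p\cdot\Pr[b\ge p]$" conflates revenue with gains from trade; the GFT at price $p$ is $\E[(b-s)\ind[b\ge p\ge s]]$, which for your buyer can be $\Theta(\log^2 m)$ at small $p$ unless the seller's costs block it, so the bound must use both distributions (in the paper it follows from the constancy of $Q_k=\bigl(\sum_{j\le k}p_j\bigr)(m-k)$). And the final constant-$c$ claim needs an explicit finite upper bound on $\fb$ together with a strict lower bound on $\fb-\so$ exceeding $\bo$ by a fixed amount at a concrete $m$; "fix any $m$ with $\log\log m>5$" only works once the interval in item 3 has actually been established, which your construction does not deliver.
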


\begin{example}\label{example:sapp-necessary}
For every positive integer $m\geq 2$, consider the bilateral trade instance where the seller's and buyer's (discrete) distributions are shown in the following tables. In the table, $g(s)$ (or $f(b)$) represents the density of the corresponding value in the support. 

\begin{table}[h]
    \centering
    \begin{tabular}{|c|c|c|c|c|c|c|}
    \hline
     $s$ & 0 & $2^m-2^{m-1}$ & \ldots & $2^m-2^k$ & \ldots & $2^m-1$\\
    \hline
    $g(s)$ & $\frac{1}{2^m}$ & $\frac{1}{2^m}$ & \ldots & $\frac{1}{2^{k+1}}$  & \ldots & $\frac{1}{2}$\\
    \hline
    $\tau(s)$ & 0 & $2^m$ & \ldots & $2^m$  & \ldots & $2^m$\\
    \hline
    \end{tabular}
    \caption{Seller's Distribution}
\end{table}

\begin{table}[h]
    \centering
    \begin{tabular}{|c|c|c|c|c|c|}
    \hline
     $b$ & $2^m-2^{L}$ & \ldots & $2^m-2^k$ & \ldots & $2^m-1$\\
    \hline
    $f(b)$ & $p_L$ & \ldots & $p_k$  & \ldots & $p_0$\\
    \hline
    \end{tabular}
    \caption{Buyer's Distribution}
\end{table}

For the seller's distribution, one can verify that the virtual value $\tau(s)$ is 0 if $s=0$ and $2^m$ elsewhere.\footnote{For discrete distributions, the virtual value for the seller's distribution is defined as $\tau(s)=s+\frac{\sum_{t<s}g(t)\cdot (s-s')}{g(s)}$, where $s'$ is the largest type in the support that is smaller than $s$.} 

For the buyer's distribution, choose $L=\lceil m-\log(m)\rceil$. Then define the sequence $\{p_k\}_{k=0}^L$ as follows: Construct the sequence $\{q_k\}_{k=0}^L$ with $q_0=1$, $q_1=\frac{1}{m-1}$, and for every $k=2,\ldots,L$, $q_{k}=\frac{m-k+2}{m-k}\cdot q_{k-1}$. Then for every $k=0,\ldots,L$, define $p_k=q_k/\sum_{j=0}^Lq_j$ for every $k$. 

By induction, we have $\sum_{j=0}^kq_j=q_{k+1}\cdot (m-k-1)$. Thus 
\begin{equation}\label{equ:p-relation}
\sum_{j=0}^kp_j=p_{k+1}\cdot (m-k-1).
\end{equation}
\end{example}


\begin{prevproof}{Lemma}{lem:example-3}

In the BO mechanism (as well as the mechanism in Lemma~\ref{lem:mechanism for OPT-B}), the buyer trades with the seller if and only if $b\geq \tau(s)$. Thus in Example~\ref{example:sapp-necessary}, the item trades only if $s=0$.

$$\bo=\sum_{k=0}^L (2^m-2^k)\cdot p_k\cdot\frac{1}{2^m}\leq \sum_{k=0}^L p_k.$$

For $\fb$, we have \begin{align*}
\fb&=\sum_{k=0}^L\left(\sum_{j=k+1}^{m-1}(2^j-2^k)\cdot p_k\cdot\frac{1}{2^{j+1}}+\frac{(2^m-2^k)p_k}{2^m}\right)\\
&\geq \sum_{k=0}^L p_k\cdot \left(\sum_{j=k+1}^{m-1}\frac{2^{j-1}}{2^{j+1}}+\frac{2^{m-1}}{2^m}\right)\geq\frac{1}{4}\cdot \sum_{k=0}^Lp_k\cdot (m-k),
\end{align*}
where the first inequality follows from the fact that $2^j-2^k\geq 2^{j-1}$ for any $j>k$.

Now consider any fixed posted price mechanism. Clearly the largest GFT is achieved when the posted prices are same for both the buyer and the seller. Without loss of generality we can assume the posted price $p$ lies in the support of distributions, i.e., $p=2^m-2^k$ for $k=0,\ldots,L$. For any $k$, the mechanism with posted price $p=2^m-2^k$ achieves GFT 
$$\sum_{j=0}^k\left(\sum_{i=k+1}^{m-1}(2^i-2^j)\cdot p_j\cdot\frac{1}{2^{i+1}}+\frac{(2^m-2^j)p_j}{2^m}\right)\leq \sum_{j=0}^kp_j\left(\sum_{i=k+1}^{m-1}\frac{2^i}{2^{i+1}}+1\right)\leq \sum_{j=0}^kp_j\cdot (m-k).$$

Let $Q_k=\sum_{j=0}^kp_j\cdot (m-k)$. Note that by the choice of sequence $\{p_k\}_{k=0}^L$ in Example~\ref{example:sapp-necessary}, we have for any $k=0,\ldots,L-1$ that
$$Q_{k+1}-Q_k=\sum_{j=0}^{k+1}p_j\cdot (m-k-1)-\sum_{j=0}^kp_j\cdot (m-k)=p_{k+1}\cdot (m-k-1)-\sum_{j=0}^k p_j=0.$$

Thus all $Q_k$s share the same value. Let this value be $Q$. Then $Q=Q_L=\sum_{j=0}^Lp_j\cdot (m-L)=m-L\leq \log(m)$. Moreover, $\pp\leq \max_kQ_k=Q\leq \log(m)$. $\bo\leq \sum_{k=0}^Lp_k=\frac{Q_L}{m-L}\leq\frac{1}{\log(m)}\cdot Q\leq 1$. 
On the other hand, 
\begin{equation}\label{equ:sapp is necessary FB}
\begin{aligned}
\fb&\geq\frac{1}{4}\cdot \sum_{k=0}^Lp_k\cdot (m-k)=\frac{1}{4}\cdot \sum_{k=0}^L\sum_{j=0}^{k-1}p_j~~~~~\text{(Equation~\ref{equ:p-relation})}\\
&=\frac{1}{4}\sum_{k=0}^L \frac{Q_{k-1}}{m-k+1}~~~~~\text{(Definition of $Q_k$)}\\
&\geq \frac{Q}{4}\cdot \int_{m-L+1}^{m+1}\frac{1}{x}dx=\frac{Q}{4}\cdot\log\left(\frac{m+1}{m-L+1}\right)\\
&\geq \frac{Q}{4}\cdot (\log(m)-\log\log(m)-1)~~~~~\text{(Definition of $L$)}
\end{aligned}
\end{equation}

When $m$ is sufficiently large, we have $\fb\geq \frac{Q}{5}\cdot \log(m)$. Thus
$$\frac{\log(m)}{5}\cdot\max\{\pp,\bo\}\leq \fb.$$


We derive the upper bound of $\fb$ similar to \Cref{equ:sapp is necessary FB}:

\begin{equation}\label{equ:SAPP is necessary FB 2}
\begin{aligned}
\fb&=\sum_{k=0}^L\left(\sum_{j=k+1}^{m-1}(2^j-2^k)\cdot p_k\cdot\frac{1}{2^{j+1}}+\frac{(2^m-2^k)p_k}{2^m}\right)\\
&\leq\sum_{k=0}^L p_k\cdot \left(\sum_{j=k+1}^{m-1}\frac{2^j}{2^{j+1}}+1\right)\\
&\leq\sum_{k=0}^Lp_k\cdot (m-k)\\
&=\sum_{k=0}^L \frac{Q_{k-1}}{m-k+1}~~~~\text{(\Cref{equ:p-relation} and definition of $Q_k$)}\\
&=Q\cdot \sum_{k=0}^L \frac{1}{m-k+1}\\
&\leq Q\cdot \int_{m-L}^{m+1}\frac{1}{x}dx=Q\cdot \log\left(\frac{m+1}{m-L}\right)\\
&\leq \log m\cdot (\log m+1)
\end{aligned}
\end{equation}

Next we analyze the GFT of the SO mechanism. By \cite{BrustleCWZ17},

$$\so=\E_{b,s}[(b-s)\cdot \ind[\tilde{\varphi}(b)-s\geq 0]].$$

We calculate the buyer's virtual value.\footnote{For discrete distributions, the buyer's virtual value is defined as $\varphi(b)=b-\frac{\sum_{t>b}f(t)\cdot (b'-b)}{f(b)}$, where $b'$ is the smallest type in the support that is larger than $b$.} We have that $\varphi(2^m-1)=2^m-1$, and for every $k=1,\ldots,L$, $$\varphi(2^m-2^k)=(2^m-2^k)-\frac{\sum_{j=0}^{k-1}p_j\cdot 2^{k-1}}{p_k}=2^m-2^k-2^{k-1}(m-k).$$

{
Thus $\varphi(\cdot)$ is monotone increasing and $\tilde{\varphi}(b)=\varphi(b)$ for every $b$ in the support. Let $k,j$ be the numbers such that $b=2^m-2^k$ and $s=2^m-2^j$. Then $\varphi(b)\geq s$ if and only if $j\geq k+\log(m-k+2)$.
We have  
\begin{equation}\label{equ:sapp is necessary SO}
\begin{aligned}
\so&=\sum_{k=0}^L\left(\sum_{j=k+\lceil\log(m-k+2)\rceil}^{m-1}(2^j-2^k)\cdot p_k\cdot\frac{1}{2^{j+1}}+\frac{(2^m-2^k)p_k}{2^m}\right)\\
&=\fb-\sum_{k=0}^L\sum_{j=k+1}^{k+\lceil\log(m-k+2)\rceil-1}(2^j-2^k)\cdot p_k\cdot\frac{1}{2^{j+1}}
\end{aligned}
\end{equation}
}
By \Cref{equ:sapp is necessary SO},
\begin{align*}
\fb-\so
&\leq \sum_{k=0}^L\sum_{j=k+1}^{k+\lceil\log(m+2)\rceil-1} p_k\cdot\frac{2^j}{2^{j+1}}\\
&\leq \frac{\log(m+2)}{2}\sum_{k=0}^Lp_k=\frac{\log(m+2)}{2}.
\end{align*}
Note that $\fb\geq \frac{Q}{5}\cdot \log(m)=\Omega(\log^2(m))$. When $m\to \infty$, we have $\frac{\opt}{\fb}\to 1$ since $\opt\geq \so$. Thus $$\max\{\pp,\bo\}\leq O(\frac{1}{\log(m)})\cdot\opt.$$

By \Cref{equ:sapp is necessary SO},

\begin{equation}\label{equ:sapp is necessary SO 2}
\begin{aligned}
\fb-\so
&\geq \sum_{k=0}^L\sum_{j=k+1}^{k+\lceil\log(m-L+2)\rceil-1}(2^j-2^{j-1})\cdot p_k\cdot\frac{1}{2^{j+1}}\\
&=\frac{\lceil\log(m-L+2)\rceil-1}{4}\sum_{k=0}^Lp_k>\frac{\log\log(m)-1}{4},    
\end{aligned}
\end{equation}
where the last inequality follows from $\sum_{k=0}^Lp_k=1$ and $L=\lceil m-\log(m)\rceil$. Since $\bo\leq \sum_{k=0}^Lp_k=1$ and $\frac{\log\log (m)-1}{4}>\frac{\log\log (m)}{5}$ for sufficiently large $m$, we have $\so+\frac{\log\log(m)}{5}\cdot \bo<\fb$.

To prove the last part of the statement, choose $m$ such that $\log\log(m)=6$. Then by \Cref{equ:sapp is necessary SO 2}, $\fb-\so>\frac{5}{4}$. $\bo\leq 1$. Moreover by \Cref{equ:SAPP is necessary FB 2}, $\fb\leq e^6\cdot (e^6+1)$. Thus by choosing $c=1-\frac{1}{4e^6(e^6+1)}<1$, we have
$$\bo+\so<\fb-\frac{1}{4}\leq c\cdot \fb$$
\end{prevproof}
\section{Mechanism Design Background}\label{apx:md}

\notshow{
\noindent\textbf{IC, IR constraints}

\begin{itemize}
    \item BIC: For every agent, reporting her true value (or cost) maximizes her expected utility over the profile of other agents.
    \item DSIC: For every agent, reporting her true value (or cost) maximizes her expected utility, no matter what other agents report.
    \item (Bayesian) IR: For every agent, reporting her true value (or cost) derives non-negative utility over the profile of other agents.
    \item Ex-post IR: For every agent, reporting her true value (or cost) derives non-negative utility, no matter what other agents report.
\end{itemize}
}

\noindent\textbf{Myerson's Lemma}

For reference, we formally state Myerson's Lemma.


\begin{lemma}[Myerson's Lemma~\cite{Myerson81}] \label{lem:mye}
In a setting with single-dimensional preferences, where the buyer's distribution for item $i$ is $\DD_i^B$ and seller $i$'s distribution is $\DD_i^S$, let the probability of trade for each item be $\hat x=\{\hat x_i(\Bb,\Bs)\}_{i\in [n]}$.
Denote the interim allocation rule for the sellers as $\hat{x}_i(\Bs) = \E_\Bb[\hat{x}_i(\Bb,\Bs)]$. 
In order to be BIC, sellers' payments must meet the following payment identity:
$$p_i^S(\Bs)=s_i\cdot \hat{x}_i(s_i,s_{-i})+\int_{s}^\infty \hat{x}_i(t,s_{-i})dt.$$
Then let $\tau_i(s_i) = s_i + \frac{G_i(s_i)}{g_i(s_i)}$ be seller $i$'s Myerson virtual value, and $\widetilde{\tau}_i(s_i)$ the \emph{ironed} virtual value, obtained by averaging the virtual values in quantile space to enforce that $\widetilde{\tau}_i(\cdot)$ is monotone non-decreasing in $s_i$.  Then expected payment equals expected virtual welfare:
$$\E_{\Bs}\left[\sum_i p_i^S(\Bb,\Bs)\right]=\E_{\Bs}\left[\sum_i \hat x_i(\Bb, \Bs) \cdot \widetilde{\tau}_i(s_i)\right].$$

Further, let $\varphi_i(b_i) = b_i - \frac{1-F_i(b_i)}{f_i(b_i)}$ be a single-dimensional buyer's Myerson virtual value, and $\widetilde{\varphi}_i(b_i)$ the ironed virtual value.  Then
$$\int _p ^\infty \widetilde{\varphi}_i(x) dx = p \cdot [1-F_i(p)].$$

Similarly, for IC single-dimensional payments $p_i^B(b_i)$, then
$$\E_{\Bs}\left[\sum_i p_i^B(\Bb,\Bs)\right]=\E_{\Bs}\left[\sum_i \hat x_i(\Bb, \Bs) \cdot \widetilde{\varphi}_i(b_i)\right].$$

\end{lemma}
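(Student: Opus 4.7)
The plan is to prove the four claims of Myerson's lemma in sequence, relying on the standard envelope/revenue-equivalence argument, and to flag the ironing step as the only delicate point.

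First, I would establish the payment identity by a single-agent deviation argument. Fix any seller $i$ and seller profile $\Bs_{-i}$, and define the interim utility
\[ U_i(s_i) = \E_{\Bb, s_{-i}}\bigl[p_i^S(\Bb, s_i, s_{-i}) - s_i \cdot \hat{x}_i(\Bb, s_i, s_{-i})\bigr]. \]
BIC applied to both the deviation $s_i \to s_i'$ and $s_i' \to s_i$ sandwiches the utility difference and yields, in the usual way, that $\hat{x}_i(s_i, s_{-i})$ is non-increasing in $s_i$ and that $U_i'(s_i) = -\hat{x}_i(s_i, s_{-i})$ almost everywhere. Integrating from $s_i$ up to the top of the support (using that IR together with the seller's option to report arbitrarily large cost forces $U_i(\infty) = 0$) gives $U_i(s_i) = \int_{s_i}^\infty \hat{x}_i(t, s_{-i})\,dt$, which rearranges to the stated payment identity. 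The buyer identity is completely analogous, integrating $U_i'(b_i) = \hat{x}_i(b_i, b_{-i})$ upward from the bottom of the support.

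Second, to derive expected-payment-equals-virtual-welfare, I would take $\E_{s_i\sim G_i}$ of the payment identity and apply Fubini to swap the order of integration in the tail term:
\[ \E_{s_i}\!\left[\int_{s_i}^\infty \hat{x}_i(t, s_{-i})\,dt\right] = \int_0^\infty \hat{x}_i(t, s_{-i})\, G_i(t)\,dt = \E_{s_i}\!\left[\hat{x}_i(s_i, s_{-i}) \cdot \tfrac{G_i(s_i)}{g_i(s_i)}\right]. \]
Combining with the $s_i \hat{x}_i$ term recovers $\E_{s_i}[\hat{x}_i(s_i, s_{-i}) \cdot \tau_i(s_i)]$, and then averaging over $s_{-i}$ and summing over $i$ gives the unironed version. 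The buyer side is identical with $\varphi_i$ in place of $\tau_i$, using integration by parts on $\E_{b_i}[b_i \hat{x}_i(b_i)]$ with $-(1-F_i)$ as the antiderivative of $f_i$.

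Third, I would reduce the ironed statement to the unironed one. Because BIC forces $\hat{x}_i$ to be monotone in $s_i$ (equivalently, non-decreasing in the seller's quantile), $\hat{x}_i$ is constant on every maximal ironed interval. On each such interval $\widetilde{\tau}_i$ is, by construction, the $g_i$-weighted average of $\tau_i$, so the identity $\int_I \hat{x}_i \tau_i\, dG_i = \int_I \hat{x}_i \widetilde{\tau}_i\, dG_i$ holds intervalwise and therefore globally, upgrading the unironed identity to the ironed one. The auxiliary formula $\int_p^\infty \widetilde{\varphi}_i(x) f_i(x)\,dx = p(1-F_i(p))$ (where the displayed integrand should carry the density $f_i$, consistent with how the formula is invoked in the proof of Lemma~\ref{sapp:main}) follows by applying this same interval-by-interval averaging to the elementary identity $\int_p^\infty \varphi_i(x) f_i(x)\,dx = p(1-F_i(p))$, which is itself integration by parts on $\varphi_i(x) f_i(x) = x f_i(x) - (1 - F_i(x))$ together with $x(1 - F_i(x)) \to 0$.

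The only real obstacle is the ironing bookkeeping in the last step: one has to verify that when $p$ is interior to an ironed interval $[a,b]$, the integral $\int_p^b \widetilde{\varphi}_i(x) f_i(x)\,dx$ still matches $\int_p^b \varphi_i(x) f_i(x)\,dx$. This holds because $\widetilde{\varphi}_i$ is constant on $[a,b]$ and equals the $f_i$-weighted average of $\varphi_i$ over $[a,b]$, so one can replace $\widetilde{\varphi}_i$ by $\varphi_i$ under the $f_i\, dx$ integral over any sub-interval whose endpoints are themselves boundaries of ironed regions; invoking this at $p$ and at $\infty$ collapses the formula. Everything else is direct calculus on the envelope representation.
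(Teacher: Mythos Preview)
The paper does not supply its own proof of this lemma; it is stated in Appendix~\ref{apx:md} purely as background with a citation to~\cite{Myerson81}. Your derivation of the payment identity and of the \emph{unironed} payment-equals-virtual-welfare formula via the envelope argument and Fubini is the standard one and is correct.

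The gap is in your ironing step. You assert that because BIC forces $\hat{x}_i$ to be monotone in $s_i$, it must be constant on every maximal ironed interval; this implication is false. Any monotone allocation is compatible with BIC, and monotonicity in no way pins down $\hat{x}_i$ on an ironed interval. In quantile space, with $R$ the revenue curve and $\widetilde{R}$ its concave hull, integration by parts gives
\[
\int_0^1 \hat{x}(q)\,\widetilde{R}'(q)\,dq - \int_0^1 \hat{x}(q)\,R'(q)\,dq
= -\int_0^1 \hat{x}'(q)\bigl(\widetilde{R}(q)-R(q)\bigr)\,dq,
\]
which vanishes only when $\hat{x}'=0$ wherever $\widetilde{R}>R$, i.e., precisely when $\hat{x}$ is constant on each ironed interval. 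So for a generic BIC allocation the expected payment equals the \emph{unironed} virtual welfare, and the ironed version is only an inequality. The same issue undermines your treatment of $\int_p^\infty \widetilde{\varphi}_i(x)f_i(x)\,dx$: when $p$ lies in the interior of an ironed interval $[a,b]$, you cannot replace $\widetilde{\varphi}_i$ by $\varphi_i$ on $[p,b]$, because the averaging identity holds only over the full interval $[a,b]$; indeed the difference is exactly $\widetilde{R}(1-F_i(p))-R(1-F_i(p))>0$. The honest statement is the unironed one, together with the remark that the ironed substitution is valid whenever the allocation is constant on ironed intervals---which is the case in every invocation the paper makes (e.g., in Lemma~\ref{lem:mechanism for OPT-B} the allocation is defined through $\widetilde{\tau}_i$, and in Lemma~\ref{sapp:main} the relevant threshold can be taken at an ironed boundary).
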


\section{Missing Details from 
Section~\ref{sec:log1/r}}\label{sec:appx_log1/r}

\subsection{Missing Proofs from the Upper Bound of $\fb$ in Section~\ref{subsec:breaking terms}}

\notshow{
\begin{prevproof}{Lemma}{lem:x_i-vs-y_i}
Note that for every $i\in [n]$, $b_i<x_i\wedge s_i>x_i$ implies that $b_i<s_i$.  We have
\begin{align*}
    1-r_i&=\Pr_{b_i\sim \DD_i^B,s_i\sim \DD_i^S}[b_i<s_i]
\leq \Pr_{b_i,s_i}[b_i<x_i\wedge s_i>x_i]=(1-\frac{r_i}{2})\cdot (1-\Pr_{s_i}[s_i\leq x_i]). 
\end{align*}

Suppose $x_i<y_i$. Then 
$(1-\frac{r_i}{2})\cdot (1-\Pr_{s_i}[s_i\leq x_i])\geq (1-\frac{r_i}{2})^2>1-r_i.$
This is a contradiction. Thus $x_i\geq y_i$.
\end{prevproof}
}

\notshow{
\begin{figure}[h!] 
	\centering
	\includegraphics[scale=.5]{eventsE} 
	\caption{Events $E_{ij}$ and $E'_{ij}$ used in the upper bound of GFT, and the restricted events $\overline{E}_{ij}$ and $\overline{E}'_{ij}$.}
	\label{fig:events} 
\end{figure}
}

\begin{prevproof}{Lemma}{lem:breaking-terms-1}

For every $i,\Bs,b_i$, define $$q_i(b_i,\Bs)=(b_i-s_i)^+\cdot\Pr_{b_{-i}}[i\in S^*(\Bb,\Bs)]\cdot \ind\left[ \overline{F_i}^{-1}(\frac{1}{2^{j-1}})\leq s_i\leq \overline{F_i}^{-1}(\frac{1}{2^{j}})\right].$$ Then we have that $q_i(b_i,\Bs)\geq 0$ is non-decreasing in $b_i$, as both $b_i-s_i$ and the probability $\Pr_{b_{-i}}[i\in S^*(\Bb,\Bs)]$ is non-decreasing in $b_i$.

Since $\theta_{ij}=\overline{F_i}^{-1}(\frac{1}{2^{j}})$ and $\Pr_{b_i}\left[b_i\geq \overline{F_i}^{-1}(\frac{1}{2^{j}})\right] = \frac{1}{2} \Pr_{b_i}\left[b_i\geq \overline{F_i}^{-1}(\frac{1}{2^{j-1}})\right]$, we have

\begin{equation}\label{equ:proof-term4-2}
\begin{aligned}
\E_{b_i}\left[q_i(b_i,\Bs)\cdot\ind\left[b_i\geq\theta_{ij}\right]\right]
\geq&q_i(\theta_{ij},\Bs)\cdot\Pr_{b_i}\left[b_i\geq\theta_{ij}\right]\\
= & q_i(\theta_{ij},\Bs)\cdot\Pr_{b_i}\left[\overline{F_i}^{-1}(\frac{1}{2^{j-1}})\leq b_i< \theta_{ij}\right]\\
\geq & \E_{b_i}\left[q_i(b_i,\Bs)\cdot\ind\left[\overline{F_i}^{-1}(\frac{1}{2^{j-1}})\leq b_i< \theta_{ij}\right]\right].
\end{aligned}
\end{equation}


Thus we have

\begin{align*}
    &\E_{\Bb,\Bs}\left[\sum_i(b_i-s_i)^+\cdot\ind[i\in S^*(\Bb,\Bs)\wedge E_{ij}] \right] \\
    =&\sum_i\E_{b_i,\Bs}\left[q_i(b_i,\Bs)\cdot\ind[b_i\geq \overline{F_i}^{-1}(\frac{1}{2^{j-1}})]\right]\\
    \leq& 2\cdot\sum_i\E_{b_i,\Bs}\left[q_i(b_i,\Bs)\cdot\ind[b_i\geq \theta_{ij}]\right]~~~~~~~\text{(Inequality~\eqref{equ:proof-term4-2})}\\
    =&2\cdot\E_{\Bb,\Bs}\left[\sum_i(b_i-s_i)^+\cdot\ind[i\in S^*(\Bb,\Bs)\wedge \overline{E}_{ij}] \right]\\
    \leq & 2\cdot\E_{\Bb,\Bs}\left[\sum_i(b_i-\theta_{ij})^+\cdot\ind[i\in S^*(\Bb,\Bs)\wedge \overline{E}_{ij}]\right]+2\cdot\E_{\Bb,\Bs}\left[\sum_i(\theta_{ij}-s_i)^+\cdot\ind[i\in S^*(\Bb,\Bs)\wedge \overline{E}_{ij}]\right]
\end{align*}

Moreover, we have
\begin{equation}
\begin{aligned}
    &\E_{\Bb,\Bs}\left[\sum_i(b_i-\theta_{ij})^+\cdot \ind[i\in S^*(\Bb,\Bs)]\cdot \ind[\overline{E}_{ij}]\right]\\ 
    \leq& \E_{\Bb,\Bs}\left[\sum_i(b_i-\theta_{ij})^+\cdot \ind[s_i\leq \theta_{ij}]\cdot \ind[i\in S^*(\Bb,\Bs)]\right]\\
    \leq& \E_{\Bb,\Bs}\left[\max_{S\in \cF}\sum_{i\in S}\left\{(b_i-\theta_{ij})^+\cdot \ind[s_i\leq \theta_{ij}]\right\}\right]
\end{aligned}
\end{equation}
Similarly,
$$\E_{\Bb,\Bs}\left[\sum_i(\theta_{ij}-s_i)^+\cdot\ind[i\in S^*(\Bb,\Bs)\wedge \overline{E}_{ij}]\right]\leq \E_{\Bb,\Bs}\left[\max_{S\in \cF}\sum_{i\in S}\left\{(\theta_{ij}-s_i)^+\cdot \ind[b_i\geq\theta_{ij}]\right\}\right]$$

\end{prevproof}

\begin{prevproof}{Lemma}{lem:breaking-terms-2}
For every $i\in [n]$ and $j=1,...,\lceil \log(2/r)\rceil$, let $E_{ij}'$ be the event that $G_i^{-1}(\frac{1}{2^j})\leq b_i\leq G_i^{-1}(\frac{1}{2^{j-1}})\wedge s_i\leq G_i^{-1}(\frac{1}{2^{j-1}})$ and $\overline{E}_{ij}'$ be the event that $G_i^{-1}(\frac{1}{2^j})\leq b_i\leq G_i^{-1}(\frac{1}{2^{j-1}})\wedge s_i\leq G_i^{-1}(\frac{1}{2^{j}})$. We have
$$\circled{2}\leq \textstyle\sum\limits_{j=1}^{\lceil \log(\frac{2}{r})\rceil}\E_{\Bb,\Bs}\left[\sum_i(b_i-s_i)^+\cdot\ind[i\in S^*(\Bb,\Bs)\wedge E_{ij}'] \right].$$

Fix any $j$. For every $i,\Bb,s_i$, define $$q_i(\Bb,s_i)=(b_i-s_i)^+\cdot\Pr_{s_{-i}}[i\in S^*(\Bb,\Bs)]\cdot \ind\left[G_i^{-1}(\frac{1}{2^j})\leq b_i\leq G_i^{-1}(\frac{1}{2^{j-1}})\right].$$ Then we have that $q_i(\Bb,s_i)>0$ is non-increasing in $s_i$. Since $\theta_{ij}'=G_i^{-1}(\frac{1}{2^j})$ and $\Pr_{s_i}\left[s_i\leq G_i^{-1}(\frac{1}{2^{j}})\right] = \frac{1}{2} \Pr_{s_i}\left[s_i\leq G_i^{-1}(\frac{1}{2^{j-1}})\right]$, we have

\begin{equation}\label{equ:breaking-terms-2}
\begin{aligned}
\E_{s_i}\left[q_i(\Bb,s_i)\cdot\ind\left[s_i\leq\theta_{ij}'\right]\right]
\geq&q_i(\Bb,\theta_{ij}')\cdot\Pr_{s_i}\left[s_i\leq\theta_{ij}'\right]\\
= & \frac{1}{2}q_i(\Bb,\theta_{ij}')\cdot \Pr_{s_i}\left[s_i\leq G_i^{-1}(\frac{1}{2^{j-1}})\right]\\
\geq & \frac{1}{2} \E_{s_i}\left[q_i(\Bb,s_i)\cdot\ind\left[s_i\leq G_i^{-1}(\frac{1}{2^{j-1}})\right]\right].
\end{aligned}
\end{equation}


Thus we have
\begin{align*}
    &\E_{\Bb,\Bs}\left[\sum_i(b_i-s_i)^+\cdot\ind[i\in S^*(\Bb,\Bs)\wedge E_{ij}']\right]\\
    =&\sum_i\E_{\Bb,s_i}\left[q_i(\Bb,s_i)\cdot\ind[s_i\leq G_i^{-1}(\frac{1}{2^{j-1}})]\right]\\
    \leq& 2\cdot \sum_i\E_{\Bb,s_i}\left[q_i(\Bb,s_i)\cdot\ind[s_i\leq \theta_{ij}']\right]~~~~~~~~~\text{(Inequality~\ref{equ:breaking-terms-2})}\\
    =&2\cdot \E_{\Bb,\Bs}\left[\sum_i(b_i-s_i)^+\cdot\ind[i\in S^*(\Bb,\Bs)\wedge \overline{E}_{ij}']\right]\\
    \leq & 2\cdot \E_{\Bb,\Bs}\left[\sum_i(b_i-\theta_{ij}')^+\cdot\ind[i\in S^*(\Bb,\Bs)\wedge \overline{E}_{ij}']\right]+2\cdot \E_{\Bb,\Bs}\left[\sum_i(\theta_{ij}'-s_i)^+\cdot\ind[i\in S^*(\Bb,\Bs)\wedge \overline{E}_{ij}']\right]
\end{align*}

Moreover, we have
\begin{equation}
\begin{aligned}
    &\E_{\Bb,\Bs}\left[\sum_i(b_i-\theta_{ij}')^+\cdot \ind[i\in S^*(\Bb,\Bs)]\cdot \ind[\overline{E}_{ij}']\right]\\ 
    \leq& \E_{\Bb,\Bs}\left[\sum_i(b_i-\theta_{ij}')^+\cdot \ind[s_i\leq \theta_{ij}']\cdot \ind[i\in S^*(\Bb,\Bs)]\right]\\
    \leq& \E_{\Bb,\Bs}\left[\max_{S\in \cF}\sum_{i\in S}\left\{(b_i-\theta_{ij}')^+\cdot \ind[s_i\leq \theta_{ij}']\right\}\right]
\end{aligned}
\end{equation}
Similarly,
$$\E_{\Bb,\Bs}\left[\sum_i(\theta_{ij}'-s_i)^+\cdot\ind[i\in S^*(\Bb,\Bs)\wedge \overline{E}_{ij}']\right]\leq \E_{\Bb,\Bs}\left[\max_{S\in \cF}\sum_{i\in S}\left\{(\theta_{ij}'-s_i)^+\cdot \ind[b_i\geq\theta_{ij}']\right\}\right]$$

\end{prevproof}

\notshow{

\subsection{Bounding Buyer Surplus} \label{apx:buyer surplus}

We bound terms $\circled{3}$ and $\circled{5}$ using fixed posted price mechanisms. The result holds for a general constrained-additive buyer. 

\begin{lemma}\label{lem:term3&5}(Restatement of Lemma~\ref{lem:buyer-surplus-main-body})
For any $\{p_i\}_{i\in [n]}\in \mathbb{R}_{+}^n$,
$$\E_{\Bb,\Bs}\left[\max_{S\in \cF}\sum_{i\in S}\{(b_i-p_i)^+\cdot\ind[s_i\leq p_i]\}\right]\leq \pp$$

Thus both $\emph{\circled{3}}$ and $\emph{\circled{5}}$ are upper bounded by $O(\log(\frac{1}{r}))\cdot \pp$.
\end{lemma}


\begin{proof}
Consider the fixed posted price mechanism $\MM$ with $\theta_i^S=\theta_i^B=p_i$. For every $\Bs$, let $A(\Bs)=\{i\in [n]~|~s_i\leq p_i\}$ be the set of available items. Then the buyer will choose the best set $S\subseteq A(\Bs),S\in \cF$ that maximizes $\sum_{i\in S}(b_i-p_i)^+$ (and not buy any item if $b_i - p_i \leq 0$ for all $i$). Thus the gains from trade $\sum_{i\in S}(b_i-s_i)$ are at least $\sum_{i\in S}(b_i-p_i)^+\geq 0$. We have

$$\gft(\MM)\geq \E_{\Bb,\Bs}\left[\max_{S\subseteq A(\Bs),S\in \cF}\sum_{i\in S}(b_i-p_i)^+\right]= \E_{\Bb,\Bs}\left[\max_{S\in \cF}\sum_{i\in S}\left\{(b_i-p_i)^+\cdot \ind[s_i\leq p_i]\right\}\right]$$

To bound terms \circled{3} and \circled{5}, just apply the above inequality with $p_i=\theta_{ij}$ (or $\theta_{ij}'$).
\end{proof}

}

\notshow{
\begin{proof}
Fix any $j$. Consider the fixed posted price mechanism $\MM$ with $\theta_i^S=\theta_i^B=\theta_{ij} = \overline{F_i}^{-1}(\frac{1}{2^{j}})$. For every $\Bs$, let $A(\Bs)=\{i\in [n]~|~s_i\leq \theta_{ij}\}$ be the set of available items. Then the buyer will choose the best set $S\subseteq A(\Bs),S\in \cF$ that maximizes $\sum_{i\in S}(b_i-\theta_{ij})^+$ (and not buy any item if $b_i - \theta_{ij} \leq 0$ for all $i$). Thus the gains from trade $\sum_{i\in S}(b_i-s_i)$ are at least $\sum_{i\in S}(b_i-\theta_{ij})^+\geq 0$. Recall that $E_{ij}$ is the event that $s_i \in [\overline{F_i}^{-1}(\frac{1}{2^{j-1}}), \overline{F_i}^{-1}(\frac{1}{2^{j}})]$ and $b_i \geq \overline{F_i}^{-1}(\frac{1}{2^{j-1}})$, and that $S^*(\Bb,\Bs)=\argmax_{S\in \cF}\sum_{k\in S}(b_k-s_k)$.  
We have
\begin{equation}
\begin{aligned}
    \gft(\MM)&\geq \E_{\Bb,\Bs}\left[\max_{S\subseteq A(\Bs),S\in \cF}\sum_{i\in S}(b_i-\theta_{ij})^+\right]\\
    &= \E_{\Bb,\Bs}\left[\max_{S\in \cF}\sum_{i\in S}\left\{(b_i-\theta_{ij})^+\cdot \ind[s_i\leq \theta_{ij}]\right\}\right]\\
    &\geq \E_{\Bb,\Bs}\left[\sum_i(b_i-\theta_{ij})^+\cdot \ind[s_i\leq \theta_{ij}]\cdot \ind[i\in S^*(\Bb,\Bs)]\right]\\
    &\geq \E_{\Bb,\Bs}\left[\sum_i(b_i-\theta_{ij})^+\cdot \ind[i\in S^*(\Bb,\Bs)]\cdot \ind[\overline{E}_{ij}] \right]
\end{aligned}
\end{equation}

The last inequality uses the fact that every profile in the event $\overline{E}_{ij}$ satisfies that $s_i \leq  \overline{F_i}^{-1}(\frac{1}{2^{j}})$. 

The proof of the other term is analogous, by considering the fixed posted price mechanism with $\theta_i^B=\theta_i^S=\theta_{ij}'$.

For the other term, consider the fixed posted price mechanism $\MM_j'$ with $\theta_i^S=\theta_i^B=\theta_{ij}'$. For every $\Bs$, let $A'(\Bs)=\{i\in [n]~|~s_i\leq \theta_{ij}'\} = G_i^{-1}(\frac{1}{2^{j}})$ be the set of available items. We have

\begin{equation}\label{equ:gft-pp}
\begin{aligned}
\pp&\geq \E_{\Bb,\Bs}\left[\max_{S\subseteq A'(\Bs),S\in \cF}\sum_{i\in S}(b_i-\theta_{ij}')^+\right]\\
    &= \E_{\Bb,\Bs}\left[\max_{S\in \cF}\sum_{i\in S}\left\{(b_i-\theta_{ij}')^+\cdot \ind[s_i\leq \theta_{ij}']\right\}\right]\\
    &\geq \E_{\Bb,\Bs}\left[\sum_i(b_i-\theta_{ij}')^+\cdot \ind[s_i\leq \theta_{ij}']\cdot \ind[i\in S^*(\Bb,\Bs)]\right]\\
\end{aligned}
\end{equation}

For every $i,\Bb,s_i$, define $q_i(\Bb,s_i)=\Pr_{s_{-i}}[i\in S^*(\Bb,\Bs)]$. Then we have that $q_i(\Bb,s_i)$ is non-increasing in $s_i$. Since $\Pr_{s_i}\left[s_i\leq G_i^{-1}(\frac{1}{2^{j}})\right] = \frac{1}{2} \Pr_{s_i}\left[s_i\leq G_i^{-1}(\frac{1}{2^{j-1}})\right]$, we have

\begin{align*}
\E_{s_i}\left[q_i(\Bb,s_i)\cdot\ind\left[s_i\leq\theta_{ij}'\right]\right]
\geq&q_i(\Bb,\theta_{ij}')\cdot\Pr_{s_i}\left[s_i\leq\theta_{ij}'\right]\\
= & \frac{1}{2}q_i(\Bb,\theta_{ij}')\cdot \Pr_{s_i}\left[s_i\leq G_i^{-1}(\frac{1}{2^{j-1}})\right]\\
\geq & \frac{1}{2} \E_{s_i}\left[q_i(\Bb,s_i)\cdot\ind\left[s_i\leq G_i^{-1}(\frac{1}{2^{j-1}})\right]\right].
\end{align*}


Continuing the RHS of \eqref{equ:gft-pp}, by moving the expectation over the sellers profile inside and then back out, we get:
\begin{align*}
    &\E_{\Bb,\Bs}\left[\sum_i(b_i-\theta_{ij}')^+\cdot \ind[s_i\leq \theta_{ij}']\cdot \ind[i\in S^*(\Bb,\Bs)]\right]\\
    =&\E_{\Bb}\left[\sum_i(b_i-\theta_{ij}')^+\cdot \E_{s_i}[q_i(\Bb,s_i)\cdot\ind[s_i\leq \theta_{ij}']]\right]\\
    \geq& \frac{1}{2}\cdot\E_{\Bb}\left[\sum_i(b_i-\theta_{ij}')^+\cdot\E_{s_i}\left[q_i(\Bb,s_i)\cdot\ind\left[s_i\leq G_i^{-1}(\frac{1}{2^{j-1}})\right]\right]\right]\\
    =& \frac{1}{2}\E_{\Bb,\Bs}\left[\sum_i(b_i-\theta_{ij}')^+\cdot \ind[i\in S^*(\Bb,\Bs)]\cdot\ind\left[s_i\leq G_i^{-1}(\frac{1}{2^{j-1}})\right]\right]\\
    \geq& \frac{1}{2}\E_{\Bb,\Bs}\left[\sum_i(b_i-\theta_{ij}')^+\cdot \ind[i\in S^*(\Bb,\Bs)]\cdot\ind[E_{ij}']\right]
\end{align*}

\end{proof}

}

\notshow{

\subsection{Missing Details from Bounding Seller Surplus for a Unit-Demand Buyer in Section~\ref{subsec:seller-surplus-UD}} \label{apx:seller-surplus-UD}

\begin{proof}[Proof of Lemma~\ref{lem:term4&6-UD}]
For every $i$, let $$v_i=(p_i-s_i)^+\cdot\ind[b_i\geq p_i]$$ be a random variable that depends on $b_i$ and $s_i$. Let $\Bv=\{v_i\}_{i\in [n]}$. Let $V_i$ be the distribution of $v_i$ as $b_i\sim \DD_i^B, s_i\sim \DD_i^S$, and $V=\times_{i=1}^nV_i$ be the distribution of $\Bv$. Then the last line of the inequality above is equal to $\E_{\Bv\sim V}[\max_iv_i]$.

Consider any threshold $\xi>0$.  Observe that  $v_i\geq\xi$ if and only if $b_i\geq p_i\wedge p_i-s_i\geq \xi$. Consider the fixed posted price mechanism $\MM$ with $\theta_i^B=p_i$ and $\theta_i^S=p_i-\xi$ for every $i\in [n]$.  
Whenever the buyer purchases some item $i$, we must have $b_i\geq p_i$ (the buyer buys) and $s_i\leq p_i-\xi$ (the seller sells), and the contributed GFT satisfies $b_i-s_i\geq p_i-s_i\geq \xi$. In addition, the buyer will purchase \emph{some} item if and only if there exists some $i$ such that $v_i\geq \xi$. 
Therefore we can apply the prophet inequality~\cite{krengel1978semiamarts,samuel1984comparison,KleinbergW12} with threshold $\xi=\frac{1}{2}\cdot \E_{\Bv\sim V}[\max_iv_i]$ to ensure that the GFT of mechanism $\MM$ is at least $\frac{1}{2}\E_{\Bv\sim V}[\max_iv_i]$. 

\end{proof}

}

\notshow{

An OCRS is an algorithm defined for the following online selection problem: There is a ground set $I$, and the elements are revealed one by one, with item $i$ \emph{active} with probability $x_i$ independent of the other items. The algorithm is only allowed to accept active elements and has to irrevocably make a decision whether to accept an element before the next one is revealed. Moreover, the algorithm can only accept a set of elements subject to a feasibility constraint $\cF$. We use the vector $x$ to denote active probabilities for the elements and $R(x)$ to denote the random set of active elements.

\begin{definition}[Online Contention Resolution Scheme]\label{def-ocrs}
An \emph{Online Contention Resolution Scheme} (OCRS) for a polytope $P\subseteq [0,1]^{|I|}$ and feasibility constraint $\cF$ is an online algorithm that selects a feasible and active set $S\subseteq R(x)$ and $S\in \cF$ for any $x\in P$. 
A \textbf{greedy OCRS} $\pi$ greedily decides whether or not to select an element in each iteration: given the vector $x\in P$, it first determines a sub-constraint $\cF_{\pi,x}\subseteq \cF$. When element $i$ is revealed, it accepts the element if and only if $i$ is active and $S\cup \{i\}\in \cF_{\pi,x}$, where $S$ is the set of elements accepted so far. In most cases, we choose $P$ to be $P_{\cF}$, the convex hull of all characteristic vectors of feasible sets in $\cF$: $P_{\cF}=conv(\ind_S~|~S\in \cF)$.
\end{definition}

\begin{definition}[$(\delta,\eta)$-selectability \cite{FeldmanSZ16}]\label{def-selectablity}
For any $\delta,\eta\in (0,1)$,  
a greedy OCRS $\pi$ for $P$ and $\cF$ is $(\delta,\eta)$-selectable if for every $x\in \delta\cdot P$ and $i\in I$
$$\Pr[S\cup \{i\}\in \cF_{\pi,x}, \forall S\subseteq R(x), S\in\cF_{\pi,x}]\geq \eta.$$
The probability is taken over the randomness of $R(x)$ and the subconstraint $\cF_{\pi,x}$. We slightly abuse notation and say that $\cF$ is $(\delta,\eta)$-selectable if there exists a $(\delta,\eta)$-selectable greedy OCRS for $P_{\cF}$ and $\cF$.
\end{definition}

The following lemma is adapted from~\cite{FeldmanSZ16} and connects $(\delta,\eta)$-selectability to \rppabbrev~mechanisms. Once again, the OCRS gives us both a GFT guarantee and a mechanism: variables $v_i$ correspond to the bound on seller surplus, buyer item prices are $\{p_i\}_{i\in [n]}$, and seller prices are $\{p_i - \xi_i\}_{i\in [n]}$.

\begin{lemma}\label{lem:feldman-selectable}
Suppose there exists a $(\delta,\eta)$-selectable greedy OCRS $\pi$ for the polytope $P_{\cF}$, for some $\delta,\eta\in (0,1)$. Fix any $\{p_i\}_{i\in [n]}\in \mathbb{R}_{+}^n$. For every $i\in [n]$, let $v_i=(p_i-s_i)^+\cdot \ind[b_i\geq p_i]$. For any $\Bq \in P_{\cF}$ that satisfies $q_i\leq \Pr_{b_i,s_i}[b_i\geq p_i>s_i] $ $\forall i$, let $\xi_i=p_i-G_i^{-1}(q_i/\Pr[b_i\geq p_i])$.\footnote{When $q_i\leq \Pr_{b_i,s_i}[b_i\geq p_i>s_i]$, $q_i/\Pr[b_i\geq p_i]\leq 1$. Thus $\xi_i$ is well-defined.} We have
$$\textstyle\sum_i\E_{b_i,s_i}\left[v_i\cdot \ind[v_i\geq \xi_i]\right]\leq \frac{1}{\delta\eta}\cdot \gcfpp.$$

Moreover, there exists a choice of $\Bq$ such that
$$\textstyle\E_{\Bb,\Bs}\left[\max_{S\in \cF}\sum_{i\in S}\left\{(p_i-s_i)^+\cdot\ind[b_i\geq p_i]\right\}\right]\leq \textstyle\sum_i\E_{b_i,s_i}\left[v_i\cdot \ind[v_i\geq \xi_i]\right]\leq \frac{1}{\delta\eta}\cdot \gcfpp.$$
\end{lemma}

\begin{prevproof}{Lemma}{lem:feldman-selectable}
Let $V_i$ denote the distribution of $v_i$ when $b_i\sim \DD_i^B, s_i\sim \DD_i^S$. Let $\Bv=\{v_i\}_{i\in [n]}$. Let $\hat{\textbf{q}}$ be a scaled-down vector of $\textbf{q}$ such that $\hat{q}_i=\delta\cdot q_i$ for every $i\in [n]$ and $\hat{\xi_i}=p_i-G_i^{-1}(\hat{q}_i/\Pr[b_i\geq p_i])$. This is also well-defined since $\hat{q}_i<q_i\leq\Pr[b_i\geq p_i]$. As $q \in P_{\cF}$, then $\hat{q}\in \delta\cdot P_{\cF}$. Consider the \cfpp~mechanism $\MM$ with buyer posted prices $\{p_i\}_{i\in [n]}$, seller posted prices $\{p_i-\hat{\xi}_i\}_{i\in [n]}$, and subconstraint $\cF_{\pi,\hat{q}}\in \cF$ stated in Definition~\ref{def-selectablity}. 

Fix any item $i\in [n]$. We say item $i$ as active if $v_i\geq \hat{\xi}_i$. Similarly to Section~\ref{subsec:seller-surplus-UD},  $v_i\geq \hat{\xi}_i$ if and only if $b_i\geq p_i\wedge s_i\leq p_i-\hat{\xi}_i$. That is, $i$ is active if and only if item $i$ is on the market and the buyer can afford it, which by choice of $\hat \xi_i$ happens independently across all $i$ with probability $\Pr_{v_i}[v_i\geq \hat{\xi}_i]=\Pr_{b_i,s_i}[p_i-s_i\geq \hat{\xi}_i\wedge b_i\geq p_i]=\hat{q}_i$. 

Then for any $\Bv$, the set of active items is $R(\Bv)=\{j\in [n]:~ v_j\geq \hat{\xi}_j\}$. By $(\delta,\eta)$-selectability (Definition~\ref{def-selectablity}) and the fact that $\hat{q}\in \delta\cdot P_{\cF}$, we have

\begin{equation}\label{equ:lem-selectable1}
\Pr_{\pi,\Bv}[S\cup \{i\}\in \cF_{\pi,\hat{q}}, \forall S\subseteq R(\Bv), S\in\cF_{\pi,\hat{q}}]\geq \eta.    
\end{equation}

Note that for the sets $S\in\cF_{\pi,\hat{q}}$ that have $i\in S$, then $S\cup \{i\}\in \cF_{\pi,\hat{q}}$ with probability 1.  Thus, if we require $S \subseteq R(\Bv)\backslash\{i\}$ instead, it can not be that $i \in S$, and so the following LHS occurs with equal probability, allowing us to rewrite inequality~\eqref{equ:lem-selectable1} as follows: 

\begin{equation}\label{equ:lem-selectable2}
\Pr_{\pi,\Bv}[S\cup \{i\}\in \cF_{\pi,\hat{q}}, \forall S\subseteq R(\Bv)\backslash \{i\}, S\in\cF_{\pi,\hat{q}}]\geq \eta.
\end{equation}

For any $\Bv_{-i}$, let $R_i(\Bv_{-i})=\{j\neq i:~ v_j\geq \hat{\xi}_j\}$. Then inequality~\eqref{equ:lem-selectable2} is equivalent to

$$\Pr_{\pi,v_{-i}}[S\cup \{i\}\in \cF_{\pi,\hat{q}}, \forall S\subseteq R_i(\Bv_{-i}), S\in\cF_{\pi,\hat{q}}]\geq \eta.$$


Define event $A_i=\left\{\Bv_{-i}:~S\cup \{i\}\in \cF_{\pi,\hat{q}}, \forall S\subseteq R_i(\Bv_{-i}), S\in\cF_{\pi,\hat{q}}\right\}$. We will argue that item $i$ must be in the buyer's favorite bundle $S^*$ 
when both of the following conditions are satisfied: (i) $v_i\geq \hat{\xi}_i$, and (ii) event $A_i$ happens. Note that in $\MM$, the set of items in the market is $T=\{j:s_j\leq p_j-\hat{\xi}_j\}$, thus $S^*=\argmax_{S\subseteq T, S\in \cF_{\pi,\hat{q}}}\sum_{j\in S}(b_j-p_j)$. Suppose by way of contradiction that both conditions are satisfied but $i\not\in S^*$. Clearly, for every $j\in S^*$, we have $b_j\geq p_j$, otherwise removing $j$ from $S^*$ will give the buyer greater utility. In addition, we have $s_j\leq p_j-\hat{\xi}_j$, so $S^*\subseteq R(\Bv)$. By definition, $S^*$ must lie in $\cF_{\pi,\hat{q}}$. Since event $A_i$ occurs, then $S^*\cup \{i\}\in \cF_{\pi,\hat{q}}$. As $v_i\geq \hat{\xi}_i$, this implies that $b_i\geq p_i$. Thus adding $i$ to $S^*$ keeps the set feasible and does not decrease the buyer's utility $\sum_{j\in S^*}(b_j-p_j)$. Thus $i\in S^*$ (see footnote 6). 
This is a contradiction.    

Note that condition (i) and (ii) are independent. Thus for every $b_i$ and $s_i$ such that $b_i\geq p_i\wedge s_i\leq p_i-\hat{\xi}_i$ (or equivalently $v_i\geq\hat{\xi}_i$), the expected GFT of item $i$ over $b_{-i},s_{-i}$ is at least 
$$\Pr[A_i]\cdot (b_i-s_i)\geq \eta\cdot (p_i-s_i)=\eta\cdot v_i.$$

Thus

$$\gft(\MM)\geq \eta\cdot\sum_i\E_{v_i\sim V_i}[v_i\cdot \ind[v_i\geq \hat{\xi}_i]]\geq \delta\eta\cdot\sum_i\E_{v_i\sim V_i}[v_i\cdot \ind[v_i\geq \xi_i]],$$

where the last inequality is because for every $i$, we have $\E[v_i|v_i\geq \hat{\xi}_i]\geq \E[v_i|v_i\geq \xi_i]$ and $\Pr[v_i\geq \hat{\xi}_i]=\hat{q}_i=\delta\cdot \Pr[v_i\geq \xi_i]$.

For the second inequality stated in the lemma, note that 
$$\E_{\Bb,\Bs}\left[\max_{S\in \cF}\sum_{i\in S}\left\{(p_i-s_i)^+\cdot\ind[b_i\geq p_i]\right\}\right]=\E_{\Bv}\left[\max_{S\in \cF}\sum_{i\in S}v_i\right].$$
For every $\Bv$, let $\hat{S}(\Bv)=\argmax_{S\in \cF}\sum_{i\in S}v_i$, and break ties in favor of the set with smaller size. 
For every $i$, let $q_i=\Pr_{\Bv}[i\in \hat{S}(\Bv)]$ be the probability that $i$ is in the maximum weight independent set. We have that $\Bq=\{q_i\}_{i\in [n]}\in P_{\cF}$. 
Also for every $i$, $q_i=\Pr_{\Bv}[i\in \hat{S}(\Bv)]\leq \Pr[v_i>0]=\Pr[b_i\geq p_i>s_i]$.
Moreover,

$$\E_{\Bv}\left[\max_{S\in \cF}\sum_{i\in S}v_i\right]=\sum_{i\in [n]}\E_{\Bv}\left[v_i\cdot \ind[i\in \hat{S}(\Bv)]\right]\leq \sum_{i\in [n]}\E_{v_i\sim V_i}\left[v_i\cdot \ind[v_i\geq \xi_i]\right]$$

The inequality follows from the fact that for every $i$, both sides integrate random variable $v_i$ with a total probability mass $q_i$, while right hand side integrates $v_i$ at the top $q_i$-quantile. 
\end{prevproof}

For each $j$ in the summation, choose $p_i$ from Lemma~\ref{lem:feldman-selectable} to be $\theta_{ij}$ (or $\theta_{ij}'$). Then both terms $\circled{4}$ and  $\circled{6}$ are bounded by   
$\frac{\log(1/r)}{\delta\eta}\cdot \gcfpp$. Theorem~\ref{thm:log1/r} then follows directly from Lemmas~\ref{lem:breaking-terms-1},~\ref{lem:breaking-terms-2},~\ref{lem:buyer-surplus-main-body}, and \ref{lem:feldman-selectable}.

}

\notshow{

\begin{prevproof}{Lemma}{lem:term4&6}
For the first term, we notice that when $\overline{E}_{ij}$ happens, it must hold that $b_i\geq \overline{F_i}^{-1}(\frac{1}{2^{j}})$. Thus

\begin{align*}
    &\E_{\Bb,\Bs}\left[\sum_i(\theta_{ij}-s_i)^+\cdot\ind[i\in S^*(\Bb,\Bs)]\cdot \ind[\overline{E}_{ij}] \right]\\
    \leq&\E_{\Bb,\Bs}\left[\sum_i(\theta_{ij}-s_i)^+\cdot\ind[i\in S^*(\Bb,\Bs)]\cdot \ind[b_i\geq \overline{F_i}^{-1}(\frac{1}{2^j})]\right]\\
    \leq &\E_{\Bb,\Bs}\left[\max_{S\in \cF}\sum_{i\in S}\{(\theta_{ij}-s_i)^+\cdot\ind[b_i\geq \overline{F_i}^{-1}(\frac{1}{2^j})]\}\right]&\text{($S^*(\Bb,\Bs)\in \cF$)}\\
    \leq &\frac{1}{\delta\eta}\cdot \gcfpp&\text{(Lemma~\ref{lem:feldman-selectable} and $\theta_{ij}=\overline{F_i}^{-1}(\frac{1}{2^j}),\forall i$)}
\end{align*}

Similarly for the second term, note that when $\overline{E}_{ij}'$ happens, it must hold that $b_i\geq G_i^{-1}(\frac{1}{2^{j}})$. Thus
\begin{align*}
    &\E_{\Bb,\Bs}\left[\sum_i(\theta_{ij}'-s_i)^+\cdot\ind[i\in S^*(\Bb,\Bs)]\cdot \ind[\overline{E}_{ij}'] \right]\\
    \leq&\E_{\Bb,\Bs}\left[\sum_i(\theta_{ij}'-s_i)^+\cdot\ind[i\in S^*(\Bb,\Bs)]\cdot \ind[b_i\geq G_i^{-1}(\frac{1}{2^j})]\right]\\
    \leq &\E_{\Bb,\Bs}\left[\max_{S\in \cF}\sum_{i\in S}\{(\theta_{ij}'-s_i)^+\cdot\ind[b_i\geq G_i^{-1}(\frac{1}{2^j})]\}\right]&\text{($S^*(\Bb,\Bs)\in \cF$)}\\
    \leq &\frac{1}{\delta\eta}\cdot \gcfpp&\text{(Lemma~\ref{lem:feldman-selectable} and $\theta_{ij}'=G_i^{-1}(\frac{1}{2^j}),\forall i$)}
\end{align*}
\end{prevproof}

}

\subsection{Applications of Theorem~\ref{thm:log1/r} for Natural Constraints}\label{sec:appx-constraints}

\begin{definition}[Matroid Constraint]
A matroid is specified by a pair $(I,\cF)$, where $I$ is a finite ground set and $\cF\subseteq 2^I$ is a family of subsets of $I$. $(I,\cF)$ satisfies all of the following properties:

\begin{itemize}
    \item $\emptyset\in \cF$.
    \item $\cF$ is downward-closed: For every $S\in \cF$, we have $S'\in \cF, \forall S'\subseteq S$.
    \item $\cF$ has \emph{exchange property}: For every $S,S'\in \cF$ and $|S'|>|S|$, there exists $e\in S'\backslash S$ such that $S\cup \{e\}\in \cF$.
\end{itemize}

In the paper we say $\cF$ is a matroid constraint with respect to $I$ if $(I,\cF)$ forms a matroid.
\end{definition}

\begin{definition}[Matching, Knapsack Constraint]
Given an undirected graph $G=(V,E)$. $\cF\subseteq 2^E$ is a \emph{matching constraint} with respect to the ground set $E$ if $\cF=\{M\subseteq E: \text{$M$ is a matching in $G$}\}$.
A \emph{knapsack constraint} $\cF$ with respect to the ground set $I$ is defined as: $\cF=\{S\subseteq I:~\sum_{i\in S}c_i\leq 1\}$. Here $c_i\in [0,1]$ is the weight of element $i$. 
\end{definition}

Feldman et al.~\cite{FeldmanSZ16} prove that matroids, matching constraints and knapsack constraints are all $(\delta,\eta)$-selectable for some constant $\delta,\eta\in (0,1)$. Moreover, they prove that $(\delta,\eta)$-selectability has nice composability.

\begin{lemma}[Selectability of Natural Constraints]\label{lem:fsz1}\cite{FeldmanSZ16}
\begin{itemize}
\item For any matroid constraint $\cF$ and any $\delta\in (0,1)$, there exists a $(\delta,1-\delta)$-selectable greedy OCRS for $P_{\cF}$. Moreover, for any $\epsilon\in (0,1-\delta)$, there exists a $(\delta,1-\delta-\epsilon)$-selectable greedy OCRS $\pi$ for $P_{\cF}$, and the running time of $\pi$ is polynomial on the input size and $1/\epsilon$.
\item For any matching constraint $\cF$ and any $\delta\in (0,1)$, there exists an efficient $(\delta,e^{-2\delta})$-selectable greedy OCRS for $P_{\cF}$. 
\item For any knapsack constraint $\cF$ and any $\delta\in (0,\frac{1}{2})$, there exists an efficient $(\delta,\frac{1-2\delta}{2-2\delta})$-selectable greedy OCRS for $P_{\cF}$.
\end{itemize}
\end{lemma}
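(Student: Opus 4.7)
The plan is to prove each of the three parts separately, following the general template that Feldman, Svensson and Zenklusen introduced for constructing greedy OCRSs. In all three cases, the strategy is: given $x \in \delta \cdot P_\cF$, fix an (easy to describe) subconstraint $\cF_{\pi,x} \subseteq \cF$, then bound, for each element $i$, the probability that the already-accepted active elements block $i$ from being added while remaining in $\cF_{\pi,x}$. The desired selectability follows by showing this blocking probability is at most $1-\eta$.

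For the matroid case, I would take $\cF_{\pi,x} = \cF$ itself (no further restriction). The task is to bound, for each $i$, the probability that $i$ \emph{spans} the current active set—i.e.\ there is a circuit through $i$ inside the accepted elements. Decompose $x/\delta \in P_\cF$ as a convex combination of independent sets using Edmonds' matroid polytope theorem; then the expected rank contribution of the active elements of $R(x)$ that lie in the closure of $\{i\}$ is at most $\delta$. Markov's inequality then shows that $i$ is blocked with probability at most $\delta$, giving $(\delta, 1-\delta)$-selectability. The $(\delta, 1-\delta-\epsilon)$ efficient version comes from replacing the exact polytope decomposition with an $\epsilon$-approximate one obtained via a polynomial-time ellipsoid/separation routine.

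For the matching case, I would again take $\cF_{\pi,x} = \cF$. An edge $e=(u,v)$ is blocked iff some previously accepted active edge is incident to $u$ or $v$. Using the matching polytope constraint $\sum_{e' \ni u} x_{e'} \le 1$ and $x \in \delta P_\cF$, the expected number of active edges at each endpoint is at most $\delta$. Treating the two endpoints as (almost) independent events and taking a product-form bound $\Pr[\text{not blocked}] \geq (e^{-\delta})^2 = e^{-2\delta}$ (via $\Pr[\text{no active edge at } u] \geq \prod_{e' \ni u}(1-x_{e'}) \geq e^{-\delta/(1-\delta)} \cdot (\text{correction})$ and a standard inequality) yields the stated $(\delta, e^{-2\delta})$-selectability. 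For the knapsack case, declare $i$ blocked iff accepting it would push total size over $1$; by Markov, the expected size of already-accepted active items is $\le \delta$, so the free capacity is $\ge 1-\delta-c_i$ in expectation, and a second application of Markov/Chebyshev combined with the case split $c_i \le 1/2$ vs.\ $c_i > 1/2$ delivers the bound $\frac{1-2\delta}{2-2\delta}$.

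The main obstacle in each sub-case is producing the subconstraint and acceptance rule that simultaneously (i) keeps the rejection events local enough to be analyzed by Markov-type inequalities, and (ii) does not lose more than the claimed factor from the polytope scaling. For matroids the delicate part is handling ``spanning'' (not just direct conflicts), for matchings it is decoupling the two endpoint events while respecting the polytope constraints, and for knapsack it is the non-uniform sizes $c_i$, which is why the knapsack constant has the extra $1/(2-2\delta)$ factor. Since the full technical details already appear in \cite{FeldmanSZ16}, my plan is to cite their construction and reproduce only the Markov-style acceptance analysis needed to read off the particular constants $(1-\delta)$, $e^{-2\delta}$, and $\tfrac{1-2\delta}{2-2\delta}$ used in Theorem~\ref{thm:log1/r}.
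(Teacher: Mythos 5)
The paper never proves this lemma---it is imported as a black box from Feldman, Svensson, and Zenklusen~\cite{FeldmanSZ16}, so the citation alone is the ``paper's proof.'' The issue is that your sketched reconstructions do not match their constructions, and the matroid one would actually fail. You propose taking $\cF_{\pi,x}=\cF$ itself and bounding the blocking probability of $i$ by the expected active mass in the closure of $\{i\}$. But $i$ can be spanned by active elements entirely outside its closure, and the unrestricted greedy OCRS on a matroid is \emph{not} $(\delta,1-\delta)$-selectable. Concretely, take linearly independent vectors $v,u_1,\dots,u_m$, let the ground set be $e_0=v$, $a_j=u_j$, $b_j=u_j+v$ for $j\le m$, and set $x_{a_j}=x_{b_j}=\delta\frac{m+1}{2m}$ and $x_{e_0}=0$. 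One checks $x\in\delta\cdot P_{\cF}$ (the binding constraint is the set of all $m$ pairs, of rank $m+1$), yet each pair $\{a_j,b_j\}$ is independent and spans $e_0$, so $e_0$ fails the selectability condition whenever some pair is fully active---probability $1-(1-\delta^2/4)^m\to 1$ as $m\to\infty$, whereas selectability requires this to stay below $\delta$. This is precisely why \cite{FeldmanSZ16} define $\cF_{\pi,x}$ via a chain of nested minors (repeatedly extracting a densest set, restricting and contracting, and imposing feasibility separately within each layer); the probabilistic bound is applied layer by layer, not to $\cF$ globally.

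The other two parts have smaller but real gaps in the constants. For matchings, with $\cF_{\pi,x}=\cF$ the best per-endpoint bound is $\prod_{e'\ni u}(1-x_{e'})\ge 1-\delta$, giving $(1-\delta)^2\le e^{-2\delta}$ overall, strictly weaker than claimed; FSZ reach exactly $e^{-2\delta}$ by a \emph{randomized} subconstraint that retains each edge $e'$ independently with probability $(1-e^{-x_{e'}})/x_{e'}$, so the per-endpoint survival probability becomes $\prod_{e'\ni u}e^{-x_{e'}}\ge e^{-\delta}$. For knapsack, the factor $\frac{1}{2-2\delta}$ does not come from a second Markov/Chebyshev step; it comes from randomizing the subconstraint between ``only items of size $>1/2$ admissible'' (at most one such item fits) and ``only items of size $\le 1/2$ admissible, accepted while the accumulated size stays below $1/2$,'' with Markov applied to the expected active small-item size to obtain the $1-2\delta$ term. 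Since the lemma is used here purely as an imported result, either cite \cite{FeldmanSZ16} without reproving it, or reproduce their subconstraint constructions faithfully; the ``take $\cF_{\pi,x}=\cF$ and apply Markov'' template does not yield any of the three stated parameter pairs.
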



\begin{lemma}[Composability of Selectability]\label{lem:fsz2}
\cite{FeldmanSZ16} Given two downward-closed constraints $\cF_1$ and $\cF_2$ with respect to the same ground set $I$. Let $\cF=\cF_1\cap \cF_2$. Suppose there exist a $(\delta,\eta_1)$-selectable greedy OCRS $\pi_1$ for $P_1$ and $\cF_1$, and a $(\delta,\eta_2)$-selectable greedy OCRS $\pi_2$ for $P_2$ and $\cF_2$. Then there exists a $(\delta,\eta_1\cdot\eta_2)$-selectable greedy OCRS $\pi$ for $P_{1}\cap P_{ 2}$ and $\cF$. When $P_1=P_{\cF_1}$ and $P_2 = P_{\cF_2}$, as $P_{\cF_1}\cap P_{\cF_2}\subseteq P_{\cF}$, $\pi$ is also $(\delta,\eta_1\cdot\eta_2)$-selectable for $P_{\cF}$ and $\cF$. Moreover, $\pi$ is efficient computable given $\pi_1$ and $\pi_2$.
\end{lemma}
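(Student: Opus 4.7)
\textbf{Proof plan for Lemma~\ref{lem:fsz2} (composability of selectability).}

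The plan is to build the composed greedy OCRS $\pi$ by running $\pi_1$ and $\pi_2$ on $x$ \emph{with independent internal randomness}, and to set the sub-constraint to be the intersection $\cF_{\pi,x} \;=\; \cF_{\pi_1,x} \,\cap\, \cF_{\pi_2,x}$. When element $i$ is revealed, $\pi$ accepts it iff $i$ is active and $S \cup \{i\} \in \cF_{\pi,x}$, where $S$ is the set already selected. By construction any accepted set lies in both $\cF_{\pi_1,x}$ and $\cF_{\pi_2,x}$, hence in $\cF = \cF_1\cap \cF_2$, so $\pi$ is a valid greedy OCRS for $\cF$ on $P_1\cap P_2$ (and on $P_\cF$ whenever $P_1=P_{\cF_1}, P_2=P_{\cF_2}$, since $P_{\cF_1}\cap P_{\cF_2}\subseteq P_\cF$).

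Next, fix $x\in \delta\cdot(P_1\cap P_2)$ and $i\in I$. Since $x\in \delta P_1 \cap \delta P_2$, the $(\delta,\eta_j)$-selectability of $\pi_j$ gives, for each $j\in\{1,2\}$,
\[
\Pr\!\left[A_j\right] \;\ge\; \eta_j, \qquad A_j \;:=\; \{\, S\cup\{i\}\in \cF_{\pi_j,x}\ \forall S\subseteq R(x), S\in \cF_{\pi_j,x}\,\}.
\]
The first key observation is $A_1\cap A_2 \subseteq B$, where $B$ is the analogous event for $\cF_{\pi,x}$: if $S\subseteq R(x)$ with $S\in \cF_{\pi,x}=\cF_{\pi_1,x}\cap \cF_{\pi_2,x}$, then $A_j$ forces $S\cup\{i\}\in \cF_{\pi_j,x}$ for $j=1,2$, so $S\cup\{i\}\in \cF_{\pi,x}$. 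It therefore suffices to prove $\Pr[A_1\cap A_2]\ge \eta_1\eta_2$.

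The remaining step, which I expect to be the main obstacle since $A_1$ and $A_2$ share the random set $R(x)$, is handled by positive correlation. Condition on the independent internal randomness $\omega_1,\omega_2$ of $\pi_1,\pi_2$; then $A_j$ becomes a deterministic event depending only on the product-of-Bernoullis vector $R(x) = (Y_i)_{i\in I}$, and it is monotone \emph{decreasing} in $R(x)$: adding elements to $R(x)$ can only introduce new witnesses $S\subseteq R(x)$ that might violate $S\cup\{i\}\in \cF_{\pi_j,x}$, while existing ones are unaffected. By the FKG inequality applied to the product Bernoulli measure on $R(x)$, the two decreasing indicators are positively correlated:
\[
\Pr\!\left[A_1\cap A_2 \,\big|\, \omega_1,\omega_2\right] \;\ge\; \Pr\!\left[A_1 \,\big|\, \omega_1,\omega_2\right]\cdot \Pr\!\left[A_2 \,\big|\, \omega_1,\omega_2\right].
\]
Taking expectations over the independent $\omega_1,\omega_2$ and using that $A_j$ depends only on $\omega_j$ (and $R(x)$),
\[
\Pr[A_1\cap A_2] \;\ge\; \E_{\omega_1}\!\left[\Pr[A_1\mid\omega_1]\right]\cdot \E_{\omega_2}\!\left[\Pr[A_2\mid\omega_2]\right] \;=\; \Pr[A_1]\cdot \Pr[A_2]\;\ge\; \eta_1\eta_2,
\]
which yields $\Pr[B]\ge \eta_1\eta_2$, establishing that $\pi$ is $(\delta,\eta_1\eta_2)$-selectable. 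Efficiency of $\pi$ given $\pi_1,\pi_2$ is immediate because each accept/reject test simply queries feasibility in the two sub-constraints.
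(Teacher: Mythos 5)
The paper never proves this lemma itself—it is quoted from Feldman--Svensson--Zenklusen---and your argument is essentially their proof: compose the two greedy schemes with sub-constraint $\cF_{\pi,x}=\cF_{\pi_1,x}\cap\cF_{\pi_2,x}$, observe that after fixing the schemes' (independent) internal randomness each event $A_j$ is a decreasing function of the Bernoulli active vector $R(x)$, and apply the FKG inequality to the product measure to get $\Pr[A_1\cap A_2]\geq\Pr[A_1]\Pr[A_2]\geq\eta_1\eta_2$. One small correction: the polytope inclusion you quote (following the paper's statement) is written backwards---what the argument actually needs, and what holds trivially, is $P_{\cF}\subseteq P_{\cF_1}\cap P_{\cF_2}$, so that every $x\in\delta\cdot P_{\cF}$ falls under the composed scheme's guarantee; your proof is otherwise correct as written.
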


\begin{corollary}\label{cor:log1/r-natural constraints}
Suppose the buyer's feasibility constraint is $\cF=\bigcap_{t=1}^d \cF_t$ for some constant $d$, where each $\cF_t$ is a matroid, matching constraint, or knapsack constraint. Then $\fb\leq O(\log(\frac{1}{r}))\cdot \gcfpp$.
\end{corollary}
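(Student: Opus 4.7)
The plan is to reduce the corollary directly to Theorem~\ref{thm:log1/r} by exhibiting constants $\delta, \eta \in (0,1)$ (depending on $d$ but not on $n$ or the distributions) for which $\cF = \bigcap_{t=1}^d \cF_t$ is $(\delta,\eta)$-selectable. Once we have such constants, Theorem~\ref{thm:log1/r} immediately yields $\fb \leq O\!\left(\frac{\log(1/r)}{\delta\eta}\right)\cdot \gcfpp = O(\log(1/r))\cdot \gcfpp$, since $\delta\eta$ is an absolute constant whenever $d$ is constant.

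To produce these constants, I would first fix a single common $\delta$ that works for every constraint type appearing in the intersection. The natural choice is $\delta = 1/4$, since it lies in $(0, 1/2)$ and thus is admissible for all three cases in Lemma~\ref{lem:fsz1}: for each matroid component we obtain a $(1/4, 3/4)$-selectable greedy OCRS (applying the first bullet with $\delta = 1/4$, and using a small slack $\epsilon$ if we want efficient computation); for each matching component we obtain a $(1/4, e^{-1/2})$-selectable greedy OCRS; and for each knapsack component we obtain a $(1/4, 1/3)$-selectable greedy OCRS. In each case the second parameter is an absolute positive constant $\eta_t > 0$.

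Next, I would invoke Lemma~\ref{lem:fsz2} iteratively, once for each $t = 2, \ldots, d$, composing the greedy OCRS for $\bigcap_{s<t} \cF_s$ with the greedy OCRS for $\cF_t$. Because all $d$ individual schemes share the same selectability parameter $\delta = 1/4$, Lemma~\ref{lem:fsz2} applies at every step and yields a greedy OCRS for $P_{\cF_1} \cap \cdots \cap P_{\cF_d}$ (which is contained in $P_{\cF}$, so the same scheme serves $P_{\cF}$) that is $(1/4, \prod_{t=1}^d \eta_t)$-selectable. Setting $\eta := \prod_{t=1}^d \eta_t$, we have $\eta > 0$ and, since $d$ is a constant, $\eta$ is also bounded below by a positive absolute constant.

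Finally, applying Theorem~\ref{thm:log1/r} with this $(\delta,\eta) = (1/4, \eta)$ gives
\[
\fb \;\leq\; O\!\left(\tfrac{\log(1/r)}{\delta\eta}\right)\cdot \gcfpp \;=\; O(\log(1/r)) \cdot \gcfpp,
\]
where the hidden constant depends only on $d$ and the types of the constituent constraints. There is no real obstacle here beyond bookkeeping: the only subtle point is making sure to use the \emph{same} $\delta$ across all $d$ constraints before invoking composability, which is why we pick $\delta = 1/4$ up front rather than using the individually optimal $\delta$ from each bullet of Lemma~\ref{lem:fsz1}.
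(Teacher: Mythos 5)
Your proposal is correct and takes essentially the same route as the paper: the paper's proof likewise fixes one common constant $\delta\in(0,\tfrac{1}{2})$, invokes Lemmas~\ref{lem:fsz1} and~\ref{lem:fsz2} to obtain a $(\delta,\eta)$-selectable greedy OCRS for $P_{\cF}$ with $\eta$ a constant depending only on $d$, and then applies Theorem~\ref{thm:log1/r}. (One trivial remark: the containment that makes the composed scheme serve $P_{\cF}$ is $P_{\cF}\subseteq P_{\cF_1}\cap\cdots\cap P_{\cF_d}$, not the reverse as in your parenthetical, though your phrasing merely inherits the same slip from the statement of Lemma~\ref{lem:fsz2}.)
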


\begin{prevproof}{Corollary}{cor:log1/r-natural constraints}
Pick any constant $\delta\in (0,\frac{1}{2})$. By Lemma~\ref{lem:fsz1} and~\ref{lem:fsz2}, there exists some constant $\eta\in (0,1)$ such that there exists an efficient, $(\delta,\eta)$-selectable greedy OCRS for $P_{\cF}$. Then the result follows from Theorem~\ref{thm:log1/r}.
\end{prevproof}

\subsection{Computing the Approximately-Optimal Mechanism Efficiently}

By Lemma~\ref{lem:term3&5}, the fixed posted price mechanisms to bound term \circled{3} and \circled{5} are efficiently computable. In Lemma~\ref{lem:feldman-selectable}, the buyer posted prices are chosen as $p_i=\theta_{ij}$ (or $\theta_{ij}'$), which can be computed efficiently. In order to find seller posted prices, it's sufficient to find the optimal $q$ in Lemma~\ref{lem:feldman-selectable} that maximizes $\sum_i\E_{b_i,s_i}\left[v_i\cdot \ind[v_i\geq \xi_i]\right]$. For every $i$ and $q_i\leq \Pr[b_i\geq p_i>s_i]$, let $$h_i(q_i)=\E_{b_i,s_i}[v_i\cdot \ind[v_i\geq \xi_i]]=\Pr[b_i\geq p_i]\cdot \int_{0}^{p_i-\xi_i}(p_i-s_i)dG_i(s_i).$$

Then it's equivalent to solve the following maximization problem over the polytope $P_{\cF}$:
\begin{align*}
    \text{max}~~~&\sum_ih_i(q_i)\\
    \text{s.t.}~~~& q\in P_{\cF}\\
               & q_i\leq \Pr[b_i\geq p_i]\cdot \Pr[s_i<p_i], \forall i\in [n]. 
\end{align*}

{
Observe that every $h_i$ is a concave function. In many settings one can efficiently obtain a near-optimal solution using convex optimization techniques.
}

Moreover, the subconstraint $\cF'$ can be efficiently computed if there exists an efficient greedy OCRS for $P_{\cF}$. For the constraints in Corollary~\ref{cor:log1/r-natural constraints}, by Lemma~\ref{lem:fsz1} and~\ref{lem:fsz2}, the subconstraint can be computed efficiently.
\section{Missing Details from 
Section~\ref{sec:log n approx}}\label{apx:logn}

\label{apx:UB of SB}

In this section, we use the notions of the ``super seller auction'' and ``super buyer procurement auction'' from Section~\ref{sec:UB of SB}.

We claim that the GFT of any IR, BIC, ex-ante WBB mechanism $\MM=(x,p^B,p^S)$ is upper bounded by $\opts+\optb$. The proof is adapted from~\cite{BrustleCWZ17}.

\begin{lemma}\label{lem:ub-opt}\cite{BrustleCWZ17}
$\opt\leq \opts+\optb.$
\end{lemma}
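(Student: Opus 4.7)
The plan is to express the GFT of any BIC, IR, ex-ante WBB mechanism $\MM = (x, p^B, p^S)$ as the sum of two quantities---one resembling a buyer utility in a procurement auction, the other a profit in a forward auction---and then bound each by the optimum of the corresponding one-sided market, using the ex-ante WBB condition to absorb a non-negative residual. The starting point is the additive identity obtained by inserting the sellers' payments:
$$\gft(\MM) \;=\; \E_{\Bb,\Bs}\!\left[\sum_i x_i(\Bb,\Bs)\,b_i - \sum_i p_i^S(\Bb,\Bs)\right] + \E_{\Bb,\Bs}\!\left[\sum_i p_i^S(\Bb,\Bs) - \sum_i x_i(\Bb,\Bs)\,s_i\right].$$
Morally, the first term is how much value the buyer extracts net of what the sellers are paid, and the second is the aggregate seller utility in $\MM$; neither term, viewed in isolation, involves the cross-side budget balance constraint.

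Next I would translate each of the two summands into an auxiliary one-sided mechanism built directly from $\MM$. For the super buyer procurement auction, I would run the rule that, given the buyer-designer's own type $\Bb$ and reported cost profile $\Bs$, implements allocation $x(\Bb,\Bs)$ and pays seller $i$ the amount $p_i^S(\Bb,\Bs)$. The sellers' BIC and IR constraints in $\MM$ are stated in expectation over $(\Bb, s_{-i})$, which is exactly what BIC and IR in this procurement mechanism demand (the designer draws $\Bb$ from her prior before committing), so both transfer over. The buyer's expected utility in the constructed mechanism is precisely the first summand, and is therefore at most $\optb$. Symmetrically, for the super seller auction, I would use allocation $x(\Bb,\Bs)$ and charge the buyer $p^B(\Bb,\Bs)$, where $\Bs$ is the super seller's own cost profile; the buyer's BIC and IR in $\MM$ carry over verbatim, and the super seller's expected profit is $\E[p^B - \sum_i x_i s_i]$, which is at most $\opts$.

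Summing the two bounds gives
$$\opts + \optb \;\geq\; \E\!\left[p^B - \sum_i x_i s_i\right] + \E\!\left[\sum_i x_i b_i - \sum_i p_i^S\right] \;=\; \gft(\MM) + \E\!\left[p^B - \sum_i p_i^S\right],$$
and the final expectation is non-negative by ex-ante WBB. Taking the supremum over admissible $\MM$ then yields $\opt \leq \opts + \optb$. The only subtle bookkeeping---and the step I would write out most carefully---is verifying the direction of BIC/IR transfer to the two auxiliary settings; once the right conditional expectations are matched up, the rest is an algebraic rearrangement, and the place ex-ante (rather than ex-post) WBB is used is exactly the final inequality.
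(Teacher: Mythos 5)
Your proposal is correct and follows essentially the same route as the paper: decompose the GFT by inserting the sellers' payments, observe that $(x,p^S)$ and $(x,p^B)$ are respectively valid BIC, IR super buyer and super seller mechanisms so the two pieces are bounded by $\optb$ and $\opts$, and invoke ex-ante WBB to absorb the residual $\E[p^B-\sum_i p_i^S]\geq 0$. No gaps.
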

\begin{proof}
Take any BIC, IR, ex-ante WBB mechanism $\MM=(x,p^B,p^S)$. Since every seller $i$ is BIC and IR, we have for any $s_i,s_i'$, 
$$\E_{\Bb,s_{-i}}\left[p_i^S(\Bb,\Bs)-s_i\cdot x_i(\Bb,\Bs)\right]\geq \max\left\{\E_{\Bb,s_{-i}}[p_i^S(\Bb,s_i',s_{-i})]-s_i\cdot x_i(\Bb,s_i',s_{-i}), 0\right\}$$

Observe that $\MM'=(x,p^S)$ is a valid super buyer procurement auction. The above inequalities are exactly the BIC and IR constraints for seller $i$. Thus $\MM'$ is BIC and IR. Similarly, $\MM''=(x,p^B)$ is BIC and IR, so it is a valid super seller auction. Since $\MM$ is ex-ante WBB, $\E_{\Bb,\Bs}[p^B(\Bb,\Bs)-\sum_ip^S(\Bb,\Bs)]\geq 0$. Thus we have
\begin{align*}
\gft(\MM)&=\E_{\Bb,\Bs}\left[\sum_{i\in[n]} x_i(\Bb,\Bs)(b_i-s_i)\right]\\
&\leq \E_{\Bb,\Bs}\left[p^B(\Bb,\Bs)-\sum_{i\in[n]} x_i(\Bb,\Bs)\cdot s_i\right]+\E_{\Bb,\Bs}\left[\sum_{i\in [n]} (x_i(\Bb,\Bs)\cdot b_i-p_i^S(\Bb,\Bs))\right]\\
&\leq \opts+\optb
\end{align*}

Taking $\MM$ to be the GFT-maximizing mechanism completes the proof.
\end{proof}


We next prove an analog of the ``Marginal Mechanism Lemma''~\cite{CaiH13,HartN12} for the optimal profit. Namely, let $(T,R)$ be a partition of the items in $[n]$, then the optimal profit in a super seller auction with items in $[n]$ is upper bounded by the first-best GFT for items in $T$ plus the optimal profit in a super seller auction with items in $R$. 

\begin{lemma}[Marginal Mechanism for Profit]\label{lem:separation}
For any subset $T\in [n]$, we let $\cF\big|_T=\{S\subseteq T:S\in\cF\}$ denote the restriction of $\cF$ to $T$. We use $\fb(T,\cF\big|_T)$ to denote the first-best GFT obtainable between sellers in $T$ and the $\cF\big|_T$-constrained additive buyer, that is, $$\fb(T,\cF\big|_T)=\E_{\Bb_T,\Bs_T}\left[\max_{S\in \cF|_T}\sum_{i\in S}(b_i-s_i)^+\right],$$ where $\Bb_T=\{b_i\}_{i\in T}$, $\Bs_T=\{s_i\}_{i\in T}$. 
 Let $(R,T)$ be any partition of the items in $[n]$. Then $$\opts([n],\cF)\leq \opts(R,\cF\big|_R)+\fb(T,\cF\big|_T).$$
\end{lemma}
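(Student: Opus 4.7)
The plan is to mirror the classical ``Marginal Mechanism Lemma'' of Hart-Nisan, but adapted to profit in the super seller auction. Concretely, I would take any BIC, IR mechanism $\MM = (x, p^B)$ for the super seller auction on $[n]$ with constraint $\cF$, and construct a companion BIC, IR mechanism $\hat{\MM} = (\hat{x}, \hat{p})$ for the super seller auction on $R$ with constraint $\cF\big|_R$. The construction is the natural one: on input $(\Bb_R, \Bs_R)$, sample $\Bb_T \sim \DD^B_T$ and $\Bs_T \sim \DD^S_T$ independently, run $\MM$ on the combined profile $(\Bb, \Bs)$, set $\hat{x}_i(\Bb_R, \Bs_R) = \E_{\Bb_T, \Bs_T}[x_i(\Bb, \Bs)]$ for $i \in R$, and set $\hat{p}(\Bb_R, \Bs_R) = \E_{\Bb_T, \Bs_T}\bigl[p^B(\Bb, \Bs) - \sum_{i \in T} x_i(\Bb, \Bs) \cdot b_i\bigr]$. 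Intuitively, $\hat{\MM}$ ``gives away'' items in $T$ by crediting the buyer for their value, so only the $R$-allocation is actually delivered.

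Next I would verify the three properties of $\hat{\MM}$. Feasibility follows because $\cF$ is downward-closed: any feasible set $S \in \cF$ realized by $\MM$ satisfies $S \cap R \in \cF\big|_R$, so the $R$-marginal of $x$ is feasible in $\cF\big|_R$. For BIC/IR, note that the buyer's expected utility in $\hat{\MM}$ from reporting $\Bb_R$ when her true type is $\Bb_R^*$ equals $\E_{\Bb_T}\bigl[u_\MM((\Bb_R, \Bb_T); (\Bb_R^*, \Bb_T))\bigr]$, where $u_\MM$ is her utility in $\MM$. Then for each realization of $\Bb_T$, BIC (resp.\ IR) of $\MM$ applied to the comparison between reporting $(\Bb_R, \Bb_T)$ and $(\Bb_R^*, \Bb_T)$ (resp.\ not participating) yields the analogous inequality for $\hat{\MM}$ after integrating over $\Bb_T$.

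Finally I would compute the profit of $\hat{\MM}$ and compare. By definition,
\begin{align*}
\Profit(\hat{\MM}) &= \E_{\Bb, \Bs}\Bigl[p^B(\Bb,\Bs) - \sum_{i \in T} x_i(\Bb,\Bs)\, b_i - \sum_{i \in R} x_i(\Bb,\Bs)\, s_i\Bigr], \\
\Profit(\MM) &= \E_{\Bb, \Bs}\Bigl[p^B(\Bb,\Bs) - \sum_{i \in [n]} x_i(\Bb,\Bs)\, s_i\Bigr],
\end{align*}
so $\Profit(\MM) - \Profit(\hat{\MM}) = \E_{\Bb,\Bs}\bigl[\sum_{i \in T} x_i(\Bb,\Bs)(b_i - s_i)\bigr]$. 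Since at every realization the support of the (randomized) allocation $x$ lies in $\cF$, its restriction to $T$ lies in $\cF\big|_T$, and so this difference is upper bounded by $\E_{\Bb_T,\Bs_T}\bigl[\max_{S \in \cF|_T}\sum_{i \in S}(b_i - s_i)^+\bigr] = \fb(T, \cF\big|_T)$. Combining with $\Profit(\hat{\MM}) \leq \opts(R, \cF\big|_R)$ and taking $\MM$ to be the optimum yields the claim.

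The main obstacle I expect is the bookkeeping around BIC and IR: one must be careful that subtracting $\sum_{i\in T} x_i\, b_i$ from the buyer's payment in $\hat{\MM}$ is exactly what makes the buyer's incentives inherit from $\MM$ in a slice-wise manner, and that the accompanying subtraction of $s_i$-terms on the profit side leaves behind precisely $\sum_{i \in T} x_i(b_i - s_i)$, which is what the first-best GFT on $T$ is designed to absorb.
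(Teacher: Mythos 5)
Your proposal is correct and follows essentially the same route as the paper's proof: the paper constructs exactly this marginal mechanism on $R$ (same allocation marginalized over fresh samples of $\Bb_T,\Bs_T$, with the buyer's payment reduced by her expected value for the $T$-items), inherits BIC/IR from the original mechanism, and charges the profit difference $\E\bigl[\sum_{i\in T}x_i(\Bb,\Bs)(b_i-s_i)\bigr]$ to $\fb(T,\cF\big|_T)$. No substantive differences.
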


\begin{proof}
Consider the optimal BIC and IR mechanism $\MM=(x,p)$ in the super seller auction with item set $[n]$. We will construct a BIC and IR mechanism $\MM'=(x',p')$ in the super seller auction with item set $R$ as follows. The mechanism only sells items in $R$ using the same allocation $x$. The payment for the buyer is defined as the payment $p$ in $\MM$ minus the buyer's expected total value for all items in $T$. Formally, for every $\Bb_R=\{b_j\}_{j\in R}$, $\Bs_R=\{s_j\}_{j\in R}$ and $i\in R$, let $$x_i'(\Bb_R,\Bs_R)=\E_{\Bb_{T},\Bs_{T}}[x_i(\Bb,\Bs)]$$ $$p'(\Bb_R,\Bs_R)=\E_{\Bb_{T},\Bs_{T}}\left[p(\Bb,\Bs)-\sum_{j\in T}b_j\cdot x_j(\Bb,\Bs)\right].$$ 

Notice that in $\MM'$, the expected utility of the buyer with type $\Bb_R$ when reporting $\Bb_R'$ is
$$\E_{\Bs_R}\left[\sum_{i\in T}b_i\cdot x_i'(\Bb_R',\Bs_R)-p'(\Bb_R',\Bs_R)\right] =\E_{\Bb_{T},\Bs}\left[\sum_{i\in [n]}b_i\cdot x_i(\Bb'_R,\Bb_T,\Bs)-p(\Bb'_R,\Bb_T,\Bs)\right],$$


Since $\MM$ is BIC and IR, $\MM'$ is also BIC and IR. Thus
\begin{align*}
\opts([n],\cF)=&\E_{\Bb,\Bs}\left[p(\Bb,\Bs)-\sum_{i\in [n]}s_i\cdot x_i(\Bb,\Bs)\right]\\
=&\E_{\Bb_R,\Bs_R}\left[p'(\Bb_R,\Bs_R)-\sum_{i\in R}s_i\cdot x_i'(b_R,s_R)\right]+\E_{\Bb,\Bs}\left[\sum_{i\in T}(b_i-s_i)\cdot x_i(\Bb,\Bs)\right]\\
\leq& \opts(R,\cF\big|_R)+\fb(T,\cF\big|_T).
\end{align*}
\end{proof}

We partition the items into the set of ``likely to trade'' items, that is, items with trade probability $r_i=\Pr_{b_i,s_i}[b_i\geq s_i]\geq 1/n$, and the ``unlikely to trade'' items. We can bound the $\opts$ by the first-best GFT of the ``likely to trade'' items and the optimal profit of the super seller auction with the ``unlikely to trade'' items. We can further replace the first-best GFT of the ``likely to trade'' by  $O(\log n)\cdot \gcfpp$ according to Theorem~\ref{thm:log1/r} or by  $O(\log^2(n))\cdot \gcfpp$ according to Theorem~\ref{thm:downward-close} depending on the buyer's feasibility constraint. Formally,
\begin{lemma}\label{lem:likely trade and unlikely trade}
Define $H=\{i\in [n]:r_i\geq \frac{1}{n}\}$ and $L=[n]\backslash H=\{i\in [n]:r_i<\frac{1}{n}\}$. Suppose the buyer's feasibility constraint $\cF$ is 
$(\delta,\eta)$-selectable for some $\delta,\eta\in (0,1)$. Then 
\begin{align*}
    \opt&\leq \optb+\opts(L,\cF\big|_L)+\fb(H,\cF\big|_H)\\
    &\leq \optb+ \opts(L,\cF\big|_L)+O\left(\frac{\log n}{\delta\cdot \eta}\right)\cdot \gcfpp.
\end{align*}
For general constrained-additive buyer,
$$\opt\leq \optb+ \opts(L,\cF\big|_L)+O\left(\log^2( n)\right)\cdot \gcfpp.$$
\end{lemma}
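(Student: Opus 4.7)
The plan is a short combination of previously established lemmas. Starting from Lemma~\ref{lem:ub-opt}, we have $\opt \leq \optb + \opts([n], \cF)$. Applying the marginal mechanism lemma for profit (Lemma~\ref{lem:separation}) to the partition $(R, T) = (L, H)$ further gives $\opts([n], \cF) \leq \opts(L, \cF|_L) + \fb(H, \cF|_H)$. Chaining these two bounds immediately yields the first claimed inequality $\opt \leq \optb + \opts(L, \cF|_L) + \fb(H, \cF|_H)$.

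For the second inequality, it then suffices to show that $\fb(H, \cF|_H) \leq O(\log n / (\delta \eta)) \cdot \gcfpp$. By definition of $H$, every item $i \in H$ satisfies $r_i \geq 1/n$, so the minimum trade probability in the sub-market on $H$ is at least $1/n$, giving $\log(1/r_H) \leq \log n$. Next I claim that $\cF|_H$ inherits $(\delta, \eta)$-selectability from $\cF$: any vector $\hat{x} \in \delta \cdot P_{\cF|_H}$ extends to a vector in $\delta \cdot P_{\cF}$ by padding with zeros on coordinates in $L$, so the greedy OCRS $\pi$ for $P_\cF$ restricted to items in $H$ (with items in $L$ treated as inactive) is a $(\delta, \eta)$-selectable scheme for $P_{\cF|_H}$. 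Applying Theorem~\ref{thm:log1/r} to the sub-market on $H$ therefore gives $\fb(H, \cF|_H) \leq O(\log n / (\delta \eta)) \cdot \gcfpp_H$, where $\gcfpp_H$ denotes the optimal \cfpp~GFT on the restricted market. Finally, $\gcfpp_H \leq \gcfpp$: any \cfpp~mechanism on $H$ lifts to one on $[n]$ by setting the seller posted price $\theta_i^S$ below the support of $\DD_i^S$ for each $i \in L$, guaranteeing those items never trade and preserving the original GFT.

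For a general constrained-additive buyer, the argument is identical after substituting Theorem~\ref{thm:downward-close} for Theorem~\ref{thm:log1/r}, producing the stated $O(\log^2 n)$ factor in place of $O(\log n / (\delta \eta))$. No substantive obstacle arises in this proof; the only mildly delicate points are (i) the inheritance of $(\delta,\eta)$-selectability to the sub-market on $H$ and (ii) the lifting of \cfpp~mechanisms from $H$ to $[n]$, both of which are straightforward once stated explicitly.
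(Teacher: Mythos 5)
Your proof is correct and follows essentially the same route as the paper: chain Lemma~\ref{lem:ub-opt} with Lemma~\ref{lem:separation} applied to the partition $(L,H)$, then invoke Theorem~\ref{thm:log1/r} (resp.\ Theorem~\ref{thm:downward-close}) on the sub-market $H$, where the minimum trade probability is at least $1/n$. The two points you flag as delicate---that $\cF\big|_H$ inherits $(\delta,\eta)$-selectability and that a \cfpp~mechanism on $H$ lifts to one on $[n]$ with no loss in GFT---are exactly the steps the paper treats as immediate, and your explicit justifications of both are sound.
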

\begin{proof}
The first inequality follows from Lemma~\ref{lem:ub-opt} and~\ref{lem:separation}. Since $\cF$ is 
$(\delta,\eta)$-selectable, $\cF\big|_H$ is also 
$(\delta,\eta)$-selectable. We derive the second inequality by applying Theorem~\ref{thm:log1/r} on the items in $H$. For a general constrained-additive buyer, we derive the inequality by applying Theorem~\ref{thm:downward-close} on the items in $H$.
\end{proof}

\begin{proof}[Proof of Lemma~\ref{lem:UB of second best GFT}]
It directly follows from Lemmas~\ref{lem:likely trade and unlikely trade} and~\ref{lem:unlikley trade item upper bound}.
\end{proof}

\notshow{

Next, we provide an upper bound of the optimal super seller profit from items in $L$. It is well known that in multi-item auctions the revenue of selling the items separately is a $O(\log n)$-approximation to the optimal revenue when there is a single additive buyer~\cite{LiY13}. Cai and Zhao~\cite{CaiZ19} provide a extension of this $O(\log n)$-approximation to profit maximization. Combining this approximation with some basic observations based on the Cai-Devanur-Weinberg duality framework~\cite{CaiDW16}, we derive the following upper bound of $\opts(L,\cF\big|_L)$.
\begin{lemma}\label{lem:unlikley trade item upper bound}
$$\opts(L,\cF\big|_L)\leq O\left(\log(|L|)\cdot \sum_{i\in L}\E_{b_i,s_i}\left[(\tilde{\varphi}_i(b_i)-s_i)^+\right]\right).$$
Here $\widetilde{\varphi}_i(b_i)$ is Myerson's ironed virtual value function\footnote{The buyer's unironed virtual value function is $\varphi_i(b_i) = b_i - \frac{1-F_i(b_i)}{f_i(b_i)}$, and then these are averaged to be made monotonic in quantile space to create $\widetilde{\varphi}_i(b_i)$.}  for the buyer's distribution for item $i$, $\DD_i^B$.
\end{lemma}

Section~\ref{sec:UB of unlikely to trade items} is dedicated to its proof. Combining Lemmas~\ref{lem:likely trade and unlikely trade} and~\ref{lem:unlikley trade item upper bound} gives the proof of Lemma~\ref{lem:UB of second best GFT}, stated in Section~\ref{sec:UB of SB}.

}

\notshow{

\subsection{Bounding the Optimal Buyer Utility in the Super Buyer Procurement Auction}\label{sec:bounding super buyer}
In this section, we construct a two-sided market to bound $\optb$ for any constrained additive buyer. 




\begin{lemma}\label{lem:mechanism for OPT-B}
Consider the mechanism $\MM^*=(x,p^B,p^S)$ where for every item $i$, buyer profile $\Bb$, and seller profile $\Bs$, $$x_i(\Bb,\Bs)=\ind[b_i-\tilde{\tau}_i(s_i)\geq 0\wedge i\in \argmax_{S\in \cF}\sum_{i\in S}(b_i-\widetilde{\tau}_i(s_i))^+].$$ 
Here $\widetilde{\tau}_i(s_i)$ is Myerson's ironed virtual value function\footnote{The seller's unironed virtual value function is $\tau_i(s_i) = s_i + \frac{G_i(s_i)}{g_i(s_i)}$.} for seller $i$'s distribution $\DD_i^S$. For every seller $i$, since $\tilde{\tau}_i(s_i)$ is non-decreasing in $s_i$, $x_i(\Bb,\Bs)$ is non-increasing in $s_i$. Define $p^S_i(\Bb,\Bs)$ as the threshold payment for seller $i$, i.e., the largest cost $s_i$ such that $x_i(\Bb,s_i,s_{-i})=1$. Define the buyer's payment $p^B(\Bb,\Bs)=\sum_i x_i(\Bb,\Bs)\cdot \tilde{\tau}_i(s_i)$. $\MM^*$ is DSIC, ex-post IR, ex-ante SBB~\footnote{One can make the mechanism IR and ex-post SBB by defining $p^B(\Bb,\Bs)=\sum_i p_i^S(\Bb,\Bs)$. The mechanism is still DSIC for all sellers. But it's only BIC for the buyer, as the seller's gains equals to the virtual welfare only when taking expectation over sellers' profile.} and  
$$\gft(\MM^*)\geq \optb=\E_{\Bb,\Bs}[\max_{S\in \cF}\sum_{i\in S}(b_i-\widetilde{\tau}_j(s_j))^+].$$ 
\end{lemma}
\begin{proof}

Since the seller's allocation rule is monotone and we use the threshold payment, $\MM^*$ is DSIC and ex-post IR for each seller. 

Note that for any seller profile $\Bs$, when the buyer's has true type $\Bb$, her expected utility by reporting $\Bb'$ is $\sum_i x_i(\Bb',\Bs)\cdot (b_i-\widetilde{\tau}_i(s_i))$.
According to the definition of $x$, the buyer's utility is maximized when $\Bb'=\Bb$. Hence, $\MM$ is DSIC for the buyer. Moreover we have ex-post IR, as the buyer's expected utility when reporting truthfully is  $\max_{S\in \cF}\sum_{i\in S}(b_i-\widetilde{\tau}_i(s_i))^+\geq 0$.

It only remains to prove that the mechanism is ex-ante SBB and to lower bound its GFT. By Myerson's lemma\footnote{This lemma is used several times, and is formally stated as Lemma~\ref{lem:mye} in Appendix~\ref{apx:md}.} (Lemma~\ref{lem:mye}), for every $\Bb$ we have
$$\E_{\Bs}\left[\sum_i p_i^S(\Bb,\Bs)\right]=\E_{\Bs}\left[\sum_i x_i(\Bb,\Bs)\cdot\widetilde{\tau}_i(s_i)\right]=\E_{\Bs}[p^B(\Bb,\Bs)].$$

Thus the mechanism is ex-ante SBB.

Why is $\optb = \E_{\Bb,\Bs}[\max_{S\in \cF}\sum_{i\in S}(b_i-\widetilde{\tau}_j(s_j))^+]$? Notice that only the sellers are strategic in a super buyer procurement auction, and their types are all single-dimensional. One can apply the standard Myersonian analysis to the super buyer procurement auction and show that the optimal buyer utility is exactly $\E_{\Bb,\Bs}[\max_{S\in \cF}\sum_{i\in S}(b_i-\widetilde{\tau}_j(s_j))^+]$.

Note that the buyer's expected utility in $\MM^*$ is exactly $\optb$. As $\MM^*$ is an ex-ante SBB mechanism, the expected GFT of $\MM^*$ is equal to the buyer's expected utility plus the sum of all seller's expected utility, and the latter is non-negative since $\MM^*$ is ex-post IR for every seller.
\end{proof}


}

\notshow{
\subsection{Missing Details on the Seller Adjusted Posted Price Mechanism}\label{apx:SAPP}

In this section, we first argue that any \Msapp~mechanism with bi-monotonic posted prices is DSIC and ex-post IR in Lemma~\ref{sapp:monotone}.  Then, in Lemma~\ref{sapp:main}, we showed how to use a class of allocation rules to set posted prices for the buyer such that the corresponding SAPP mechanism is DSIC, IR, ex-ante WBB.  In Lemma~\ref{lem:lb on sapp gft}, we argue that it obtains good GFT.  Finally, in Lemma~\ref{lem:SAPP bounding unlikely trade items}, we use these lemmas to show that the optimal SAPP mechanism on the ``unlikely to trade'' items obtains good GFT.

\begin{lemma}\label{sapp:monotone}
Let $\MM$ be a \Msapp~mechanism with bi-monotonic posted prices $\{\theta_i(\Bs)\}_{i\in[n]}$. Then the allocation of the mechanism $\hat{x}_i(\Bb,\Bs)$ is non-increasing in $s_i$ for all sellers $i$, and $\MM$ is DSIC and ex-post IR for the buyer and the sellers.
\end{lemma}

\begin{proof}
Notice that for every type $\Bb$, the buyer chooses the item that maximizes $b_i-\theta_i(\Bs)$ (and does not choose any item if she cannot afford any of the items). For every $i$, by bi-monotonicity, when $s_i$ decreases, $b_i-\theta_i(\Bs)$ does not decrease while $b_j-\theta_j(\Bs)$ does not increase for all $j\not=i$. Thus if the buyer chooses item $i$ under the original $s_i$, she must continue to choose item $i$ for smaller reports $s_i'$. Thus $\hat{x}_i(\Bb,\Bs)$ is non-increasing in $s_i$. 
Since every seller receives threshold payment, she is DSIC and ex-post IR. As the buyer simply faces a posted price mechanism, the mechanism is DSIC and ex-post IR for the buyer.
\end{proof}

We proved that the bi-monotonic posted prices induce a monotone allocation rule for every seller. Thus, in the corresponding \Msapp~mechanism, every seller receives \emph{threshold payment}~\cite{Myerson81,MyersonS83}.
Formally, for every seller $i$ let $\hat{x}_i(\Bb,\Bs)$ denote the probability that the buyer trades with seller $i$ under profile $(\Bb,\Bs)$. This is either 0 or 1 since all $\theta_i(\Bs)$s are fixed value when $\Bs$ is fixed. If $\hat{x}_i(\Bb,\Bs)=1$, $p_i^S(\Bb,\Bs)$ is defined as the maximum value $s_i'$ such that $\hat{x}_i(\Bb,s_i',s_{-i})=1$. Otherwise $p_i^S(\Bb,\Bs)=0$. Hence any \Msapp~allocation rule is DSIC and ex-post IR.

Now we have the proof of Lemma~\ref{lem:sapp-main1}, which is used in the proof of Lemma~\ref{sapp:main}.

\begin{proof}[Proof of Lemma~\ref{lem:sapp-main1}]
Note that the buyer will purchase item $i$ if both of the following conditions are satisfied: 
\begin{enumerate}
    \item The buyer can afford item $i$, i.e., $b_i\geq \theta_i(\Bs)$.
    \item The buyer can not afford any other items, i.e., $b_j<\theta_j(\Bs),\forall j\not=i$.
\end{enumerate}

By choice of $\theta_i(\Bs)$, the first event happens with probability $\Pr[b_i\geq \theta_i(\Bs)]=q_i(\Bs)/2$. 

Note that $\sum_{i\in[n]} q_i(\Bs)\leq \E_{\Bb}[\sum_{i\in [n]} x_i(\Bb,\Bs)]\leq 1$. For each $j\neq i$, $\Pr[b_j<\theta_j(\Bs)]=1-\frac{q_j(\Bs)}{2}$. Thus $\sum_{j\neq i} \left(1-\frac{q_j(\Bs)}{2}\right)\geq n-\frac{3}{2}+\frac{q_i(\Bs)}{2}$.
The second event happens with probability $$\prod_{j\neq i} \left(1-\frac{q_j(\Bs)}{2}\right)\geq \frac{1}{2}+\frac{q_i(\Bs)}{2}.$$ The equality holds when one out of the $n-1$ $q_j(\Bs)$'s equals  
$1-q_i(\Bs)$ and the rest all equal to $0$.
Notice that the two events are independent, so we have the upper and lower bound on $\hat{x}_i(\Bs)$.
\end{proof}

\notshow{

We proceed to lower bound the GFT from the mechanism defined in Section~\ref{sec:SAPP}.

\begin{lemma}\label{lem:lb on sapp gft}
Let $\MM$ be the mechanism defined in Lemma~\ref{sapp:main}.  Then
$$\gft(\MM)\geq \frac{1}{4}\E_{\Bb,\Bs}\left[\sum_i (\widetilde{\varphi}_i(b_i)-s_i)\cdot x_i(\Bb,\Bs)\right].$$
\end{lemma}

\begin{proof}
\begin{align*}
    \gft(\MM)=&\E_{\Bb,\Bs}\left[\sum_i(b_i-s_i)\cdot \hat{x}_i(\Bb,\Bs)\right]\\
    \geq&\E_{\Bs}\left[\sum_i(\theta_i(\Bs)-s_i)\cdot \hat{x}_i(\Bs)\right]  & \text{($\hat{x}_i(\Bb,\Bs)=0$ if $b_i<\theta_i(\Bs)$)}\\
    \geq &\frac{1}{2}\E_{\Bs}\left[\sum_i\left(F_i^{-1}\left(1-\frac{q_i(\Bs)}{2}\right)-s_i\right)\cdot \frac{q_i(\Bs)}{2}\right]   & \text{(Definition of  $\theta_i(\Bs)$, $q_i(\Bs)$ and Lemma~\ref{lem:sapp-main1})}\\
    =&\frac{1}{2}\E_{\Bb,\Bs}\left[\sum_i(\widetilde{\varphi}_i(b_i)-s_i)\cdot\ind\left[b_i\geq F_i^{-1}\left(1-\frac{q_i(\Bs)}{2}\right)\right]\right]  & \text{(Myerson's Lemma (\ref{lem:mye}))}\\
    \geq& \frac{1}{4}\E_{\Bb,\Bs}\left[\sum_i(\widetilde{\varphi}_i(b_i)-s_i)\cdot x_i(\Bb,\Bs)\cdot \ind[\tilde{\varphi}_i(b_i)\geq s_i]]]\right]\\
    \geq& \frac{1}{4}\E_{\Bb,\Bs}\left[\sum_i(\widetilde{\varphi}_i(b_i)-s_i)\cdot x_i(\Bb,\Bs)]\right]   
\end{align*}

Here the second-to-last inequality uses the fact that $$\E_{b_i}\left[\widetilde{\varphi}_i(b_i)\cdot\ind[b_i\geq F_i^{-1}\left(1-\frac{q_i(\Bs)}{2}\right)\right]\geq \frac{1}{2}\cdot\E_{\Bb}\left[\widetilde{\varphi}_i(b_i)\cdot x_i(\Bb,\Bs)\cdot \ind[\tilde{\varphi}_i(b_i)\geq s_i]]]\right]$$
holds for every $\Bs$ and $i$. This is because the right hand side $$\frac{1}{2}\cdot\E_{\Bb}[\widetilde{\varphi}_i(b_i)\cdot x_i(\Bb,\Bs) \cdot \ind[\tilde{\varphi}_i(b_i)\geq s_i]]=\E_{b_i}\left[\widetilde{\varphi}_i(b_i)\cdot \frac{1}{2}\E_{b_{-i}}[x_i(\Bb,\Bs) \cdot \ind[\tilde{\varphi}_i(b_i)\geq s_i]]\right]$$ can be viewed as the expectation of $\widetilde{\varphi}_i(b_i)$ on an event of $b_i$ with a total probability mass $$\E_{b_i}\left[\frac{1}{2}\E_{b_{-i}}[x_i(\Bb,\Bs) \cdot \ind\left[\tilde{\varphi}_i(b_i)\geq s_i]\right]\right]=\frac{q_i(\Bs)}{2},$$ while the left hand side is the maximum expectation of $\widetilde{\varphi}_i(b_i)$ on any event of $b_i$ with  total probability mass $\frac{q_i(\Bs)}{2}$, as $\widetilde{\varphi}_i(b_i)$ is non-decreasing on $b_i$. 
\end{proof}

Finally, we prove Lemma~\ref{lem:SAPP bounding unlikely trade items} which shows that there exists a DSIC, ex-post IR, and ex-ante WBB \Msapp~mechanism over items in $L$ whose GFT is $\Theta\left(\sum_{i\in L}\E_{b_i,s_i}[(\widetilde{\varphi}_i(b_i)-s_i)^+]\right)$.  $\E_{b_i,s_i}[(\widetilde{\varphi}_i(b_i)-s_i)^+]$ can be viewed as the optimal profit that seller $i$ can achieve by selling her item to the buyer. In other words, we need to compare to the sum of the optimal profit for selling all items in $L$. However, in a \Msapp~mechanism, the buyer is only allowed to purchase up to one item. Since trade happens less frequently, it is not clear how we can use the GFT of a \Msapp~mechanism to approximate the total profit. Notice, however, we are not dealing with arbitrary distributions. The items in $L$ are unlikely to trade. Indeed, with constant probability, there is at most one item that can be traded. In other words, with constant probability, the restriction that the buyer can purchase no more than one item does not affect the number of trades. 

\begin{lemma}\label{lem:SAPP bounding unlikely trade items}
We let  $\sapp(S)$ denote the optimal GFT attainable by any DSIC, ex-post IR, and ex-ante WBB \Msapp~mechanisms over items in $S$ for any subset $S\subseteq [n]$. $\sapp(L)\geq \frac{1}{4e}\cdot \sum_{i\in L}\E_{b_i,s_i}[(\widetilde{\varphi}_i(b_i)-s_i)^+]$.
\end{lemma}

\begin{proof}
Let $\Bb_L=\{b_i\}_{i\in L}$ and $\Bs_L=\{s_i\}_{i\in L}$. For every $i\in L$, define the event that only $i$ is tradeable: $$A_i=\left\{(\Bb_L,\Bs_L):b_i\geq s_i\wedge b_j<s_j,\forall j\in L\backslash\{i\}\right\}.$$ We consider the following allocation rule:
\begin{align*}
x_i(\Bb_L,\Bs_L)=
\begin{cases}
\ind[\widetilde{\varphi}_i(b_i)\geq s_i] &,~ \text{if }  (\Bb,\Bs)\in A_i\\
0 &,~\text{otherwise}
\end{cases}
\end{align*}

Notice that $(\Bb_L,\Bs_L)\in A_i$ implies that $(\Bb_L,s_i',\Bs_{L\backslash\{i\}})\in A_i$ for any $s_i'\leq s_i$. Thus, $x_i(\Bb_L,\Bs_L)$ is non-increasing in $s_i$. Similarly, it is easy to verify that $x_i(\Bb_L,\Bs_L)$ is non-decreasing in all $s_j$ where $j\in L\backslash\{i\}$. Furthermore, $\sum_{i\in L} x_i(\Bb_L,\Bs_L)\leq 1$ for all $\Bb_L,\Bs_L$. If we choose the posted prices according Lemma~\ref{sapp:main}, then the corresponding mechanism has GFT that is at least $\frac{1}{4}\E_{\Bb,\Bs}\left[\sum_i (\widetilde{\varphi}_i(b_i)-s_i)\cdot x_i(\Bb,\Bs)\right]$.

Moreover, by the definition of $x_i(\Bb,\Bs)$, 
\begin{align*}
\E_{\Bb,\Bs}\left[\sum_{i\in L} (\widetilde{\varphi}_i(b_i)-s_i)\cdot x_i(\Bb,\Bs)\right]=&\sum_{i\in L} \E_{b_i,s_i}[(\widetilde{\varphi}_i(b_i)-s_i)^+)]\cdot \prod_{j\in L\backslash\{i\}}\Pr_{b_j,s_j}[b_j<s_j]\\
\geq&\sum_{i\in L} \E_{b_i,s_i}[(\widetilde{\varphi}_i(b_i)-s_i)^+)]\cdot (1-\frac{1}{n})^{|L|}\\
\geq& \frac{1}{e}\cdot\sum_{i\in L}\E_{b_i,s_i}[(\widetilde{\varphi}_i(b_i)-s_i)^+)]
\end{align*}
The first inequality is because for each item $j\in L$, $\Pr_{b_j,s_j}[b_j<s_j]\geq 1-1/n$. Hence, $$\sapp(L)\geq \frac{1}{4e}\cdot \sum_{i\in L}\E_{b_i,s_i}[(\widetilde{\varphi}_i(b_i)-s_i)^+].$$
\end{proof}

}

\notshow{
}

\subsection{Bounding the Optimal Profit from the Unlikely to Trade Items}\label{sec:UB of unlikely to trade items}

In this section, we present the proof of Lemma~\ref{lem:unlikley trade item upper bound}.  To bound $\opts(L,\cF\big|_L)$, we need the following result from \cite{CaiZ19}. It provides a benchmark of the optimal profit using the Cai-Devanur-Weinberg duality framework~\cite{CaiDW16}: The profit of any BIC, IR mechanism is upper bounded by the buyer's virtual welfare with respect to some virtual value function, minus the sellers' total cost for the same allocation.

A sketch of the framework is as follows: we first formulate the profit maximization problem as an LP. Then we lagrangify the BIC and IR constraints to get a partial Lagrangian dual of the LP. Since the buyer’s payment is unconstrained in the partial Lagrangian, one can argue that the corresponding dual variables must form a flow for the benchmark to be finite. By weak duality, any choice of the dual variables/flow derives a benchmark for the optimal profit. In \cite{CaiZ19}, they also construct a canonical flow and prove that there exists a BIC and IR mechanism whose profit is within a constant factor times the benchmark w.r.t. the flow for any single constrained-additive buyer.     
\begin{lemma}\label{lem:mz19}
\cite{CaiZ19} For any $T\subseteq [n]$ and feasibility constraint $\cJ$ with respect to $T$, consider the super seller auction with item set $T$ and to a $\cJ$-constrained buyer. Any flow $\lambda_T$ induces a finite benchmark for the optimal profit, that is,

$$\opts(T,\cJ)\leq \max_{x\in P_{\cJ}}\E_{\Bb,\Bs}\left[\sum_{i\in T} x_i(\Bb,\Bs)\cdot (\Phi_i^T(\Bb)-s_i)\right]$$

where

$$\Phi_i^{T}(\Bb)=b_i-\frac{1}{f_i(b_i)}\sum_{\Bb'}\lambda_T(\Bb',\Bb)\cdot (b_i'-b_i)$$
can be viewed as buyer $i$'s virtual value function, and $P_{\cJ}$ is the set of all feasible allocation rules. More specifically, $\lambda_T(\Bb',\Bb)$ is the Lagrangian multiplier for the BIC/IR constraint that says when the buyer has true type $\Bb$ she does not want to misreport $\Bb'$. The equality sign is achieved when the optimal dual $\lambda_T^*$ is chosen.


\end{lemma}

\vspace{.2in}

Next, we show that $\opts(L,\cF\big|_L)$ is no more than $\opts(L,\text{ADD})$ using Lemma~\ref{lem:mz19}.

\begin{lemma}\label{lem:compare-to-additive}
$\opts(L,\cF\big|_L)\leq \opts(L,\text{ADD}).$
\end{lemma}
\begin{proof}
Let $\hat{\lambda}_L$ be the optimal dual in Lemma~\ref{lem:mz19} when the buyer is additive without any feasibility constraint, and $\hat{\Phi}_i^L(\cdot)$ be the induced virtual value function. We have that 
\begin{align*}
\opts(L,\cF\big|_L)&\leq \max_{x\in P_{\cF|_L}}\E_{\Bb,\Bs}\left[\sum_{i\in L}x_i(\Bb,\Bs)\cdot (\hat{\Phi}_i^L(\Bb)-s_i)\right]\\
&\leq \E_{\Bb,\Bs}\left[\sum_{i \in L}(\hat{\Phi}_i^L(\Bb)-s_i)^+\right]\\
&=\max_{x_i(\Bb,\Bs)\in [0,1]}\E_{\Bb,\Bs}\left[\sum_ix_i(\Bb,\Bs)\cdot (\hat{\Phi}_i^L(\Bb)-s_i)\right]\\
&=\opts(L,\text{ADD}). 
\end{align*}
\end{proof}




Cai and Zhao~\cite{CaiZ19} also give a logarithmic upper bound of the optimal profit for a single additive buyer, using the sum of optimal profit for each individual item.
\begin{lemma}\label{lem:mz19-additive}
\cite{CaiZ19} $$\opts(L,\text{ADD})\leq\log(|L|)\cdot \sum_{i\in L}\opts(\{i\})= \log(|L|)\cdot \sum_{i\in L}\E_{b_i,s_i}[(\varphi_i(b_i)-s_i)^+].$$
\end{lemma}

Together, Lemmas~\ref{lem:compare-to-additive} and \ref{lem:mz19-additive} conclude the proof of Lemma~\ref{lem:unlikley trade item upper bound}:
$$\opts(L,\cF\big|_L)\leq O\left(\log(|L|)\cdot \sum_{i\in L}\E_{b_i,s_i}\left[(\tilde{\varphi}_i(b_i)-s_i)^+\right]\right).$$


}

\section{Missing Details from Section~\ref{sec:reduction}}\label{appx:reduction}

\notshow{
\begin{lemma}\label{lem:sapp-matroid-rank}
Suppose the buyer's feasibility constraint $\cF$ is a matroid. Then
$$\sapp\geq \E_{\Bb,\Bs}\left[\max_{S\in \cF}\sum_{i\in S}(\tilde{\varphi}_i(b_i)-s_i))\right].$$
\end{lemma}

\begin{proof}
We construct the following allocation rule $x=\{x_i(\Bb,\Bs)\}_{i\in [n]}$. For every $\Bb,\Bs$, let $S^*(\Bb,\Bs)=\argmax_{S\in \cF}\sum_{i\in S}(\tilde{\varphi}_i(b_i)-s_i).$ $S^*(\Bb,\Bs)\in \cF$. For every $i$ and $\Bb,\Bs$, let $x_i(\Bb,\Bs)=\ind[i\in S^*(\Bb,\Bs)]$. Let $q_i(\Bs)=\E_{\Bb}[x_i(\Bb,\Bs)]$. Then for every $\Bs$, the vector $\{q_i(\Bs)\}_{i\in [n]}\in P_{\cF}$. 

Consider the SAPP mechanism with adjusted posted prices $\theta_i(\Bs)=F_i^{-1}(1-\frac{q_i(\Bs)}{3})$ for every $i$, $\Bs$ and sub-matroid $\cF'\subseteq \cF$ suggested by OCRS. 

\end{proof}
}

\begin{prevproof}{Theorem}{thm:reduction}

We construct the following allocation rule $x=\{x_i(\Bb,\Bs)\}_{i\in [n]}$. For every $i$ and $\Bb,\Bs$, let $$x_i(\Bb,\Bs)=\ind\left[i=\argmax_k (\tilde{\varphi}_k(b_k)-s_k)\wedge \tilde{\varphi}_i(b_i)\geq s_i\right].$$

Then $x$ satisfies both properties in the statement of Lemma~\ref{sapp:main}. Thus by Lemma~\ref{sapp:main},
$$\sapp\geq \E_{\Bb,\Bs}\left[\sum_i (\tilde{\varphi}_i(b_i)-s_i)\cdot x_i(\Bb,\Bs)\right]=\E_{\Bb,\Bs}\left[\max_i(\tilde{\varphi}_i(b_i)-s_i)^+\right].$$

Moreover, we use the upper-bound on $\optsd$ given by~Brustle et al.~\cite{BrustleCWZ17},
$$\optsd\leq \E_{\Bb,\Bs}\left[\max_i(\tilde{\varphi}_i(b_i)-s_i)^+\right]+\E_{\Bb,\Bs}\left[\max_i(b_i-\tilde{\tau}_i(s_i))^+\right].$$

Thus by Lemma~\ref{lem:mechanism for OPT-B}, we have
\begin{align*}
\max\{\optb,\sapp\}&\geq \frac{1}{2}\cdot \left(\E_{\Bb,\Bs}\left[\max_i(\tilde{\varphi}_i(b_i)-s_i)^+\right]+\E_{\Bb,\Bs}\left[\max_i(b_i-\tilde{\tau}_i(s_i))^+\right]\right)\\
&\geq \frac{1}{2}\cdot \optsd\geq \frac{1}{2c}\cdot \fbsd.    
\end{align*}
\end{prevproof}

\end{document}